\newcommand{\pder}[2]{\ensuremath{\frac{\partial #1}{\partial #2}}}
\numberwithin{equation}{section}
\long\def \beq#1\eeq {\begin{equation} #1 \end{equation}}
\long\def \beaq#1\eeaq {\begin{equation}\begin{aligned} #1 \end{aligned}\end{equation}}
\long\def \bes#1\ees {\begin{equation}\begin{split} #1 \end{split} \end{equation}}
\long\def \bea#1\eea {\begin{eqnarray} #1 \end{eqnarray}}
\long\def \bse[#1]#2\ese {\begin{subequations}\label{#1}\begin{align} #2 \end{align}\end{subequations}}
\newcommand{\mv}[1]{\langle #1\rangle}
\newcommand{\sign}{\operatorname{sign}}
\long\def\dm[#1]{\!\operatorname{d\mu}\left(#1\right)}
\newcommand{\RS}{{\rm \scriptscriptstyle RS}}
\newcommand{\RSB}{{\rm \scriptscriptstyle 1RSB}}
\newcommand{\inter}{{\rm \scriptscriptstyle inter}}
\newcommand{\intra}{{\rm \scriptscriptstyle intra}}
\theoremstyle{plain}
\newtheorem{Remark}{Remark}
\newtheorem{Theorem}{Theorem}
\newtheorem{Lemma}{Lemma}
\newtheorem{Proposition}{Proposition}
\newtheorem{Corollary}{Corollary}
\newtheorem{Definition}{Definition}
\newcommand{\sums}{\sum_{ \boldsymbol \sigma}}
\newcommand{\sump}{\sum_{p=1}^{K}}
\newcommand{\sumpp}{\sum_{p=1}^{K}}
\newcommand{\sumpnew}{\sum_{p=1}^{K-1}}
\newcommand{\suma}{\sum_{a=1}^2}
\newcommand{\si}{\sigma_i}
\newcommand{\sj}{\sigma_j}
\newcommand{\dt}{\frac{\partial}{\partial t}}
\newcommand{\qb}{\bar{q}}
\newcommand{\mb}{\bar{m}}
\title{Pattern recognition in Deep Boltzmann machines}
\author[1,2]{E. Agliari}
\author[2,3]{L. Albanese}
\author[2,3]{F. Alemanno}
\author[2,3,4]{A. Fachechi}
\affil[1]{Dipartimento di Matematica {\em Guido Castelnuovo}, Sapienza Universit\`a  di Roma, Rome, Italy}
\affil[2]{Istituto Nazionale d'Alta Matematica {\em Francesco Severi},  Rome, Italy}
\affil[3]{Dipartimento di Matematica e Fisica {\em Ennio De Giorgi},  Universit\`a  del Salento,  Lecce, Italy}
\affil[4]{Istituto Nazionale di Fisica Nucleare, Campus Ecotekne, Lecce,Italy}
\begin{document}

\setlength{\columnsep}{20pt}

\addtocontents{toc}{~\hfill\textbf{Page}\par}
{\let\newpage\relax\maketitle}



\begin{abstract}
We consider a multi-layer Sherrington-Kirkpatrick spin-glass as a model for deep restricted Boltzmann machines and we solve for its quenched free energy, in the thermodynamic limit and allowing for a first step of replica symmetry breaking. This result is accomplished rigorously exploiting interpolating techniques and recovering the expression already known for the replica-symmetry case. Further, we drop the restriction constraint by introducing intra-layer connections among spins and we show that the resulting system can be mapped into a modular Hopfield network, which is also addressed rigorously via interpolating techniques up to the first step of replica symmetry breaking.
\end{abstract}

\tableofcontents
\section{Introduction}
Restricted Boltzmann machines (RBMs) constitute a popular model in machine learning (see e.g., \cite{lenkanature,app1,app2}) due to a relatively easy architecture (a RBM consists of one visible layer and one hidden layer with interlayer interactions only) and to its analogies with bipartite spin-glass models which allow for theoretical investigations and a mathematical control \cite{Ackley,BarraEquivalenceRBMeAHN,Barra-RBMsPriors1,Barra-RBMsPriors2,MarulloAgliari}.
A more challenging version of the RBM is given by the Deep Boltzmann machine (DBM), which comprises a set of hidden layers, see Fig.~\ref{fig:schema1}
In the statistical-mechanics of disordered-system jargon, this can be considered as a multi-layer spin-glass where the set of spins is arranged into a geometry made of consecutive layers and only interactions among spins belonging to adjacent layers are allowed. Also motivated by the impressive successes obtained in artificial intelligence via deep learning methods (which, beyond DBMs include a number of other different neural networks), this kind of structures have recently attracted a wide interest (see e.g., \cite{ZiqquratBarra,Bates,alberici1,alberici2,alberici3,Genovese2}).
\newline
Here, we consider a multi-layer spin-glass, specifically, a multi-layer Sherrington-Kirkpatrick (MSK) model and we solve for its free-energy via interpolating techniques: the free energy is expressed in terms of interpolating parameters $(t, \vec{x})$, meant as time and space variables, and it is shown to fulfil a transport-like equations; an explicit expression for the original free energy can then be obtained by solving a partial differential equations and suitably setting the parameters $(t, \vec{x})$.
Exploiting this technique and assuming replica-symmetry (RS), we recover the result previously found in \cite{ZiqquratBarra} for multi-layer spin-glasses and, further, our approach allows us to obtain a refined picture which includes replica-symmetry breaking; only the first step (1RSB) is addressed in details, the generalisation works analogously \cite{ottaviani}.
\newline
Next, we move to a more complex architecture, where
we introduce an additional class of spins, one for each couple of adjacent layers and bridging the related spins, see Fig.~ \ref{fig:schema2} (left panel).
The interest in this kind of structure lays in the fact that, as we prove, it is formally equivalent to a modular Hopfield model, also referred to as deep Hopfield network (DHN) see Fig.~\ref{fig:schema2} (right panel).
Indeed, in the past years, following experimental evidences about the existence of sub-units in brain networks (see e.g., \cite{bullmore, kumar, zhao, moretti}) associative neural networks embedded in modular topologies have attracted much attention \cite{dyson,AMT-JSP2019,tanaka,ozawa,happel}.
Here, exploiting the above-mentioned interpolating techniques, we solve for the free energy of a Hopfield model made of several sub-units, each made of fully-connected neurons. We first address the problem under the replica-symmetry assumption and then we extend the treatment to the first-step of replica symmetry breaking.

\section{The transport equation for the DBM}

In this section we focus on the MSK model as a formal representation of the DBM: first, in sec.~\ref{sec:def_MSK}, we formally introduce the model and the related observables, next, in sec.~\ref{sec:mech_MSK} we prove that its interpolating quenched pressure fulfils a transport-like equation and finally, in secs.~\ref{sec:RS_MSK}-\ref{sec:RSB_MSK} we obtain a solution of such an equation -- under, respectively, the RS and the 1RSB assumption -- and, by suitably setting the interpolating parameters, we get an explicit expression for the MSK quenched pressure in the thermodynamic limit.

\begin{figure}
\centering
\includegraphics[width=0.4\textwidth]{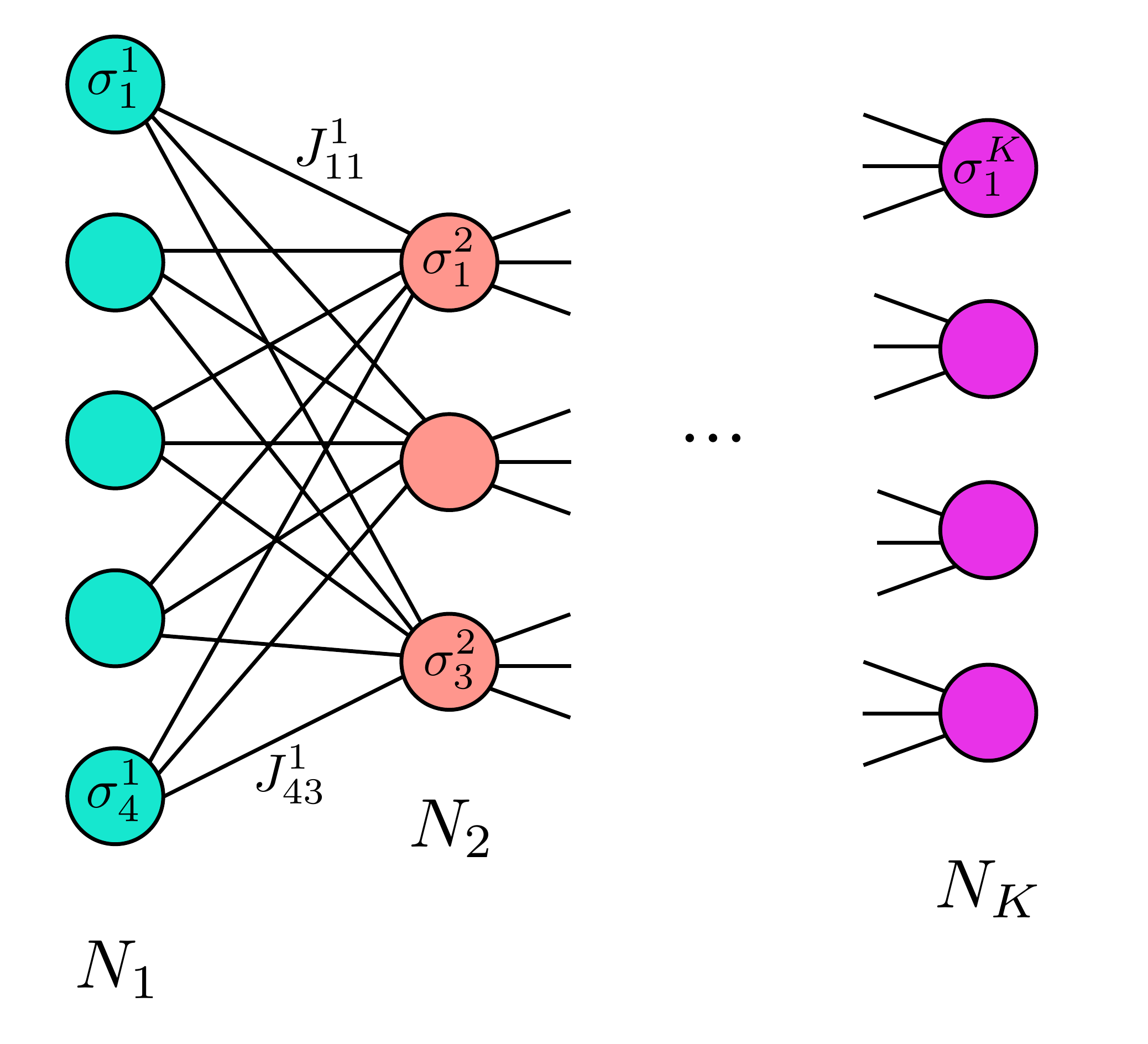}
\caption{Schematic representation of a restricted DBM, made of $K$ layers of size $N_1=5$, $N_2=3$, ..., $N_K=4$. Spins belonging to adjacent layers are completely connected pairwisely. Notation is specified only for a few spins and links in order to ensure readability.} \label{fig:schema1}
\end{figure}

\begin{figure}
\centering
\includegraphics[width=0.45\textwidth]{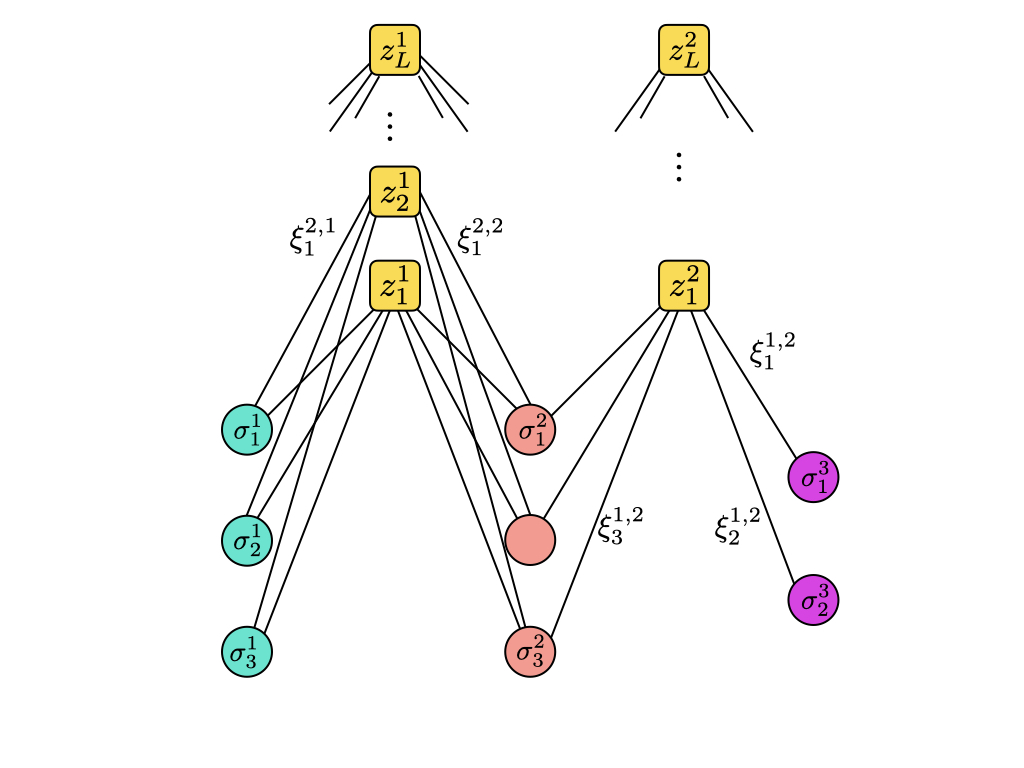}
\includegraphics[width=0.5\textwidth]{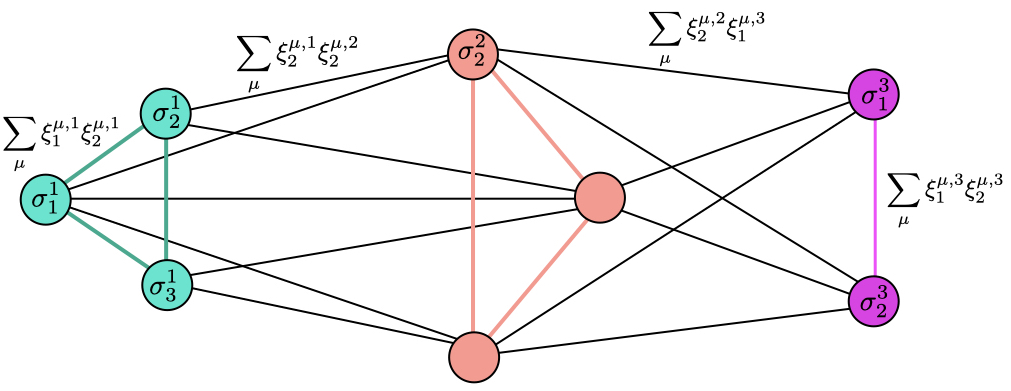}
\caption{Left: Schematic representation of a DBM, made of $K=3$ layers (the depth is limited for readability purposes), made of $N_1 = N_2 =3$ and $N_3=2$ spins; spins belonging to adjacent layers, say layers $p$ and $p+1$ are connected to the hidden neuron $z_p$, thus, the overall number of hidden neurons is $K-1$. Right: Schematic representation of a DHM, corresponding to the DBM on the left; this is made of $K=3$ layers of size $N_1 = N_2 =3$ and $N_3=2$. Each layer includes spins that are fully connected each other, further, spins belonging to adjacent layers are completely connected each other. In both panels notation for spins and weights is reported only in few cases to ensure readability.} \label{fig:schema2}
\end{figure}

\subsection{Definitions} \label{sec:def_MSK}

\begin{Definition} \label{def:H_MSK}
Let us consider a multilayer Sherrington-Kirkpatrick (MSK) model of size $N$ and endowed with $K$ layers, each made of $N_p$ binary spins, with $N_p = \lambda_p N$, for $p=1,...,K$, then $\sum_{p=1}^K \lambda_p = 1$; interactions are pairwise and only involve spins belonging to adjacent layers, in such a way that the Hamiltonian of the model reads as
\beq
H_{N,K}(\sigma ; J) := -\sqrt{\frac{2}{N}}\sum_{p=1}^{K-1}\sum_{i,j=1}^{N_{p},N_{p+1}}J_{ij}^{p}\sigma^p_i\sigma^{p+1}_j,
\eeq
where $\boldsymbol \sigma^{p} = (\sigma_1^p, ..., \sigma_{N_p}^p) \in \{ -1, +1\}^{N_p}$ for $p=1,...,K$ and the matrix $\boldsymbol J^p$ has size $N_p \times N_{p+1}$ with entries that are standard Gaussian, that is $J^p_{ij} \sim \calN(0,1)$ for $i=1,...,N_p$, $j=1,...,N_{p+1}$, $p=1,...,K-1$.
\label{dsg_hbare}
\end{Definition}
\begin{Definition}The partition function of the MSK model defined in (\ref{def:H_MSK}) is given by
\beq \label{eq:Z_MSK}
Z_{N,K}(\beta,J) := \sum_{\boldsymbol \sigma} e^{-\beta H_{N,K}(\sigma ; J)}=\sum_{\boldsymbol \sigma} e^{ \beta\sqrt{\frac{2}{N}}\sum_{p=1}^{K-1}\sum_{i,j=1}^{N_{p},N_{p+1}}J_{ij}^{p}\sigma^p_i\sigma^{p+1}_j},
\eeq
where $\beta$ is the inverse temperature and the sum runs over all spin configurations, that is $\boldsymbol \sigma = ( \boldsymbol \sigma^1, \boldsymbol \sigma^2, ..., \boldsymbol \sigma^K ) \in \{-1, +1\}^N$.
\label{dsg_BareZ}
\end{Definition}
\begin{Definition}The quenched, intensive pressure of the MSK model defined in (\ref{def:H_MSK}) with partition function given by (\ref{eq:Z_MSK}) is defined as
\beq
A_{N,K}(\beta) := \frac{1}{N} \bbE_{J} \log Z_{N,K}(\beta,J),
\label{dsg_BareA}
\eeq
where $\bbE_J$ stands for the quenched averaging operator acting as $\bbE_J f(\mathbf{J}) := \int \prod_{p=1}^{K+1}\left(\frac{dJ^p}{\sqrt{2\pi}}e^{-(J^p)^2/2}\right) ~f(\mathbf{J})$ $\mathbf J=(\boldsymbol J^1,\boldsymbol{J}^{2},\dots , \boldsymbol J^{K-1})$.
We recall that the pressure equals the free energy, a constant $-\beta$ apart.
\end{Definition}
\begin{Definition} For the generic observable $O(\boldsymbol {\sigma})$, the Boltzmann-Gibbs average stemming from (\ref{eq:Z_MSK}) is defined as
\beq
\langle O \rangle: = \frac{1}{Z_{N,K}(\beta,J)} \sum_{\sigma } O(\boldsymbol {\sigma}) e^{-\beta H_{N,K}(\sigma; J)}.
\label{BG_average}
\eeq
In relation to the Boltzmann-Gibbs average we also introduce the $\Delta$ operator for the observable $O(\boldsymbol \sigma )$ as
\beq
\Delta[ O(\boldsymbol {\sigma})]: = O(\boldsymbol {\sigma}) - \lim_{N\to \infty} \bbE_J \mv{O}.
\eeq

\end{Definition}
In the following, we will often move to the thermodynamic limit $N \to \infty$, still retaining $K$ finite. In order to highlight that a quantity is evaluated in the thermodynamic limit, we will drop the dependence on $N$ and, in particular,
\beq
A_{K}(\beta): = \lim_{N \to \infty} A_{N,K}(\beta).
\eeq

\subsection{Mechanical Analogy} \label{sec:mech_MSK}
In this subsection we introduce an interpolating pressure $\mathcal{A}_{N,K}(t,\vec x)$ depending on the interpolating parameters $t \in \mathbb{R}^+$ and $\vec{x} \in \mathbb{R}^K$, which can be interpreted as, respectively, time and space variables, and such that $A_{N,K}(\beta) = \mathcal{A}_{N,K}(t = \beta^2,\vec x =0)$; next, we will show that $\mathcal{A}_{N,K}(t,\vec x)$ fulfills a transport equation, whose solution, evaluated in $t=\beta^2$ and $\vec x=0$, therefore provides the pressure for the original MSK.

\begin{Definition}The interpolating pressure $\mathcal{A}_{N,K}(t,\vec x)$ for the MSK model (\ref{def:H_MSK}), also referred to as Guerra Action, is defined as
\bes
\mathcal{A}_{N,K}(t,\vec x) :=\frac{1}{N} \bbE_{J} \log \sum_{\bm\sigma }\exp \Big( \sqrt{t}\sqrt{\frac{2}{N}}\sum_{p=1}^{K-1}\sum_{i,j=1}^{N_{p},N_{p+1}}J_{ij}^{p}\sigma^p_i\sigma^{p+1}_j+\sum_{p=1}^K\sqrt{x_p}\sum_{i=1}^{N_p} J^p_i \sigma^p_i\Big),
\label{dsg_GuerraAction}
\ees
where $t \in \mathbb{R}^+$, $\vec{x} = (x_1, ..., x_K) \in \mathbb{R}^K$, and $J^p = (J_1^p, ..., J_{N_p}^p)$, with $J_i^p \sim \mathcal{N}(0,1)$ for  $i=1,...,N_p$ and $p=1,...,K$. Here, $\bbE_J$ stands for the averaging operator acting on both the original $J_{ij}^{p}$ couplings and the $J_i^p$ auxiliary random variables.
\end{Definition}

\begin{Remark}
	Notice that, for $t=\beta^2$ and $\vec x=0$, we exactly recover the pressure \eqref{dsg_BareA} for the original model.
\end{Remark}
\begin{Definition} Given two replicas of the system, labelled as $a$ and $b$, and characterized by the same realization of quenched disorder $\{\boldsymbol J^p\}_{p=1,...,K-1}$, we define the two-replica overlap related to the $p$-th layer as
\beq
q^p_{ab}(\boldsymbol{\sigma}^{p,(a)},\boldsymbol{\sigma}^{p,(b)}) = \frac{1}{N_p}\sum_{i=1}^{N_p} \sigma_i^{p,(a)}\sigma_i^{p,(b)},\quad \quad p=1, ..., K,
\label{dsg_orderparameters}
\eeq
and these quantities play as order parameters for the MSK model.
\end{Definition}
\begin{Lemma} The partial derivatives of the Guerra Action w.r.t. $t$ and each $x_p$ yield to the following expectation values:
\bea
\label{dsg_expvalsa}
\frac{\partial \mathcal A_{N,K}}{\partial t} &=&\sum_{p=1}^{K-1} \lambda_p\lambda_{p+1} (1- \bbE_J\mv{q^p_{12}q^{p+1}_{12}}),\\
\label{dsg_expvalsb}
\frac{\partial \mathcal A_{N,K}}{\partial x_p}  &=& \frac{\lambda_p}{2}  \big[1- \bbE_J\mv{q^p_{12}}\big],
\eea
where now, with a little abuse of notation, the thermodynamic average $\langle \cdot \rangle$ refers to the interpolating system.
\begin{proof}
The explicit computation of the derivative $\partial_t A_{N,K}$ leads to
\begin{equation}
  \label{dsg_pd_proof1}
\partial_t \mathcal{A}_{N,K} =\frac{1}{N}   \frac{1}{2\sqrt{t}}\sqrt{\frac{2}{N}}\sum_{p=1}^{K-1}\sum_{i,j=1}^{N_{p},N_{p+1}} \bbE_{J}  J_{ij}^{p} \langle \sigma^p_i\sigma^{p+1}_j \rangle,
\end{equation}
thus, applying Wick's Theorem for Gaussian averages, we obtain
\bes
  \label{dsg_pd_proof2}
  \partial_t \mathcal{A}_{N,K} &=\frac{1}{N}   \frac{1}{2\sqrt{t}}\sqrt{\frac{2}{N}}\sum_{p=1}^{K-1}\sum_{i,j=1}^{N_{p},N_{p+1}} \bbE_{J} \partial_{ J_{ij}^{p}} \langle \sigma^p_i\sigma^{p+1}_j \rangle
  = \frac{1}{N^2}\sum_{p=1}^{K-1}\sum_{i,j=1}^{N_{p},N_{p+1}} \bbE_{J} \big(1 - \langle \sigma^p_i\sigma^{p+1}_j \rangle^2 \big).
\ees
Upon using \eqref{dsg_orderparameters}, we directly obtain \eqref{dsg_expvalsa}. The computation of the spatial derivative \eqref{dsg_expvalsb} follows the same lines, thus we omit the proof.
\end{proof}
\end{Lemma}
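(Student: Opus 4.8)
The plan is to differentiate the Guerra Action \eqref{dsg_GuerraAction} directly, exchanging the derivative with the quenched expectation $\bbE_J$ and with the (finite) spin sum, then to eliminate the explicit coupling variables by Gaussian integration by parts (Wick's theorem), and finally to recognise the resulting two-point functions as replica overlaps via \eqref{dsg_orderparameters}.

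For $\partial_t\mathcal A_{N,K}$, only the interaction term in the exponent of \eqref{dsg_GuerraAction} carries $t$-dependence, so bringing the derivative inside the logarithm produces $\partial_t\mathcal A_{N,K} = \frac{1}{N}\,\frac{1}{2\sqrt t}\,\sqrt{\tfrac{2}{N}}\sum_{p=1}^{K-1}\sum_{i,j}\bbE_J\big[J_{ij}^p\,\langle\sigma^p_i\sigma^{p+1}_j\rangle\big]$. The key step is then Stein's lemma: for a standard Gaussian $J_{ij}^p$ one has $\bbE_J[J_{ij}^p F]=\bbE_J[\partial_{J_{ij}^p}F]$, applied to $F=\langle\sigma^p_i\sigma^{p+1}_j\rangle$. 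Since the Boltzmann weight depends on $J_{ij}^p$ only through $\sqrt t\,\sqrt{2/N}\,J_{ij}^p\sigma^p_i\sigma^{p+1}_j$ and $(\sigma^p_i)^2=(\sigma^{p+1}_j)^2=1$, a short computation gives $\partial_{J_{ij}^p}\langle\sigma^p_i\sigma^{p+1}_j\rangle=\sqrt t\,\sqrt{2/N}\,(1-\langle\sigma^p_i\sigma^{p+1}_j\rangle^2)$; this cancels the $1/\sqrt t$ and one factor $\sqrt{2/N}$, leaving $\partial_t\mathcal A_{N,K}=\frac{1}{N^2}\sum_{p=1}^{K-1}\sum_{i,j}\bbE_J(1-\langle\sigma^p_i\sigma^{p+1}_j\rangle^2)$.

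The last move is the replica rewriting. Introducing two independent copies of the system, $\langle\sigma^p_i\sigma^{p+1}_j\rangle^2=\langle\sigma^{p,(1)}_i\sigma^{p,(2)}_i\,\sigma^{p+1,(1)}_j\sigma^{p+1,(2)}_j\rangle$, the double sum factorises into $N_pN_{p+1}\,\langle q^p_{12}q^{p+1}_{12}\rangle$ by \eqref{dsg_orderparameters}; since $N_pN_{p+1}/N^2=\lambda_p\lambda_{p+1}$ and the constant term contributes the same factor, \eqref{dsg_expvalsa} follows. The spatial derivative \eqref{dsg_expvalsb} is entirely analogous, now with the single-site term $\sqrt{x_p}\sum_i J^p_i\sigma^p_i$: differentiating yields $\frac{1}{2N\sqrt{x_p}}\sum_i\bbE_J[J^p_i\langle\sigma^p_i\rangle]$, Stein's lemma turns this into $\frac{1}{2N}\sum_i\bbE_J(1-\langle\sigma^p_i\rangle^2)$, and $\sum_i\langle\sigma^p_i\rangle^2=N_p\langle q^p_{12}\rangle$ together with $N_p/N=\lambda_p$ gives the claim.

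There is no substantial obstacle here; the argument is essentially bookkeeping. The points that need a little care are: (i) the interchange of $\partial_t$ (resp. $\partial_{x_p}$) with $\bbE_J$ and with the spin sum, which is immediate at fixed finite $N$ since the integrand is smooth in the parameters and the sum is finite; (ii) the apparent $1/\sqrt t$ and $1/\sqrt{x_p}$ singularities, which are spurious because they are exactly cancelled by the chain-rule factor produced by Stein's lemma, so the identities extend continuously down to $t=0$ and $x_p=0$; and (iii) keeping track of the normalisations $\sqrt{2/N}$, $1/N$, $1/N^2$ and the ratios $\lambda_p$, $\lambda_p\lambda_{p+1}$. Since \eqref{dsg_expvalsb} follows the same pattern as \eqref{dsg_expvalsa}, its detailed derivation can be safely omitted.
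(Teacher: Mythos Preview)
Your proposal is correct and follows essentially the same route as the paper: direct differentiation, Gaussian integration by parts (what the paper calls Wick's theorem, you call Stein's lemma), then recognition of the resulting two-point functions as replica overlaps via \eqref{dsg_orderparameters}. Your write-up is in fact slightly more explicit than the paper's, spelling out the replica rewriting $\langle\sigma^p_i\sigma^{p+1}_j\rangle^2=\langle q^p_{12}q^{p+1}_{12}\rangle$ step and noting that the $1/\sqrt t$, $1/\sqrt{x_p}$ factors cancel, but there is no substantive difference in method.
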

\begin{Remark}We introduce the variables $\lambda_0=\lambda_{K+1} = 0$ in order to present the following equations in a more symmetrical fashion.
\end{Remark}
\begin{Proposition} At finite $N$, the Guerra Action \eqref{dsg_GuerraAction} obeys to the following PDE:
\beq
\pder{\mathcal A_{N,K}}{t}-2\sum_{p=1}^K (\lambda_{p-1}q^{p-1}+\lambda_{p+1}q^{p+1})\pder{\mathcal A_{N,K}}{x_p} = S_N(t,\vec x) + V_N(t, \vec x),
\label{dsg_PDE}
\eeq
where we defined
\beq
q^p :=\lim_{N\to \infty} \bbE_J \mv{q^p_{12}}, \quad \quad p=1,...,K,
\eeq
and
\begin{eqnarray}
 \label{eq:S}
 S_N(t,\vec x)&: = &\sum_{p=1}^K \lambda_{p}\lambda_{p+1}(1-q^p-q^{p+1}+q^pq^{p+1}),\\
 \label{eq:potential1}
 V_N(t,\vec x) &:= &-\sum_{p=1}^K {\lambda_p\lambda_{p+1}}{}\bbE_J \mv{\Delta [q^p_{12}]\, \Delta [q^{p+1}_{12}]},
\end{eqnarray}
are respectively the ``source'' and ``potential'' terms.

\begin{proof}
The proof works by direct use of equations (\ref{dsg_expvalsa}) and (\ref{dsg_expvalsb}). First of all, we express the correlation function of the two overlaps in terms of the averages of the fluctuations, {\it i.e.} adopting the decomposition
\begin{equation}
\mv{q^p_{12}q^{p+1}_{12}} = -q^p q^{p+1} + q^p\mv{q^{p+1}_{12}} + q^{p+1}\mv{q^p_{12}} + \mv{\Delta [q^p_{12} ]\Delta [q^{p+1}_{12}]}.
\end{equation}
Inserting this relation in \eqref{dsg_expvalsa}, we reach
\begin{equation}
\notag
\frac{\partial \mathcal A_{N,K}}{\partial t} =\sum_{p=1}^{K-1} \lambda_p\lambda_{p+1} -  \bbE_J \sum_{p=1}^{K-1} \lambda_p\lambda_{p+1} \big(-q^p q^{p+1} + q^p\mv{q^{p+1}_{12}} + q^{p+1}\mv{q^p_{12}} + \mv{\Delta [q^p_{12}] \Delta [q^{p+1}_{12}]}\big).
\end{equation}
The single-overlap expectation values ({\it i.e.} $\langle q_{12}^p\rangle$ and $\langle q_{12}^{p+1}\rangle$) can be expressed in terms of the spatial derivatives of the Guerra Action by reverting \eqref{dsg_expvalsb}. The remaining contributions, involving expectation values of the fluctuations and a polynomial function of the parameters $q^p$, are respectively condensed in source and the potential contribution. Thus, with simple algebra we reach the thesis \eqref{dsg_PDE}.
\end{proof}
\end{Proposition}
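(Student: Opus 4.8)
The plan is to use the two identities for the derivatives of the Guerra Action supplied by the preceding Lemma, \eqref{dsg_expvalsa} and \eqref{dsg_expvalsb}, as the sole analytic input, and to reshape the right-hand side of \eqref{dsg_expvalsa} into the transport form \eqref{dsg_PDE} purely by algebra: a fluctuation splitting of the two-overlap correlator, followed by the replacement of the surviving single-overlap averages with the spatial derivatives $\partial_{x_p}\mathcal{A}_{N,K}$, followed by a reindexing that collects the resulting advection terms.

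First I would centre each overlap around its limiting value by writing $q^p_{12} = q^p + \Delta[q^p_{12}]$. Expanding $q^p_{12}q^{p+1}_{12}$ and taking $\bbE_J\mv{\cdot}$, using that $q^p$ is a deterministic constant so that $\bbE_J$ passes through it and that $\bbE_J\mv{\Delta[q^r_{12}]} = \bbE_J\mv{q^r_{12}} - q^r$, gives $\bbE_J\mv{q^p_{12}q^{p+1}_{12}} = -q^pq^{p+1} + q^p\,\bbE_J\mv{q^{p+1}_{12}} + q^{p+1}\,\bbE_J\mv{q^p_{12}} + \bbE_J\mv{\Delta[q^p_{12}]\Delta[q^{p+1}_{12}]}$. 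Substituting this into \eqref{dsg_expvalsa}, the fluctuation term $-\sum_{p=1}^{K-1}\lambda_p\lambda_{p+1}\,\bbE_J\mv{\Delta[q^p_{12}]\Delta[q^{p+1}_{12}]}$ is already $V_N$. For the remaining single-overlap averages I would invert \eqref{dsg_expvalsb} to $\bbE_J\mv{q^r_{12}} = 1 - \tfrac{2}{\lambda_r}\partial_{x_r}\mathcal{A}_{N,K}$ and substitute; this splits each cross term into a deterministic part and a derivative part. Collecting the deterministic pieces yields $\sum_{p=1}^{K-1}\lambda_p\lambda_{p+1}(1 - q^p - q^{p+1} + q^pq^{p+1})$, which is $S_N$, while the derivative pieces amount to $2\sum_{p=1}^{K-1}\lambda_p q^p\,\partial_{x_{p+1}}\mathcal{A}_{N,K} + 2\sum_{p=1}^{K-1}\lambda_{p+1}q^{p+1}\,\partial_{x_p}\mathcal{A}_{N,K}$.

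The last task is to identify this pair of sums with the single symmetric drift $2\sum_{p=1}^{K}(\lambda_{p-1}q^{p-1} + \lambda_{p+1}q^{p+1})\,\partial_{x_p}\mathcal{A}_{N,K}$ of \eqref{dsg_PDE}, and this index bookkeeping is the one place where any care is needed; it is exactly what the convention $\lambda_0 = \lambda_{K+1} = 0$ of the Remark is for. Shifting $p\mapsto p-1$ in the first sum turns it into $2\sum_{p=2}^{K}\lambda_{p-1}q^{p-1}\,\partial_{x_p}\mathcal{A}_{N,K}$, and the extra $p=1$ term may be appended since it carries the vanishing factor $\lambda_0$; symmetrically, the second sum may be extended to $p=K$ since the extra term carries $\lambda_{K+1}=0$. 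The same convention promotes the ranges of $S_N$ and $V_N$ from $\{1,\dots,K-1\}$ to $\{1,\dots,K\}$ without changing their values. Bringing the drift to the left-hand side then reproduces \eqref{dsg_PDE} verbatim. I expect no genuine obstacle beyond keeping the shifted indices and boundary terms straight; the single conceptual point worth flagging is that, although the statement is ``at finite $N$'', the coefficients $q^p$ are the thermodynamic limits $\lim_{N}\bbE_J\mv{q^p_{12}}$, so $\Delta[\cdot]$ is measured against those limits and \eqref{dsg_PDE} is an exact finite-$N$ identity only in that sense — which is precisely what will later make $V_N$ the error term one has to control.
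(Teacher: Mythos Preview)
Your proposal is correct and follows essentially the same approach as the paper's proof: both use the fluctuation decomposition of $\bbE_J\mv{q^p_{12}q^{p+1}_{12}}$, substitute into \eqref{dsg_expvalsa}, and then invert \eqref{dsg_expvalsb} to trade single-overlap averages for $\partial_{x_p}\mathcal A_{N,K}$. Your treatment of the index reshuffling via $\lambda_0=\lambda_{K+1}=0$ is in fact more explicit than the paper's, which simply invokes ``simple algebra'' at that step.
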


\begin{Remark}
The solution of the PDE \eqref{dsg_PDE} can be addressed by means of the method of characteristics. This suggests to consider the spatial coordinated $\vec x$ as a function of $t$ such that $\dot{x}_p:= d x_p/dt = - 2(\lambda_{p-1}q^{p-1} + \lambda_{p+1}q^{p+1})$, for $p=1,...,K$. With this choice, the PDE in Eq.~\eqref{dsg_PDE} can be recast into an evolutive ODE as
\begin{equation}
\frac{d \mathcal{A}_{N,K}}{dt} = S_N(t,\vec x) + V_N(t, \vec x).
\end{equation}
\end{Remark}

\subsection{Replica Symmetric Solution} \label{sec:RS_MSK}
In this section, we find an explicit solution for $A_K(\beta)$ under the replica symmetry (RS) assumption.
\begin{Definition} \label{def:HM_RS}
Under the replica-symmetry assumption, the distribution of the two-replica overlap $q^p_{12}$ converges in the thermodynamic limit to a Dirac delta centered at the equilibrium value $\bar{q}^p$, that is
\begin{equation}
\lim_{N \rightarrow + \infty} P_N(q_{12}^p) =  \delta (q_{12}^p - \bar{q}^p), ~~ \textrm{for}~ p=1,...,K. \label{RSdef}
\end{equation}
\end{Definition}
\begin{Proposition} In the thermodynamic limit and under the RS assumption, the Guerra Action is given by
\bes
\mathcal A^{\RS}_K (t, \vec x)  =\log 2 &+\sum_{p=1}^K \lambda_p \bbE_J \log \cosh\left[J \sqrt{x_p +2t(\lambda_{p-1}\bar q^{p-1}+\lambda_{p+1} \bar q^{p+1})}\right]   \\
&+t\sum_{p=1}^K \lambda_{p}\lambda_{p+1}(1- \bar q^p-\bar q^{p+1}+\bar q^p \bar q^{p+1}).
\label{dsg_GArs}
\ees
\begin{proof}
Under the RS assumption, the fluctuations of the order parameters w.r.t. to their equilibrium value vanish in the thermodynamic limit, meaning that ${\Delta [q_{12}^p]}\overset{N\to\infty}{\longrightarrow}0$, for any $p$. Thus, recalling the definition (\ref{eq:potential1}), we have $V_N \overset{N\to\infty}{\longrightarrow} 0$. The PDE \eqref{dsg_PDE} therefore becomes the transport equation
\beq
\pder{\mathcal A^{\RS}_K }{t}-2\sum_{p=1}^K (\lambda_{p-1}\bar q^{p-1}+\lambda_{p+1} \bar q^{p+1})\pder{\mathcal A^{\RS}}{x_p} = S^{\RS}(t,\vec x).
\label{dsg_PDEtl}
\eeq
where now $S^{\RS}(t,\vec x)=\sum_{p=1}^K \lambda_{p}\lambda_{p+1}(1-\bar q^p-\bar q^{p+1}+ \bar q^p \bar q^{p+1})$ because of the RS assumption \eqref{RSdef}.
%
This equation can be readily solved by means of the method of characteristics, from which we find that
\beq
\mathcal A_K^{\RS} \big(t,\vec x\big) = \mathcal A_K^{\RS} \big(0,\vec x_0\big)  +t S^{\RS}(t,\vec x),
\eeq
where $\vec{x}_0:= \vec{x}(t=0)$ with generic $p$-th component
\bes
\label{eq:x0}
x_{0,p}=x_p +2t(\lambda_{p-1} \bar q^{p-1}+\lambda_{p+1} \bar q^{p+1}).
\ees
The initial condition $\mathcal A_K^\RS \big(0,\vec x_0\big)$ can be explicitly obtained by using the definition \eqref{dsg_GuerraAction}, since at $t=0$ the two-body contribution disappears. In this case, the computation of the Guerra Action is straightforward (since it reduces to a 1-body model), and leads to
\bes
\mathcal A_{N,K}(0,\vec x_0) =\frac{1}{N} \bbE_{J} \log \sum_{\bm\sigma}\exp \Big ( \sum_{p=1}^K\sqrt{x_{0,p}}\sum_{i=1}^{N_p} J^p_i \sigma^p_i\Big)
=\log 2 +\sum_{p=1}^K \lambda_p \bbE_J \log \cosh(J \sqrt{x_{0,p}}).
\label{dsg_GAinit}
\ees
We stress that, once the sum over all possible configurations is taken, we end up with the product of averages of identically distributed functions of the auxiliary variables $J_i^p$, so we can drop both the spin and layer indices and use a single random variable $J$ on which the average $\bbE_{J}$ is performed. Putting the two pieces together and using \eqref{eq:x0}, we obtain the thesis \eqref{dsg_GArs}.
\end{proof}
\end{Proposition}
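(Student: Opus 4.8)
The plan is to combine the transport structure already derived for the Guerra Action -- the PDE \eqref{dsg_PDE} -- with the RS ansatz of Definition \ref{def:HM_RS}. First I would use \eqref{RSdef} to control the two terms on the right-hand side of \eqref{dsg_PDE} in the thermodynamic limit: under RS the centered overlaps satisfy $\Delta[q_{12}^p]\to 0$ (in $L^1$ with respect to the quenched and Boltzmann--Gibbs measures), so the potential $V_N(t,\vec x)$ of \eqref{eq:potential1}, being a sum of correlators of two such fluctuations, vanishes as $N\to\infty$, while each $\bbE_J\mv{q_{12}^p}$ converges to the equilibrium value $\bar q^p$, so that the source \eqref{eq:S} collapses to the constant $S^{\RS}=\sum_{p=1}^K\lambda_p\lambda_{p+1}(1-\bar q^p-\bar q^{p+1}+\bar q^p\bar q^{p+1})$, independent of both $t$ and $\vec x$. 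Hence in the limit $\mathcal A_K^{\RS}$ solves the pure transport equation \eqref{dsg_PDEtl}.

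Next I would integrate \eqref{dsg_PDEtl} by the method of characteristics. Because the drift coefficients $-2(\lambda_{p-1}\bar q^{p-1}+\lambda_{p+1}\bar q^{p+1})$ are fixed numbers, the characteristics $\dot x_p=-2(\lambda_{p-1}\bar q^{p-1}+\lambda_{p+1}\bar q^{p+1})$ are straight lines, and the one through $(t,\vec x)$ emanates from the point $\vec x_0$ with components $x_{0,p}=x_p+2t(\lambda_{p-1}\bar q^{p-1}+\lambda_{p+1}\bar q^{p+1})$, i.e.\ \eqref{eq:x0}. Along this curve the PDE reduces to $\frac{d}{dt}\mathcal A_K^{\RS}=S^{\RS}$, and since $S^{\RS}$ is constant, integrating from $0$ to $t$ gives $\mathcal A_K^{\RS}(t,\vec x)=\mathcal A_K^{\RS}(0,\vec x_0)+t\,S^{\RS}$.

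It then remains to evaluate the Cauchy datum $\mathcal A_K^{\RS}(0,\vec x_0)$. At $t=0$ the two-body term in \eqref{dsg_GuerraAction} is absent and the model factorizes into independent one-body spins immersed in i.i.d.\ Gaussian fields; performing the sum over each $\sigma_i^p=\pm1$ produces a factor $2\cosh(\sqrt{x_{0,p}}\,J_i^p)$, whence $\mathcal A_{N,K}(0,\vec x_0)=\log 2+\frac1N\sum_{p=1}^K\sum_{i=1}^{N_p}\bbE_J\log\cosh(\sqrt{x_{0,p}}\,J_i^p)$. Using $N_p/N=\lambda_p$ and the fact that the $J_i^p$ are identically distributed -- so all of them may be replaced by a single $J\sim\mathcal N(0,1)$ -- this equals $\log 2+\sum_{p=1}^K\lambda_p\bbE_J\log\cosh(J\sqrt{x_{0,p}})$, an expression already independent of $N$ and therefore equal to its own limit. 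Substituting \eqref{eq:x0} for $x_{0,p}$ and the explicit $S^{\RS}$ into $\mathcal A_K^{\RS}(t,\vec x)=\mathcal A_K^{\RS}(0,\vec x_0)+t\,S^{\RS}$ yields \eqref{dsg_GArs}.

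The step I expect to be the genuine obstacle is the first one: making rigorous the claim that the RS hypothesis \eqref{RSdef} forces $V_N\to 0$ and, more importantly, that the limiting object $\mathcal A_K^{\RS}=\lim_N\mathcal A_{N,K}$ actually satisfies \eqref{dsg_PDEtl} -- that is, that the thermodynamic limit commutes with the characteristic integration along the \emph{whole} trajectory rather than merely at its endpoints. Concentration of the layer overlaps is the physical content of the RS ansatz, but upgrading ``$P_N(q_{12}^p)\to\delta(\cdot-\bar q^p)$'' to the vanishing of the fluctuation correlators $\bbE_J\mv{\Delta[q_{12}^p]\,\Delta[q_{12}^{p+1}]}$ uniformly in $(t,\vec x)$ over a compact set is the delicate point; by contrast, the characteristics being straight, the source being constant, and the one-body computation of the initial datum are all routine.
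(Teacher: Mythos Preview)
Your proposal is correct and follows essentially the same route as the paper: kill the potential $V_N$ via the RS concentration hypothesis, solve the resulting constant-coefficient transport equation by straight-line characteristics, and compute the $t=0$ Cauchy datum as a factorized one-body problem. Your closing paragraph on the rigor of commuting the $N\to\infty$ limit with the PDE structure is a legitimate concern, but the paper does not address it either and treats the vanishing of $V_N$ under \eqref{RSdef} as immediate.
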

\begin{Theorem} The replica-symmetric intensive quenched pressure in the thermodynamic limit for the MSK model is recovered \cite{ZiqquratBarra}:
\bes
A_K^{\RS}\big(\beta)=\log 2 &+\sum_{p=1}^K \lambda_p \bbE_J \log \cosh(J \beta \sqrt{2}\sqrt{\lambda_{p-1} \bar q^{p-1}+\lambda_{p+1} \bar q^{p+1}})  + 
\beta^2\sum_{p=1}^K \lambda_{p}\lambda_{p+1}(1- \bar q^p)(1- \bar q^{p+1}).
\label{dsg_IPrs}
\ees
\begin{proof}
The result follows by simply setting $t=\beta^2$ and $\vec x=0$ in \eqref{dsg_GArs}.
\end{proof}
\end{Theorem}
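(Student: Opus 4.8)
The plan is to read the claim off directly from the explicit formula \eqref{dsg_GArs} for the RS Guerra Action, by specialising it to the point $t=\beta^2$, $\vec x=\vec 0$. Comparing \eqref{dsg_GuerraAction} evaluated at $t=\beta^2$, $\vec x=\vec 0$ with \eqref{eq:Z_MSK}, the auxiliary one-body term vanishes (since every $\sqrt{x_p}=0$) and the two-body term becomes exactly $-\beta H_{N,K}(\sigma;J)$, so that $\mathcal A_{N,K}(\beta^2,\vec 0)=A_{N,K}(\beta)$ at finite $N$; this is precisely the content of the Remark following \eqref{dsg_GuerraAction}. Passing to the thermodynamic limit and invoking the RS assumption \eqref{RSdef}, the left-hand side tends to $A_K^{\RS}(\beta)$ while the right-hand side tends to $\mathcal A_K^{\RS}(\beta^2,\vec 0)$ as given by \eqref{dsg_GArs}.

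It then remains to simplify \eqref{dsg_GArs} at $t=\beta^2$, $\vec x=\vec 0$, which is purely algebraic. Setting each $x_p=0$, the argument of the hyperbolic cosine becomes $J\sqrt{2\beta^2(\lambda_{p-1}\bar q^{p-1}+\lambda_{p+1}\bar q^{p+1})}=J\beta\sqrt2\,\sqrt{\lambda_{p-1}\bar q^{p-1}+\lambda_{p+1}\bar q^{p+1}}$, which is exactly the second summand of \eqref{dsg_IPrs}. Substituting $t=\beta^2$ into the last line of \eqref{dsg_GArs} and factoring the quadratic polynomial in the overlaps via $1-\bar q^p-\bar q^{p+1}+\bar q^p\bar q^{p+1}=(1-\bar q^p)(1-\bar q^{p+1})$ produces the third summand, while the constant $\log 2$ is left untouched. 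Collecting the three pieces yields \eqref{dsg_IPrs}.

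I do not anticipate a genuine obstacle: all the analytical work — establishing the PDE \eqref{dsg_PDE}, reducing it to a transport equation under RS, solving it by the method of characteristics, and identifying the initial datum with a solvable one-body model — has already been carried out in the preceding Proposition, so the present statement is only the final evaluation. The one point worth a comment is the legitimacy of evaluating the limiting expression at the specific parameters $(\beta^2,\vec 0)$, but this is immediate since \eqref{dsg_GArs} is asserted as an identity for all admissible $(t,\vec x)$; hence the substitution and the elementary factorisation are all that is required.
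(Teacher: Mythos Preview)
Your proposal is correct and follows essentially the same approach as the paper, which simply states that the result follows by setting $t=\beta^2$ and $\vec x=0$ in \eqref{dsg_GArs}. You have merely spelled out the algebraic simplifications (the reduction of the $\log\cosh$ argument and the factorisation $1-\bar q^p-\bar q^{p+1}+\bar q^p\bar q^{p+1}=(1-\bar q^p)(1-\bar q^{p+1})$) that the paper leaves implicit.
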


\begin{Corollary} The replica-symmetric expectation of the order parameters $\bar q_{12}^p$ obeys the following self-consistent equation:
\beq
\label{eq:dsg_sc}
\bar q^p=\bbE_J \tanh^2 \big[ \beta J \sqrt{2(\lambda_{p-1} \bar q^{p-1}+\lambda_{p+1} \bar q^{p+1})} \big],\quad \quad p=1,..., K.
\eeq
\begin{proof}
To prove this result, as usual in disorder statistical mechanics, we extremize $A_K^{\RS}(\beta)$ w.r.t all order parameters: $\partial A_K ^{\RS}(\beta )/\partial \bar q^p=0$ for all $p=1,\dots,K$.
With straightforward computations, we find that extremality conditions implies
\bes
&\lambda_p \lambda_{p+1} (\bar q^{p+1}-\bbE_{J} \tanh^2[\beta J\sqrt{2(\lambda_p q^p +\lambda_{p+2}q^ {p+2})}])\\+
&\lambda_p \lambda_{p-1} (\bar q^{p-1}-\bbE_{J} \tanh^2[\beta J\sqrt{2(\lambda_p q^p +\lambda_{p-2}q^ {p-2})}])=0.
\ees
Since these constraints hold for general $\lambda_p$, we must set to zero the quantities in round brackets, which directly leads to the thesis \eqref{eq:dsg_sc}.
\end{proof}
\end{Corollary}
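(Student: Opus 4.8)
The plan is to carry out the variational computation directly: impose $\partial A_K^{\RS}(\beta)/\partial \bar q^p = 0$ for each $p=1,\dots,K$ on the explicit expression \eqref{dsg_IPrs} and read off \eqref{eq:dsg_sc}. First I would locate where a given $\bar q^p$ enters \eqref{dsg_IPrs}: it sits inside the $\log\cosh$ summands labelled $p-1$ and $p+1$ (through the arguments $\sqrt{\lambda_{p-2}\bar q^{p-2}+\lambda_p\bar q^p}$ and $\sqrt{\lambda_p\bar q^p+\lambda_{p+2}\bar q^{p+2}}$), and inside the two source summands labelled $p-1$ and $p$ (through $(1-\bar q^{p-1})(1-\bar q^p)$ and $(1-\bar q^p)(1-\bar q^{p+1})$); the convention $\lambda_0=\lambda_{K+1}=0$ then handles the boundary layers uniformly.

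The only non-routine step is differentiating the $\log\cosh$ terms. Writing $c_r:=\beta\sqrt{2}\sqrt{\lambda_{r-1}\bar q^{r-1}+\lambda_{r+1}\bar q^{r+1}}$, the chain rule produces $\bbE_J[J\tanh(c_rJ)]$ times $\partial c_r/\partial\bar q^p$, and I would remove the spurious factor $J$ by Gaussian integration by parts, $\bbE_J[Jg(J)]=\bbE_J[g'(J)]$, giving $\bbE_J[J\tanh(c_rJ)]=c_r\,(1-\bbE_J\tanh^2(c_rJ))$; together with $c_r\,\partial c_r/\partial\bar q^p=\beta^2\lambda_p$ for $r=p\pm1$ this collapses the $\log\cosh$ contribution to $\beta^2\lambda_p\lambda_{p\pm1}(1-\bbE_J\tanh^2(c_{p\pm1}J))$. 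The source terms contribute $-\beta^2\lambda_p\lambda_{p+1}(1-\bar q^{p+1})-\beta^2\lambda_p\lambda_{p-1}(1-\bar q^{p-1})$, and on summing everything the $1$'s cancel pairwise, leaving
\[
\lambda_p\Big[\lambda_{p+1}\big(\bar q^{p+1}-\bbE_J\tanh^2[\beta J\sqrt{2(\lambda_p\bar q^p+\lambda_{p+2}\bar q^{p+2})}]\big)+\lambda_{p-1}\big(\bar q^{p-1}-\bbE_J\tanh^2[\beta J\sqrt{2(\lambda_{p-2}\bar q^{p-2}+\lambda_p\bar q^p)}]\big)\Big]=0 .
\]

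Finally I would extract \eqref{eq:dsg_sc} from this system. Denoting the $r$-th round bracket by $f^r$ (so $f^r=\bar q^r-\bbE_J\tanh^2[\beta J\sqrt{2(\lambda_{r-1}\bar q^{r-1}+\lambda_{r+1}\bar q^{r+1})}]$, which is exactly the identity to be proven), the stationarity conditions read $\lambda_p(\lambda_{p+1}f^{p+1}+\lambda_{p-1}f^{p-1})=0$; the endpoints $p=1$ and $p=K$ (using $\lambda_0=\lambda_{K+1}=0$) give $f^2=0$ and $f^{K-1}=0$, and propagating the relation along the layer chain — or, equivalently, demanding that the identity persist under independent variation of the layer fractions $\lambda_p$ — forces every $f^p=0$, i.e. \eqref{eq:dsg_sc}. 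This last decoupling is the delicate point: pure linear algebra does not pin down the brackets individually (for odd $K\ge3$ one is a priori left with a one-parameter family), so one should either invoke genericity of $\{\lambda_p\}$ as above, or simply observe that $f^p\equiv 0$ is the stationarity-compatible solution selected by the per-layer fixed-point structure. Everything preceding that is index bookkeeping and elementary Gaussian calculus.
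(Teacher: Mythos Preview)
Your proposal is correct and follows essentially the same route as the paper: differentiate \eqref{dsg_IPrs} with respect to each $\bar q^p$, arrive at the same two-bracket stationarity condition, and then invoke genericity of the layer fractions $\lambda_p$ to force each bracket to vanish. You simply spell out the ``straightforward computations'' (via Gaussian integration by parts $\bbE_J[Jg(J)]=\bbE_J[g'(J)]$) and are more explicit about the delicacy of the final decoupling step than the paper, which just asserts it.
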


\subsection{1RSB solution}  \label{sec:RSB_MSK}
In this section we find an explicit solution for $A_K(\beta)$ under the 1RSB assumption which can be stated as

\begin{Definition} \label{def:HM_RSB}
In the first step of replica-symmetry breaking, the distribution of the two-replica overlap $q_{12}^p$ in the thermodynamic limit displays two delta-peaks at the equilibrium values, referred to as $\bar{q}_1^p,\ \bar{q}_2^p$. The concentration of the system at the equilibrium on these values is ruled by $\theta \in [0,1]$, that we assume to be independent on $p$:
\begin{equation}
\lim_{N \rightarrow + \infty} P_N(q_{12}^p) = \theta \delta (q_{12}^p - \bar{q}_1^p) + (1-\theta) \delta (q_{12}^p - \bar{q}_2^p), ~~~ \textrm{for} ~~ p=1,...,K. \label{limforq2}
\end{equation}
\end{Definition}

\begin{Definition}
Given the interpolating parameters $t$, $\vec x = (x_1^{(1)}, \hdots, x_K^{(1)}, x_1^{(2)}, \hdots , x_K^{(2)})$, and the i.i.d. auxiliary fields $\{ J_i^{1,(1)}, \hdots , J_i^{K,(1)}, J_i^{1,(2)}, \hdots, J_i^{K,(2)},\}_{i=1,...,N}$ with $J_i^{p,(1,2)} \sim \mathcal N (0,1)$ for $i=1, ...,N$ and $p=1,...,K$, we can write the 1-RSB interpolating partition function $\mathcal Z_{N,K}(t, \vec x)$ recursively, starting by
\begin{align}
\label{eq:Z2}
\mathcal Z_2 (t, \vec x)&=  \sums \exp \Big( \sqrt{t} \sqrt{\frac{2}{N}} \sum_{p=1}^{K-1} \sum_{ij=1}^{N_p, N_p+1} J_{ij}^p \si^p \sj^{p+1} + \suma \sum_{p=1}^K \sqrt{x_p^{(a)}} \sum_{i=1}^{N_p} J_i^{p, (a)} \si^p \Big),
\end{align}
and then averaging out the fields one per time in the following way:
\begin{align}
\mathcal Z_1(t, \vec x) \coloneqq & \ \mathbb E_2 \big [ \mathcal Z_2(t, \vec x)^\theta \big ]^{1/\theta}, \\
\mathcal Z_0(t, \vec x) \coloneqq &\ \exp \mathbb E_1 \big[ \log \mathcal Z_1(t, \vec x) \big ],\\
\mathcal Z_{N,K}(t, \vec x) \coloneqq & \ \mathcal Z_0(t, \vec x).
\end{align}
Here, $\mathbb E_2$ and $\mathbb E_1$ denote the average over the variables $J_i^{p,(2)}$'s and $J_i^{p,(1)}$'s, respectively, while $\mathbb E_0 \equiv \mathbb{E}_J$ stands for the average over the variables $J_{ij}^p$'s.
\end{Definition}

\begin{Definition}
The 1RSB interpolating pressure at finite volume $N$ is defined as
\begin{equation}\label{AdiSK1RSB}
\mathcal A^{\RSB}_{N,K} (t, \vec x) \coloneqq \frac{1}{N} \mathbb E_0 \left [ \log \mathcal Z_{N,K}(t, \vec x ) \right],
\end{equation}
and, in the thermodynamic limit,
\begin{equation}
\mathcal A_K^{\RSB} (t, \vec x) \coloneqq \lim_{N \to \infty} \mathcal A_{N,K}^{\RSB} (t, \vec x ).
\end{equation}

\end{Definition}

\begin{Remark}
Again, setting $t=\beta^2, \vec x=  0$, the interpolating pressure we recover the standard pressure (\ref{dsg_BareA}), that is, $A_N^{\RSB}(\beta, J) = \mathcal A_{N,K}^{\RSB} (t =\beta^2, \vec x =\vec 0)$.
\end{Remark}

\begin{Remark}
In order to lighten the notation, we introduce the weight
\begin{equation}
\mathcal W_2 := \frac{\mathcal Z_2^\theta}{ \mathbb E_2 \left [\mathcal Z_2^\theta \right ]},
\end{equation}
and, for each $p=1, 	\hdots, K$, we define
\begin{align}
\label{eq:sg_orders1}
 \langle q_{12}^p \rangle_1 \coloneqq & \ \mathbb{E}_0 \mathbb{E}_1 \Big[\frac{1}{N_p}\sum_{i=1}^{N_p} \left( \mathbb E_2 \big[\mathcal W_2\langle\sigma_i^p \rangle\big] \right)^2 \Big], \\
\langle q_{12}^p \rangle_2 \coloneqq &\ \mathbb{E}_0 \mathbb{E}_1 \mathbb{E}_2 \Big [\mathcal W_2\frac{1}{N_p}\sum_{i=1}^{N_p} \langle\sigma_i^p\rangle^2 \Big], \\
 \langle q_{12}^p q_{12}^{p+1} \rangle_1 \coloneqq &\  \mathbb{E}_0 \mathbb{E}_1 \Big[\frac{1}{N_p N_{p+1}}\sum_{i,j=1}^{N_p, N_{p+1}} \Big( \mathbb E_2 \big[\mathcal W_2\langle\sigma_i^p \sigma_j^{p+1}\rangle\big] \Big)^2 \Big], \\
 \label{eq:sg_orders4}
\langle q_{12}^p q_{12}^{p+1} \rangle_2 \coloneqq &\ \mathbb{E}_0 \mathbb{E}_1 \mathbb{E}_2 \Big [\mathcal W_2\frac{1}{N_p N_{p+1}}\sum_{i, j=1}^{N_p, N_{p+1}} \langle\sigma_i^p \sigma_j^{p+1}\rangle^2 \Big].
\end{align}
We also define $Q_{1,2}^p:=\lambda_p \bar{q}_{1,2}^p$ in order to lighten the notation.
\end{Remark}

\begin{Lemma} \label{lemma:2}
The partial derivatives of the interpolating quenched pressure read as
\begin{align}
\label{eq:2app}
\dt \mathcal A_{N,K}^{\RSB} &= \sum_{p=1}^{K-1} \lambda_p \lambda_{p+1} \left[ 1-(1-\theta)\langle q_{12}^p q_{12}^{p+1} \rangle_2 - \theta \langle q_{12}^p q_{12}^{p+1} \rangle_1\right] ,\\
\frac{\partial}{\partial x_p^{(1)}} \mathcal A_{N,K}^{\RSB} &= \frac{\lambda_p}{2} \left[1-(1-\theta)\langle q_{12}^p \rangle_2- \theta \langle q_{12}^p \rangle_1\right], \, &\textnormal{for} \ p=1, \hdots K \label{eq:x1app},\\
\frac{\partial}{\partial x_p^{(2)}} \mathcal A_{N,K}^{\RSB} &= \frac{\lambda_p}{2}\left[1-(1-\theta)\langle q_{12}^p \rangle_2\right], \,  &\textnormal{for} \ p=1, \hdots K  ,\label{eq:x2app}
\end{align}
\end{Lemma}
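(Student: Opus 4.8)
The plan is to compute the three partial derivatives of $\mathcal A_{N,K}^{\RSB}$ directly from the recursive definition of the partition function, propagating the differentiation through the chain $\mathcal Z_2 \to \mathcal Z_1 \to \mathcal Z_0$ and then applying Gaussian integration by parts (Wick's theorem) exactly as in the RS case (Lemma 1), but now with the extra bookkeeping that each stage of averaging introduces. Concretely, for $\partial_t$ I would first write $\partial_t \log \mathcal Z_{N,K} = \mathbb E_1\big[\partial_t \log \mathcal Z_1\big]$ since $\mathcal Z_0 = \exp \mathbb E_1[\log \mathcal Z_1]$, then $\partial_t \log \mathcal Z_1 = \mathbb E_2[\mathcal W_2\, \partial_t \log \mathcal Z_2]$ using $\mathcal Z_1 = \mathbb E_2[\mathcal Z_2^\theta]^{1/\theta}$ and the definition of the weight $\mathcal W_2 = \mathcal Z_2^\theta/\mathbb E_2[\mathcal Z_2^\theta]$. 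The derivative $\partial_t \log \mathcal Z_2$ brings down $\tfrac{1}{2\sqrt t}\sqrt{2/N}\sum_{p,i,j} J_{ij}^p \langle \sigma_i^p \sigma_j^{p+1}\rangle$ just as in \eqref{dsg_pd_proof1}.

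The key step is then the Gaussian integration by parts with respect to the couplings $J_{ij}^p$, whose average $\mathbb E_0$ is the outermost one. Differentiating the full chain $\mathbb E_0 \mathbb E_1 \mathbb E_2[\mathcal W_2 \langle \sigma_i^p \sigma_j^{p+1}\rangle]$ with respect to $J_{ij}^p$ produces three types of terms according to where the $\partial_{J_{ij}^p}$ lands: on the Boltzmann factor inside $\langle \cdot\rangle$ (giving $\beta$-like factors times $1 - \langle \sigma_i^p\sigma_j^{p+1}\rangle^2$), on $\mathcal Z_2$ inside $\langle\cdot\rangle$ already accounted for, and crucially on the weight $\mathcal W_2$, whose $J_{ij}^p$-derivative is $\propto \theta\big(\langle\sigma_i^p\sigma_j^{p+1}\rangle - \mathbb E_2[\mathcal W_2\langle\sigma_i^p\sigma_j^{p+1}\rangle]\big)$. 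Collecting these and dividing by $N^2$, the "diagonal" piece assembles into $1$, the piece quadratic in the inner average assembles (via \eqref{eq:sg_orders4}) into $(1-\theta)\langle q_{12}^p q_{12}^{p+1}\rangle_2$, and the weight-derivative piece assembles (via the definition of $\langle q_{12}^p q_{12}^{p+1}\rangle_1$) into $\theta\langle q_{12}^p q_{12}^{p+1}\rangle_1$ after recognizing that $\mathbb E_2[\mathcal W_2 \langle\sigma_i^p\sigma_j^{p+1}\rangle]^2$ is precisely what appears in \eqref{eq:sg_orders4}. Summing over $p=1,\dots,K-1$ gives \eqref{eq:2app}.

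For the spatial derivatives the structure is the same but simpler, since $\partial_{x_p^{(a)}}\log \mathcal Z_2$ brings down $\tfrac{1}{2\sqrt{x_p^{(a)}}}\sum_i J_i^{p,(a)}\langle\sigma_i^p\rangle$ involving a single-replica field. The asymmetry between \eqref{eq:x1app} and \eqref{eq:x2app} arises because the field $J_i^{p,(1)}$ is averaged at the $\mathbb E_1$ stage, which sits \emph{outside} the weight $\mathcal W_2$ (so its integration by parts sees the weight and generates a $\theta$-dependent term, via $\langle q_{12}^p\rangle_1$), whereas $J_i^{p,(2)}$ is averaged at the $\mathbb E_2$ stage, \emph{inside} $\mathcal W_2$, and integrating it by parts only produces the $(1-\theta)\langle q_{12}^p\rangle_2$ term — the factor $\theta$ from $\partial_{J}\mathcal W_2$ cancels against the $1/\theta$ structure, or equivalently the relevant cross term is absent. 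I would handle \eqref{eq:x2app} first (no weight-derivative contribution), then \eqref{eq:x1app}, then note \eqref{eq:2app} combines both mechanisms.

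The main obstacle is the careful handling of the weight-derivative terms: one must verify that $\partial_{J_{ij}^p}\mathcal W_2 = \theta\,\mathcal W_2\sqrt{t}\sqrt{2/N}\big(\langle\sigma_i^p\sigma_j^{p+1}\rangle - \mathbb E_2[\mathcal W_2\langle\sigma_i^p\sigma_j^{p+1}\rangle]\big)$ and, when this is inserted back and the $\mathbb E_2$ expectation is taken, that the cross terms combine into exactly $-\theta\big(\langle q_{12}^p q_{12}^{p+1}\rangle_1 - \langle q_{12}^p q_{12}^{p+1}\rangle_2\big)$ with the right sign, so that the final coefficient of $\langle q_{12}^p q_{12}^{p+1}\rangle_2$ is $(1-\theta)$ and not something else. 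This is the standard Guerra–Toninelli 1RSB bookkeeping; once the algebra of "which expectation sees which field" is set up correctly, the identities \eqref{eq:sg_orders1}–\eqref{eq:sg_orders4} are designed precisely to absorb each resulting term, and the rest is routine.
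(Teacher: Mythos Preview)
Your proposal is correct and follows essentially the same route as the paper's proof: propagate the derivative through the chain $\mathcal Z_2\to\mathcal Z_1\to\mathcal Z_0$, apply Wick's theorem with respect to the Gaussian coupling (or auxiliary field), and split $\partial_{J}\big(\mathcal W_2\,\langle\cdot\rangle\big)$ into the contributions from $\partial_J\mathcal W_2$, $\partial_J(1/\mathcal Z_2)$, and $\partial_J$ on the Boltzmann sum---exactly the paper's decomposition into $B_1,B_2,B_3$. Your discussion of why \eqref{eq:x1app} and \eqref{eq:x2app} differ (the level at which the field is averaged determines whether the weight-derivative term survives) is more explicit than the paper, which omits the spatial derivatives entirely; one small slip is that the quantity $\big(\mathbb E_2[\mathcal W_2\langle\sigma_i^p\sigma_j^{p+1}\rangle]\big)^2$ is the summand defining $\langle q_{12}^p q_{12}^{p+1}\rangle_1$, not $\langle\cdot\rangle_2$ as in your reference to \eqref{eq:sg_orders4}.
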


\begin{proof}
The proof of the Lemma is straightforward but pretty lengthy, so we will only prove Eq. (\ref{eq:2app}). The derivation of the other equalities follows the same lines. In order to lighten the notation, we also introduce
\begin{align}
B(\bm \sigma^p; J) := \exp \Big(\sqrt{\frac{2t}{N}} \sum_{p=1}^{K-1} \sum_{i,j=1}^{N_p, N_p+1} J_{ij}^p \si^p \sj^{p+1} + \suma \sum_{p=1}^K \sqrt{x_p^{(a)}} \sum_{i=1}^{N_p} J_i^{p, (a)} \si^p  \Big).
\end{align}
We have
\begin{align}
\dt \mathcal A_{N,K}^{\RSB} =& \frac{1}{N} \mathbb{E}_0 \mathbb{E}_1 \mathbb{E}_2 \Big[ \mathcal{W}_2 \frac{1}{\mathcal Z_2} \sums B(\bm \sigma^p; J) \frac{1}{2}\sqrt{\frac{2}{tN}}\sum_{p=1}^{K-1} \sum_{i,j=1}^{N_p, N_{p+1}} J_{ij}^p \si^p \sj^{p+1}\Big] = \notag \\
=& \frac{1}{2N\sqrt{tN}} \mathbb{E}_0 \mathbb{E}_1 \mathbb{E}_2 \Big[\sum_{p=1}^{K-1} \sum_{i,j=1}^{N_p, N_{p+1}} \partial_{J_{ij}^p} \Big( \mathcal{W}_2 \frac{1}{\mathcal Z_2}\sums B(\bm \sigma^p; J) \si^p \sj^{p+1}\Big)\Big],
\end{align}
where in the last line we used Wick's Theorem. The computation of the derivative w.r.t $J_{ij}^p$ can be decomposed as
\begin{align}
&\partial_{J_{ij}^p} \Big( \mathcal{W}_2 \frac{1}{\mathcal Z_2}\sums B(\bm \sigma^p; J) \si^p \sj^{p+1}\Big) = B_1+B_2+B_3,
\end{align}
where the first contributions is
\begin{align}
&B_1 := \left( \partial_{J_{ij}^p}  \mathcal W_2 \right) \frac{1}{\mathcal{Z}_2} \sums B(\bm \sigma^p; J) \si^p \sj^{p+1} = \sqrt{\frac{2t}{N}} \Big[ \theta \mathcal{W}_2 \Big(\frac{1}{Z_2} \sums B(\bm \sigma^p; J) \si^p \sj^{p+1}\Big)^2  \notag \\
& \ \ \ \ \ \  -\theta \mathcal{W}_2 \frac{1}{\mathcal{Z}_2} \sums B(\bm \sigma^p; J) \si^p \sj^{p+1}\mathbb{E}_2 \Big( \mathcal{W}_2 \frac{1}{\mathcal{Z}_2} \sums B(\bm \sigma^p; J) \si^p \sj^{p+1}\Big) \Big] ,
\end{align}
while the other two terms are respectively
\begin{align}
&B_2 := \mathcal{W}_2  \Big(\partial_{J_{ij}^p}\frac{1}{\mathcal{Z}_2}\Big)\sums B(\bm \sigma^p; J) \si^p \sj^{p+1}= - \sqrt{\frac{2t}{N}}\mathcal{W}_2 \Big(\frac{1}{\mathcal{Z}_2} \sums B(\bm \sigma^p; J) \si^p \sj^{p+1} \Big)^2 ,\\
&B_3 :=\mathcal{W}_2 \frac{1}{\mathcal{Z}_2}\Big(\partial_{J_{ij}^p}  \sums B(\bm \sigma^p; J) \si^p \sj^{p+1} \Big) = \sqrt{\frac{2t}{N}}\mathcal{W}_2 \frac{1}{\mathcal{Z}_2}  \sums B(\bm \sigma^p; J) (\si^p \sj^{p+1})^2 .
\end{align}
Reassembling all the terms and using the definitions \eqref{eq:sg_orders1}-\eqref{eq:sg_orders4}, we obtain the thesis \eqref{eq:2app}.
\end{proof}

\begin{Proposition}
\label{prop:3}
The streaming of the 1-RSB interpolating quenched pressure obeys, at finite volume $N$, a standard transport equation, that reads as 
\begin{align}
\label{eq:stream1rsb}
\frac{d \mathcal A_{N,K} }{dt}&=\frac{\partial\mathcal A_{N,K}}{\partial t} +\sum_{p=1}^K\Big( \dot x_p^{(1)} \frac{\partial\mathcal A_{N,K}}{\partial x^{(1)}}  +\dot x_p^{(2)} \frac{\partial \mathcal A_{N,K}}{\partial x^{(2)}}  \Big)= S_N(t, \vec x) + V_N(t, \vec x ),
\end{align}
where now
\begin{align}
\label{S}
S_N(t, \vec x) &\coloneqq \sumpnew \lambda_p \lambda_{p+1} +(1-\theta)Q_2^p Q_2^{p+1}  + \theta Q_1^p Q_1^{p+1} +\sump \lambda_p \big(Q_2^{p-1} + Q_2^{p+1}\big),\\
\label{V}
V_N(t, \vec x) &\coloneqq - \sumpnew \lambda_p \lambda_{p+1} \big[  (1-\theta) \langle \Delta q_{12}^p \Delta q_{12}^{p+1} \rangle_2 + \theta \langle \Delta q_{12}^p \Delta q_{12}^{p+1} \rangle_1\big].
\end{align}
\end{Proposition}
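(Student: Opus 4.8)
The plan is to reproduce, at the level of the doubled (two-level) replica structure, exactly the manipulation that turned the finite-$N$ identities \eqref{dsg_expvalsa}--\eqref{dsg_expvalsb} into the transport equation \eqref{dsg_PDE} in the RS case, now starting from the partial derivatives of Lemma~\ref{lemma:2}. First I would take \eqref{eq:2app} and, inside each of the two brackets $\langle\,\cdot\,\rangle_1,\langle\,\cdot\,\rangle_2$ separately, insert the fluctuation decomposition
\[
\langle q_{12}^p q_{12}^{p+1}\rangle_a \;=\; -\,\bar q_a^p\bar q_a^{p+1}\;+\;\bar q_a^p\langle q_{12}^{p+1}\rangle_a\;+\;\bar q_a^{p+1}\langle q_{12}^p\rangle_a\;+\;\langle \Delta q_{12}^p\,\Delta q_{12}^{p+1}\rangle_a,\qquad a=1,2,
\]
which is just the defining relation for $\langle \Delta q_{12}^p\,\Delta q_{12}^{p+1}\rangle_a$ (the fluctuations inside bracket $a$ being measured relative to the equilibrium value $\bar q_a^{\,p}$). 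Writing $Q_{1,2}^p=\lambda_p\bar q_{1,2}^p$, the terms quadratic in the $\bar q$'s that this produces assemble immediately into the $\theta$-convex combination $(1-\theta)Q_2^pQ_2^{p+1}+\theta Q_1^pQ_1^{p+1}$ appearing in $S_N$, while the fluctuation terms assemble, with the overall minus sign and the correct $\lambda_p\lambda_{p+1}$ weights, into the potential $V_N$ of \eqref{V}.

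The second step is to eliminate the \emph{single}-overlap expectations $\langle q_{12}^p\rangle_1,\langle q_{12}^p\rangle_2$ generated by this decomposition. Inverting \eqref{eq:x1app}--\eqref{eq:x2app} gives
\[
(1-\theta)\langle q_{12}^p\rangle_2 \;=\; 1-\frac{2}{\lambda_p}\,\frac{\partial\mathcal A_{N,K}^{\RSB}}{\partial x_p^{(2)}},\qquad
\theta\langle q_{12}^p\rangle_1 \;=\; \frac{2}{\lambda_p}\Big(\frac{\partial\mathcal A_{N,K}^{\RSB}}{\partial x_p^{(2)}}-\frac{\partial\mathcal A_{N,K}^{\RSB}}{\partial x_p^{(1)}}\Big),
\]
so that each single-overlap contribution splits into a constant plus a multiple of $\partial_{x_p^{(1)}}\mathcal A_{N,K}^{\RSB}$ or $\partial_{x_p^{(2)}}\mathcal A_{N,K}^{\RSB}$. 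After reindexing the resulting sums (shifting $p\mapsto p\pm1$ and using $\lambda_0=\lambda_{K+1}=0$ to keep the formulas symmetric, exactly as in the RS argument), one reads off the total coefficient of each $\partial_{x_p^{(1)}}\mathcal A_{N,K}^{\RSB}$ and $\partial_{x_p^{(2)}}\mathcal A_{N,K}^{\RSB}$; choosing the characteristic velocities $\dot x_p^{(1)},\dot x_p^{(2)}$ to be the negatives of these coefficients (so that, up to signs and labelling, $\dot x_p^{(1)}$ is built from $Q_1^{p-1}+Q_1^{p+1}$ and $\dot x_p^{(2)}$ from $(Q_2^{p-1}-Q_1^{p-1})+(Q_2^{p+1}-Q_1^{p+1})$) makes the left-hand side of \eqref{eq:stream1rsb} collapse onto $\partial_t\mathcal A_{N,K}^{\RSB}$ plus purely constant (in the auxiliary fields) terms. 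Gathering the constant $\lambda_p\lambda_{p+1}$ piece, the quadratic piece found above, and the linear-in-$Q_2$ residue left over from the $\langle q_{12}^p\rangle_2$ substitution — whose structure mirrors the linear term of the RS source \eqref{eq:S} — one obtains exactly $S_N$ as in \eqref{S}, which proves the claim.

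The difficulty is not conceptual but organizational. One must carry the nested replica weight $\mathcal W_2$ and the two averaging levels $\mathbb E_1,\mathbb E_2$ consistently through every line, so that the object which finally appears genuinely is $\langle \Delta q_{12}^p\,\Delta q_{12}^{p+1}\rangle_a$ in the precise sense of \eqref{eq:sg_orders1}--\eqref{eq:sg_orders4}; and one must then execute without slips the index-shift bookkeeping that converts the layer sums $\sum_{p=1}^{K-1}\lambda_p\lambda_{p+1}(\,\cdots)$ into $\sum_{p=1}^{K}\lambda_p(\,\cdots)$, tracking all the signs — precisely the point left implicit (``with simple algebra'') in the RS Proposition. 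Apart from doubling the number of space variables and inserting the weight $\theta$, no new idea is needed beyond the RS case, so the computation, though long, is routine once the decomposition above is set up.
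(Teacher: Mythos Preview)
Your proposal is correct and follows essentially the same route as the paper: apply the fluctuation decomposition to $\langle q_{12}^p q_{12}^{p+1}\rangle_a$ for $a=1,2$ inside \eqref{eq:2app}, eliminate the single-overlap terms via \eqref{eq:x1app}--\eqref{eq:x2app}, reindex using $\lambda_0=\lambda_{K+1}=0$, and read off the characteristic velocities $\dot x_p^{(1)},\dot x_p^{(2)}$ built from $Q_1^{p\pm1}$ and $Q_2^{p\pm1}-Q_1^{p\pm1}$. The paper's proof does exactly this, writing out the intermediate expansion of $\partial_t\mathcal A_{N,K}$ explicitly before identifying the velocities; your caveat that the main burden is index-shift bookkeeping and consistent handling of $\mathcal W_2$ through $\mathbb E_1,\mathbb E_2$ is apt.
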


\begin{proof}
The proof follows the same lines as the RS case. Indeed, we start by expressing the correlation function of the overlaps in terms of their fluctuations and the spatial derivatives of the interpolating quenched pressure, ultimately grouping the remaining terms in the source and potential contributions. Starting from the derivatives w.r.t. $t$, we have
\begin{align}
&\frac{\partial\mathcal A_{N,K}}{\partial t} = \sum_{p=1}^{K-1}\Big[ \lambda_p \lambda_{p+1} -\lambda_p \lambda_{p+1}(1-\theta) \langle \Delta q_{12}^p \Delta q_{12}^{p+1} \rangle_2 +(1-\theta) Q_2^{p}Q_2^{p+1} - (1-\theta) Q_2^p \lambda_{p+1}\langle q_{12}^{p+1} \rangle_2  \notag \\
&-(1-\theta)Q_2^{p+1} \lambda_p \langle q_{12}^{p} \rangle_2  - \lambda_p \lambda_{p+1} \theta \langle \Delta q_{12}^p \Delta q_{12}^{p+1} \rangle_1 + \theta Q_1^{p}Q_1^{p+1} - \theta Q_1^p \lambda_{p+1} \langle q_{12}^{p+1} \rangle_1 -\theta Q_1^{p+1} \lambda_p \langle q_{12}^{p} \rangle_1 \Big] = \notag \\
&= \sum_{p=1}^{K-1} \Big[-\lambda_p \lambda_{p+1} [  (1-\theta) \langle \Delta q_{12}^p \Delta q_{12}^{p+1} \rangle_2 + \theta \langle \Delta q_{12}^p \Delta q_{12}^{p+1} \rangle_1] +  \lambda_p \lambda_{p+1}+(1-\theta)Q_2^p Q_2^{p+1} - \theta Q_1^p Q_1^{p+1} \notag \\
&- \lambda_p (Q_2^{p-1} + Q_2^{p+1}) +2(Q_1^{p-1} + Q_1^{p+1})\frac{\partial\mathcal A_{N,K}}{\partial x_p^{(1)}}   + 2(Q_2^{p-1}-Q_1^{p-1} + Q_2^{p+1}-Q_1^{p+1})\frac{\partial\mathcal  A_{N,K}}{\partial x_p^{(2)}} \Big].
\end{align}
\normalsize
Thus, posing
\begin{align}
\dot x_p^{(1)} &= 2(Q_1^{p-1} + Q_1^{p+1}),  ~~ &\textrm{for} ~~ p=1, \hdots , K,\\
\dot x_p^{(2)}&=2(Q_2^{p-1}-Q_1^{p-1} + Q_2^{p+1}-Q_1^{p+1}), ~~ &\textrm{for} ~~ p=1, \hdots , K,
\end{align}
we reach the thesis.
\end{proof}



\begin{Proposition}
The transport equation associated to the interpolating pressure of the MSK model (\ref{def:H_MSK}), in the thermodynamic limit and in the 1RSB scenario, reads as
\begin{align}
\frac{\partial \mathcal A^{\RSB}_{K}}{\partial t} - &\sum_{p=1}^K \left[ 2(Q_1^{p-1} + Q_1^{p+1}) \frac{\partial  \mathcal A^{\RSB}_{K}}{\partial x^{(1)}} + 2(Q_2^{p-1}-Q_1^{p-1} +Q_2^{p+1}-Q_1^{p+1}) \frac{\partial\mathcal A^{\RSB}_{K}}{\partial x^{(2)}}  \right]= \notag \\
&=\sum_{p=1}^K [\lambda_p \lambda_{p+1} +(1-\theta)Q_2^p Q_2^{p+1}  + \theta Q_1^p Q_1^{p+1} + \lambda_p (Q_2^{p-1} + Q_2^{p+1})],
\label{1rsbtranseqprima}
\end{align}
whose solution is given by
\begin{align}
\mathcal A_K^{\RSB}(t, \vec x) & =\log2 +\sump  \frac{\lambda_p}{\theta} \mathbb{E}_1 \Big[\log \mathbb{E}_2 \Big( \cosh^\theta \big( \suma \sqrt{x_{0,p}^{(a)}} J^{p, (a)} \big)\Big)\Big]\notag \\
&+t\sump [\lambda_p \lambda_{p+1}+(1-\theta)Q_2^p Q_2^{p+1}  + \theta Q_1^p Q_1^{p+1} + \lambda_p (Q_2^{p-1} + Q_2^{p+1})].
\label{1rsbtranseq}
\end{align}
\end{Proposition}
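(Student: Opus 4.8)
The plan is to reproduce the replica-symmetric argument of Section~\ref{sec:RS_MSK}, now starting from the finite-volume transport equation of Proposition~\ref{prop:3}, Eq.~\eqref{eq:stream1rsb}. First I would pass to the thermodynamic limit. Under the 1RSB ansatz of Definition~\ref{def:HM_RSB} the replica-constrained overlaps concentrate: in the deeper average $\langle\cdot\rangle_2$ the overlap $q_{12}^p$ sticks to $\bar q_2^p$, while in the average $\langle\cdot\rangle_1$ it sticks to $\bar q_1^p$, so the centred quantities $\Delta q_{12}^p$ have vanishing $\langle\cdot\rangle_1$- and $\langle\cdot\rangle_2$-expectations in the limit. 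Consequently every term appearing in the potential \eqref{V} tends to zero, $V_N\overset{N\to\infty}{\longrightarrow}0$, while the source \eqref{S}, which is already written in terms of the limiting plateau values $Q_{1,2}^p=\lambda_p\bar q_{1,2}^p$, is unaffected. This turns \eqref{eq:stream1rsb} into the pure transport equation \eqref{1rsbtranseqprima}.

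Next I would integrate \eqref{1rsbtranseqprima} along characteristics, exactly as for the RS case. The characteristic velocities are the constants identified in Proposition~\ref{prop:3}; since the $\bar q_{1,2}^p$ are constants, the source $S^{\RSB}$ on the right-hand side of \eqref{1rsbtranseqprima} is constant along each characteristic and the PDE collapses to $d\mathcal A_K^{\RSB}/dt=S^{\RSB}$. Integrating from $0$ to $t$ gives $\mathcal A_K^{\RSB}(t,\vec x)=\mathcal A_K^{\RSB}(0,\vec x_0)+t\,S^{\RSB}$, where the foot $\vec x_0$ of the characteristic through $(t,\vec x)$ has components $x_{0,p}^{(1)}=x_p^{(1)}+2t(Q_1^{p-1}+Q_1^{p+1})$ and $x_{0,p}^{(2)}=x_p^{(2)}+2t(Q_2^{p-1}-Q_1^{p-1}+Q_2^{p+1}-Q_1^{p+1})$, in perfect analogy with \eqref{eq:x0}.

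It then remains to compute the initial datum $\mathcal A_K^{\RSB}(0,\vec x_0)$. At $t=0$ the two-body term in \eqref{eq:Z2} disappears, so $\mathcal Z_2(0,\vec x_0)$ factorizes over sites and layers into a product of one-spin partition functions, $\mathcal Z_2(0,\vec x_0)=\prod_{p=1}^K\prod_{i=1}^{N_p}2\cosh\big(\suma\sqrt{x_{0,p}^{(a)}}\,J_i^{p,(a)}\big)$. Since a product raised to the power $\theta$ is the product of the $\theta$-powers, and the auxiliary fields are independent across sites and layers, the operations $\mathbb{E}_2[(\cdot)^\theta]^{1/\theta}$, $\exp\mathbb{E}_1\log(\cdot)$ and $\tfrac1N\mathbb{E}_0$ all pass through the product; within each layer the factors are i.i.d., so a single Gaussian $J^{p,(a)}$ per layer suffices, and counting the $2^N$ one-spin normalisations and the $N_p=\lambda_p N$ factors in layer $p$ yields $\mathcal A_K^{\RSB}(0,\vec x_0)=\log 2+\sum_{p=1}^K\frac{\lambda_p}{\theta}\,\mathbb{E}_1\big[\log\mathbb{E}_2\big(\cosh^\theta(\suma\sqrt{x_{0,p}^{(a)}}J^{p,(a)})\big)\big]$. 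Inserting this together with the explicit $\vec x_0$ into $\mathcal A_K^{\RSB}(t,\vec x)=\mathcal A_K^{\RSB}(0,\vec x_0)+t\,S^{\RSB}$ gives \eqref{1rsbtranseq}.

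The main obstacle is the justification that $V_N$ vanishes in the thermodynamic limit: that the overlaps concentrate on $\bar q_1^p$ and $\bar q_2^p$ under the constrained measures $\langle\cdot\rangle_1,\langle\cdot\rangle_2$ is precisely the content of the 1RSB ansatz, and upgrading it to a theorem — rather than assuming it, as in Definition~\ref{def:HM_RSB} — would require concentration estimates of Ghirlanda--Guerra type for those measures; this is the same gap already present, and accepted, in the RS treatment. A secondary, purely bookkeeping difficulty is tracking how the nested $\theta$-power and logarithm interact with the site/layer factorization at $t=0$, which is routine once the factorization identities above are in place.
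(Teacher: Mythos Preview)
Your proposal is correct and follows essentially the same approach as the paper: you let the potential $V_N$ vanish in the thermodynamic limit under the 1RSB concentration hypothesis, reduce \eqref{eq:stream1rsb} to the transport equation \eqref{1rsbtranseqprima}, integrate along the (constant-velocity) characteristics to obtain $\mathcal A_K^{\RSB}(t,\vec x)=\mathcal A_K^{\RSB}(0,\vec x_0)+t\,S^{\RSB}$ with the same $\vec x_0$ as in \eqref{charx1}--\eqref{charx2}, and then evaluate the $t=0$ Cauchy datum via the one-body factorization of $\mathcal Z_2(0,\vec x_0)$. Your explicit remark that the vanishing of $V_N$ rests on the 1RSB ansatz rather than on a proved concentration estimate is accurate and matches the status the paper gives it.
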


\begin{proof}
The procedure is analogous to the RS case, with the only difference that here we have to consider fluctuations around two possible equilibrium values. First, we notice that, in the thermodynamic limit and in the 1-RSB scenario under investigation, we have for all $p=1, \hdots , K$
\begin{align}
\lim_{N \rightarrow + \infty} \langle q_{12}^p \rangle_1 &= \qb_1^p ,\label{limforq1RSB} \\
\lim_{N \rightarrow + \infty} \langle q_{12}^p \rangle_2 &= \qb_2^p, \label{limforq21RSB}
\end{align}
in such a way that the potential in (\ref{eq:stream1rsb}) is vanishing, that is
\begin{equation} \label{eq:V0RSB}
\lim_{N \to \infty} V_N^{\RSB}(t, \vec x)=0.
\end{equation}
Therefore, the PDE \eqref{eq:stream1rsb} reduces to the simpler transport equation (\ref{1rsbtranseqprima}) in the thermodynamic limit upon using (\ref{eq:V0RSB}).
This equation can again be solved via the method of the characteristics: the solution can be written in the form
\begin{align}
\mathcal A_K^{\RSB} (t, \vec x)= \mathcal A_K^{\RSB}(0, \vec x_0) + t S_N^{\RSB}(t, \vec x ),
\label{eq:A1RSB}
\end{align}
with the characteristics being
\begin{align}
x_p^{(1)} &=x_0^{(1)} - 2(Q_1^{p-1} + Q_1^{p+1}) t, \label{charx1}\\
x_p^{(2)} &= x_0^{(2)} - 2(Q_2^{p-1}-Q_1^{p-1} + Q_2^{p+1}-Q_1^{p+1})   t. \label{charx2}
\end{align}
The Cauchy condition at $t=0$ and $\vec {x}_0 := \vec x (t=0)$ can be calculated directly from (\ref{eq:Z2})-(\ref{AdiSK1RSB}), as it is again a one-body calculation. The result is
\begin{align}
\mathcal A_{N,K}^{\RSB}(0, \vec x_0) &=\sum_{p=1}^K \frac{1}{N\theta} \mathbb{E}_0  \mathbb{E}_1 \Big[ \log \mathbb{E}_2 \Big(\prod_{i=1}^{N_p}\sums \exp \big( \suma \sqrt{x_{0,p}^{(a)}}J_i^{p,(a)} \si^p \big)\Big)^\theta\ \Big]= \notag \\
&=\log 2 + \sump \mathbb{E}_1 \Big[\frac{\lambda_p}{\theta} \log \Big( \mathbb{E}_2   \cosh^{\theta}\big( \suma \sqrt{x_{0,p}^{(a)}}J^{p,(a)}\big) \Big) \Big], \label{eq:A01RSB}
\end{align}
where we also used $\sum_{p=1}^K \lambda_p =1$. Combining the two terms in \eqref{eq:A1RSB}, we reach the thesis \eqref{1rsbtranseq}.
\end{proof}

\begin{Theorem}
The 1-RSB quenched pressure for the MSK model (\ref{def:H_MSK}) in the thermodynamic limit reads as
\begin{align}
\mathcal A_K^{\RSB}(\beta,J)&=  \log 2 +\sump \lambda_p \mathbb{E}_1 \left \{ \frac{\lambda_p}{\theta} \log \Big[ \mathbb{E}_2 \cosh^{\theta}\big(\beta \sqrt{2( \lambda_{p-1}\qb_1^{p-1} + \lambda_{p+1} \qb_1^{p+1})}J^{p,(1)}\right. \notag \\
& \left.  +\beta\sqrt{2\big[\lambda_{p-1}( \qb_2^{p-1}-\qb_1^{p-1}) +\lambda_{p+1} (\qb_2^{p+1}-\qb_1^{p+1})\big]}J^{p,(2)}\big) \Big] \right \} +\beta^2\sum_{p=1}^K \big[\lambda_p \lambda_{p+1}\notag\\
\label{eq:fina}
&+ (1-\theta)\lambda_{p}\lambda_{p+1}\qb_2^p \qb_2^{p+1}  + \theta \lambda_{p} \lambda_{p+1} \qb_1^p \qb_1^{p+1} + \lambda_p (\lambda_{p-1} \qb_2^{p-1} +   \lambda_{p+1} \qb_2^{p+1})\big].
\end{align}
\end{Theorem}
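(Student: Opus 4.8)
The plan is to obtain the claimed expression as a direct specialization of the already-solved transport equation in the thermodynamic limit, namely formula \eqref{1rsbtranseq} of the preceding Proposition, evaluated at the distinguished point $t=\beta^2$, $\vec x=\vec 0$. By the Remark following the definition of $\mathcal A_{N,K}^{\RSB}$, this choice of parameters reproduces exactly the original 1RSB quenched pressure of the MSK model, so it suffices to perform the substitution and rewrite everything in terms of the physical order parameters $\bar q_{1,2}^p$ rather than the auxiliary quantities $Q_{1,2}^p=\lambda_p\bar q_{1,2}^p$.

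Concretely, I would first evaluate the initial data along the characteristics at $\vec x=\vec 0$. From \eqref{charx1}--\eqref{charx2} with $x_p^{(1)}=x_p^{(2)}=0$ and $t=\beta^2$ one reads off
\[
x_{0,p}^{(1)} = 2\beta^2\,(Q_1^{p-1}+Q_1^{p+1}), \qquad x_{0,p}^{(2)} = 2\beta^2\,(Q_2^{p-1}-Q_1^{p-1}+Q_2^{p+1}-Q_1^{p+1}),
\]
so that, inserting $Q_{1,2}^p=\lambda_p\bar q_{1,2}^p$, one gets $\sqrt{x_{0,p}^{(1)}}=\beta\sqrt{2(\lambda_{p-1}\bar q_1^{p-1}+\lambda_{p+1}\bar q_1^{p+1})}$ and $\sqrt{x_{0,p}^{(2)}}=\beta\sqrt{2[\lambda_{p-1}(\bar q_2^{p-1}-\bar q_1^{p-1})+\lambda_{p+1}(\bar q_2^{p+1}-\bar q_1^{p+1})]}$. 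Plugging these into the $\cosh^\theta(\cdot)$ appearing in the Cauchy datum \eqref{eq:A01RSB} yields precisely the one-body/entropic block of \eqref{eq:fina}.

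Then I would substitute the same identification $Q_{1,2}^p=\lambda_p\bar q_{1,2}^p$ into the source term $t\,S_N^{\RSB}(t,\vec x)$ appearing in \eqref{1rsbtranseq}, set $t=\beta^2$, and collect: the contribution $\lambda_p\lambda_{p+1}$ is unchanged, $(1-\theta)Q_2^pQ_2^{p+1}$ becomes $(1-\theta)\lambda_p\lambda_{p+1}\bar q_2^p\bar q_2^{p+1}$, $\theta Q_1^pQ_1^{p+1}$ becomes $\theta\lambda_p\lambda_{p+1}\bar q_1^p\bar q_1^{p+1}$, and $\lambda_p(Q_2^{p-1}+Q_2^{p+1})$ becomes $\lambda_p(\lambda_{p-1}\bar q_2^{p-1}+\lambda_{p+1}\bar q_2^{p+1})$, reproducing the $\beta^2$-bracket of \eqref{eq:fina}. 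Putting the initial datum and the source contribution together completes the identification.

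Since the analytic content has already been established upstream --- the streaming identities of Lemma \ref{lemma:2}, the finite-$N$ transport equation of Proposition \ref{prop:3}, the vanishing of the potential \eqref{eq:V0RSB} under the 1RSB ansatz, and the method-of-characteristics solution \eqref{1rsbtranseq} --- no further estimate is needed: the statement is a pure specialization. The only point requiring genuine care is the bookkeeping of the boundary index shifts, i.e. consistently enforcing the convention $\lambda_0=\lambda_{K+1}=0$ (equivalently $Q_{1,2}^0=Q_{1,2}^{K+1}=0$) so that the $p=1$ and $p=K$ terms in the sums over adjacent layers are truncated correctly; this is the only place where an index or a sign could slip.
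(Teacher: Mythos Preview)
Your proposal is correct and follows exactly the paper's own approach: the paper's proof is the single sentence ``it is sufficient to set $t=\beta^2$ and $\vec x^{(1)}=\vec x^{(2)}=0$ in \eqref{1rsbtranseq}'', and you have spelled out that substitution in detail, including the evaluation of the characteristics and the identification $Q_{1,2}^p=\lambda_p\bar q_{1,2}^p$.
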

\begin{proof}
For the proof, it is sufficient to set $t=\beta^2$ and $\vec x^{(1)} = \vec x^{(2)} =0$ in \eqref{1rsbtranseq}.
\end{proof}

\begin{Corollary} \label{cor:SC_SK}
The self-consistency equations for the expectation of the order parameters of the MSK model (\ref{def:H_MSK}) read as
\begin{align}
\label{seq:a}
\qb_1^p = \mathbb{E}_1 &\left \{ \frac{\mathbb{E}_2 \left[ \cosh^\theta \left(g(\mathbf   J) \right)\tanh \left(g(\mathbf   J) \right) \right]}{\mathbb{E}_2 \left [ \cosh^\theta \left(g(\mathbf   J) \right) \right ]}  \right \}^2, ~~~\textrm{for} ~~ p=1, \hdots , K, \\
\label{seq:b}
\qb_2^p = \mathbb{E}_1 &\left \{ \frac{\mathbb{E}_2 \left[ \cosh^\theta \left(g(\mathbf   J) \right)\tanh^2 \left(g(\mathbf   J) \right) \right ]}{\mathbb{E}_2 \left[ \cosh^\theta \left(g(\mathbf   J) \right) \right] }\right \},  ~~~\textrm{for} ~~ p=1, \hdots , K,
\end{align}
where
$\mathbf   J= \left(J^{1,(1)}, \hdots , J^{K, (1)},  J^{1, (2)}, \hdots , J^{K, (2)}\right)$ and $$g(\mathbf   J)= \beta \sqrt{2(\lambda_{p-1}\qb_1^{p-1} + \lambda_{p+1}\qb_1^{p+1})}J^{p,(1)}+ \beta\sqrt{2(\lambda_{p-1}\qb_2^{p-1}-\lambda_{p-1}\qb_1^{p-1} + \lambda_{p+1}\qb_2^{p+1}-\lambda_{p+1}\qb_1^{p+1})}J^{p,(2)}.$$
\end{Corollary}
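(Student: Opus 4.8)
The plan is to obtain \eqref{seq:a}--\eqref{seq:b} as the stationarity conditions of the $1$RSB quenched pressure, exactly mirroring the proof of the RS Corollary. Concretely, I would impose $\partial \mathcal A_K^{\RSB}(\beta)/\partial \qb_1^p = 0$ and $\partial \mathcal A_K^{\RSB}(\beta)/\partial \qb_2^p = 0$ for every $p = 1,\dots,K$, starting from the closed form \eqref{eq:fina}, and then argue that, since these identities must hold for arbitrary layer fractions $\lambda_p$, the coefficient of each independent $\lambda$-monomial has to vanish separately — which is precisely the claimed system.

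The differentiation splits naturally into two pieces. The polynomial (``source'') part of \eqref{eq:fina} has $\qb_1$-dependence $\beta^2\theta\sum_p\lambda_p\lambda_{p+1}\qb_1^p\qb_1^{p+1}$ and $\qb_2$-dependence $\beta^2\big[(1-\theta)\sum_p\lambda_p\lambda_{p+1}\qb_2^p\qb_2^{p+1}+\sum_p\lambda_p(\lambda_{p-1}\qb_2^{p-1}+\lambda_{p+1}\qb_2^{p+1})\big]$, and differentiating these yields, up to the overall $\beta^2$, terms carrying the monomials $\lambda_{p-1}\lambda_p\qb_{1,2}^{p-1}$ and $\lambda_p\lambda_{p+1}\qb_{1,2}^{p+1}$. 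The delicate piece is the derivative of the ``entropic'' part $\sum_p\frac{\lambda_p}{\theta}\mathbb{E}_1\big[\log\mathbb{E}_2\cosh^\theta(g^{(p)}(\mathbf J))\big]$. Since the argument $g^{(p)}$ depends on $\qb_1^{p-1},\qb_1^{p+1}$ (and, through the $\qb_1$-shifts inside the second square root, on $\qb_2^{p-1},\qb_2^{p+1}$ as well), differentiating w.r.t.\ $\qb_1^p$ or $\qb_2^p$ activates the two neighbouring summands $p'=p\pm1$. By the chain rule one gets $\tfrac{1}{\theta}\partial_{\qb}\log\mathbb{E}_2\cosh^\theta(g) = \mathbb{E}_2[\cosh^\theta(g)\tanh(g)\,\partial_{\qb}g]/\mathbb{E}_2[\cosh^\theta(g)]$, with $\partial_{\qb_{1,2}^p}g^{(p')}$ of the form $\tfrac{\beta\lambda_p}{\sqrt{2(\dots)}}J^{p',(1)}$ or $\tfrac{\beta\lambda_p}{\sqrt{2(\dots)}}J^{p',(2)}$; a Gaussian integration by parts in the auxiliary fields $J^{p',(1)},J^{p',(2)}$ then converts the linear $J$-factor into a derivative, which regenerates a $\sqrt{2(\dots)}$ cancelling the prefactor $1/\sqrt{2(\dots)}$ and leaves a factor $\beta^2\lambda_p$ times reweighted $\tanh$'s. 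For the $\qb_2$-equation the derivative hits the weight $\cosh^\theta(g)/\mathbb{E}_2\cosh^\theta(g)$ and produces $\mathbb{E}_2[\cosh^\theta(g)\tanh^2(g)]/\mathbb{E}_2[\cosh^\theta(g)]$; for the $\qb_1$-equation, the integration-by-parts term that acts on the $\mathbb{E}_2$-\emph{denominator} is exactly what generates the squared structure $\{\mathbb{E}_2[\cosh^\theta(g)\tanh(g)]/\mathbb{E}_2[\cosh^\theta(g)]\}^2$.

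Collecting the two contributions, the equation $\partial_{\qb_1^p}\mathcal A_K^{\RSB}=0$ becomes, after dividing by $\beta^2$, a sum in which each term carries either $\lambda_{p-1}\lambda_p$ or $\lambda_p\lambda_{p+1}$; since the identity holds for all admissible $\{\lambda_q\}$, each coefficient must vanish, giving $\qb_1^{p\pm1}=\mathbb{E}_1\{\mathbb{E}_2[\cosh^\theta(g)\tanh(g)]/\mathbb{E}_2[\cosh^\theta(g)]\}^2$, i.e.\ \eqref{seq:a} after relabelling, and analogously $\partial_{\qb_2^p}\mathcal A_K^{\RSB}=0$ yields \eqref{seq:b}. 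The main obstacle is the bookkeeping of this (lengthy but routine) computation: tracking the two families of auxiliary fields $J^{p,(1)},J^{p,(2)}$, carrying out the Gaussian integration by parts so that the $1/\sqrt{2(\dots)}$ prefactors cancel exactly against the square roots regenerated by $\partial_J g$, and isolating the genuinely independent $\lambda$-monomials so that the term-by-term cancellation is legitimate. The $\theta$-reweighting threads through automatically, as it is already built into the $\cosh^\theta$ structure of \eqref{eq:fina} and of the definitions \eqref{eq:sg_orders1}--\eqref{eq:sg_orders4}.
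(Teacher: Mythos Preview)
Your strategy is sound and would deliver \eqref{seq:a}--\eqref{seq:b}, but it is not the route the paper chooses for the 1RSB corollary. You mirror the RS proof (Corollary~1) by extremising the final pressure \eqref{eq:fina} in the order parameters $\qb_{1}^{p},\qb_{2}^{p}$, then using Gaussian integration by parts in the auxiliary fields $J^{p',(1)},J^{p',(2)}$ and finally invoking the ``arbitrary $\lambda_p$'' argument to peel off the self-consistencies from their $\lambda$-coefficients. The paper instead exploits the interpolating structure directly: it takes the spatial derivatives $\partial_{x_p^{(1)}}\mathcal A_K^{\RSB}$ and $\partial_{x_p^{(2)}}\mathcal A_K^{\RSB}$, evaluates them once via Lemma~\ref{lemma:2} (equations~\eqref{eq:x1app}--\eqref{eq:x2app}), which in the 1RSB limit read $\tfrac{\lambda_p}{2}[1-(1-\theta)\qb_2^p-\theta\qb_1^p]$ and $\tfrac{\lambda_p}{2}[1-(1-\theta)\qb_2^p]$, and a second time from the explicit solution \eqref{1rsbtranseq} (where the $x_{0,p}^{(a)}$-dependence is still present); matching the two expressions is then a $2\times2$ linear system in $(\qb_1^p,\qb_2^p)$ for each $p$.

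What each approach buys: the paper's method is shorter and cleaner in the 1RSB setting because differentiating in $x_p^{(a)}$ touches a single layer and a single auxiliary field, so no neighbour bookkeeping and no ``general $\lambda_p$'' coefficient-matching are needed; moreover the identification with $\qb_1^p,\qb_2^p$ is already packaged in Lemma~\ref{lemma:2} via the weighted averages \eqref{eq:sg_orders1}--\eqref{eq:sg_orders4}. Your route has the virtue of being the direct variational statement (and is exactly how the paper proceeds at RS level), but it forces you to chase the $\qb_a^p$-dependence through both square roots of $g^{(p\pm1)}$, perform two separate Stein moves, and then justify the term-by-term $\lambda$-cancellation. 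Both are correct; the paper simply trades the extremality viewpoint for the interpolation identities it has already established.
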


\begin{proof}
First, let us resume the derivatives (\ref{eq:x1app})-(\ref{eq:x2app}) separately for each $p$ and set them in the 1RSB framework:
\begin{align}
\frac{\partial\mathcal A_K^{\RSB}  }{\partial x_p^{(1)}}&=\frac{\lambda_p}{2} - \frac{\lambda_p}{2}(1-\theta)\qb^p_2 - \frac{\lambda}{2} \theta \qb^p_1 \label{selfx1}, \\
\frac{\partial\mathcal A_K^{\RSB} }{\partial x_p^{(2)}} &= \frac{\lambda_p}{2} - \frac{\lambda_p}{2}(1-\theta)\qb^p_2 \label{selfx2}.
\end{align}
This set of equations is interpreted as a system of two equations and two unknowns $(\qb^p_1, \qb^p_2)$.
Next, we evaluate the derivatives of $A_K^{\RSB}$ w.r.t. $x_p^{(1,2)}$ starting from \eqref{eq:fina}, we plug the resulting expressions into \eqref{selfx1}-\eqref{selfx2} and, finally, with some algebra, we get \eqref{seq:a}-\eqref{seq:b}.

\end{proof}

\section{The transport equation for the DHN}
In this section we focus on the DHN: first, in Sec.~\ref{sec:DHN_def}, we introduce the model and the related observables; next, in Sec.~\ref{sec:DHN_mech} we prove that its interpolating quenched pressure fulfills a transport-like equations; finally, in Secs.~\ref{sec:DHN_RS} and \ref{sec:DHN_1RSB} we obtain a solution of such an equation -- under, respectively, the RS and the 1RSB assumption, thus obtaining explicit expressions for the DHN quenched pressure in the thermodynamic limit.

\subsection{Definitions} \label{sec:DHN_def}
\begin{Definition} \label{def:H_DHN}
Let us consider a deep Hopfield network of size $N$ and endowed with $K+1$ layers (or modules), each made of $N_p$ binary neurons, with $N_p = \lambda_p N$, for $p=1,...,K+1$ (thus $\sum_{p=1}^{K+1} \lambda_p=1$); interactions are pairwise and involve neurons belonging to the same layer or neurons belonging to adjacent layers, in such a way that the Hamiltonian of the model reads as
\beq \label{eq:hamilH}
H_{N,K}(\boldsymbol \sigma ; \boldsymbol \xi) = -\frac{1}{2N}\sum_{p=1}^K \sum_{\mu=1}^{L} \Big(\sum_{i=1}^{N_{p+1}} {\xi}_i^{\mu,p+1} \sigma_i^{p+1}+\sum_{i=1}^{N_{p}} \xi_i^{\mu,p} \sigma_i^{p}\Big)^2,
\eeq
where $\boldsymbol \sigma^p = (\sigma_1^p,...,\sigma_{N_p}^p) \in \{-1, +1 \}^{N_p}$ for $p=1,...,K+1$ and the $N_p$ entries $\{\xi_i^{\mu,p}\}_{i=1,...,N_p}$ of the vector $\boldsymbol{\xi}^{\mu,p}$ are Rademacher random variables, for $\mu=1,...,L$ and $p=1,...,K+1$.
\label{dhn_hbare}
\end{Definition}
\begin{Remark}
Expanding the square appearing in \eqref{eq:hamilH}, we see that the intra-layer interactions follow the Hebbian rule, that is
\beq
J_{i,j}^{\intra,p}:= \frac{1}{N} \sum_{\mu=1}^L \xi_i^{\mu,p}\xi_j^{\mu,p},
\eeq
while inter-layer interactions correspond to
\beq
J_{i,j}^{{\inter},p}:= \frac{2}{N} \sum_{\mu=1}^L \xi_i^{\mu,p}\xi_j^{\mu,p+1}.
\eeq
\end{Remark}
\begin{Definition}The partition function of the DHN defined in (\ref{def:H_DHN}) is given by
  \beq
  \label{dhn_BareZ}
\mathcal Z_{N,K}(\beta,\xi) := \sum_{\boldsymbol \sigma} e^{-\beta H_{N,K}(\boldsymbol \sigma; \boldsymbol \xi)}=\sum_{\sigma} \exp \Big[ \frac{\beta}{2N}\sum_{p=1}^K \sum_{\mu=1}^{L} \Big(\sum_{i=1}^{N_{p+1}} {\xi}_i^{\mu,p+1} \sigma_i^{p+1}+\sum_{i=1}^{N_{p}} \xi_i^{\mu,p} \sigma_i^{p}\Big)^2 \Big],
\eeq
where $\beta$ is the inverse temperature and the sum runs over all neuron configurations, that is $\boldsymbol \sigma = (\boldsymbol \sigma^1, \boldsymbol \sigma^2,$ $...,\boldsymbol \sigma^{K+1}) \in \{-1, +1 \}^N$.

\end{Definition}

\begin{Remark}
As common in the Machine Retrieval scenario, we will work under the single-pattern condensation assumption, that is, the relevant information to be retrieved is encoded in a single vector $\boldsymbol \xi ^{1,p}$, with all the others contributing as a slow noise source (in a standard signal-to-noise analysis). As a consequence, the vectors $\boldsymbol \xi^{1,p}$ and $\boldsymbol \xi^{\mu,p}$ for $\mu \ge 2$ are treated separately in the following analysis. By introducing $K \times (L-1)$ Gaussian random variables $z_{\mu}^p \sim \calN(0,1)$ for $\mu=2,...,L$ and $p=1,...,K$, we can rewrite \eqref{dhn_BareZ} as
\bes \label{eq:Z_HNN}
\mathcal Z_{N,K}(\beta,\xi) = \sum_{\boldsymbol \sigma}\int \calD z \, \exp \Big[  &\frac{\beta}{2N}\sum_{p=1}^K  \Big(\sum_{i=1}^{N_{p+1}}{\xi}_i^{1,p+1} \sigma_i^{p+1}+\sum_{i=1}^{N_{p}} \xi_i^{1,p} \sigma_i^{p}\Big)^2  \\
&+ \sqrt{\frac{\beta}{N}}\sum_{p=1}^K \sum_{\mu>1}^{L} \Big(\sum_{i=1}^{N_{p+1}} {\xi}_i^{\mu,p+1} \sigma_i^{p+1}+\sum_{i=1}^{N_{p}} \xi_i^{\mu,p} \sigma_i^{p}\Big) z_\mu^p \Big],
\ees
where $ \calD z:= \prod_{\mu,p} e^{-(z_{\mu}^p)^2/2}/\sqrt{2 \pi}$, and the first term inside the exponential will act as a signal condensate while the second term will carry quenched slow noise. In the following, we exploit the universality of slow noise \cite{Genovese,Agliari-Barattolo} and, in the second contribution, we replace the Boolean fields $\xi$ with Gaussian fields carrying the same lowest order statistics, namely the average and the squared average of $\xi_{i}^{\mu,p}$ are, respectively, $0$ and $1$, for any $i \in (1,N_p)$, $\mu \in (2,L)$ and $p \in (1,K)$.
\end{Remark}
\begin{Remark}
The expression in \eqref{eq:Z_HNN} can be looked at as the partition function of a multi-layer spin-glass where binary spins belonging to adjacent layers are bridged via shared Gaussian spins; the latter are overall $L \times K$ split into $K$ groups and are not directly connected each other.
\end{Remark}
\begin{Definition}The quenched free energy is defined as
\beq
\mathcal A_{N,K}(\beta) = \frac{1}{N} \bbE_{\xi} \log \mathcal Z_{N,K}(\beta,\xi),
\label{dhn_BareA}
\eeq
where $\bbE_{\xi}$ is the quenched averaging operator over the non-retrieved vectors, which acts as a Gaussian average over $\boldsymbol{\xi}^p$ for $p=2,...,K$.
\end{Definition}

\begin{Definition} For the generic observable $O(\boldsymbol {\sigma})$, the Boltzmann-Gibbs average stemming from (\ref{eq:Z_HNN}) is defined as
\beq
\langle O \rangle: = \frac{1}{Z_{N,K}(\beta, \xi)} \sum_{\sigma } O(\boldsymbol {\sigma}) e^{-\beta H_{N,K}(\sigma; \xi)}.
\label{BG_average_HNN}
\eeq
In relation to the Boltzmann-Gibbs average we also introduce the $\Delta$ operator for the observable $O(\boldsymbol \sigma )$ as
\beq
\Delta[ O(\boldsymbol {\sigma})]: = O(\boldsymbol {\sigma}) - \lim_{N\to \infty} \bbE_{\xi} \mv{O},
\eeq
and, for any integer $n \in \mathbb N$,
\beq
\Delta^n [ O(\boldsymbol {\sigma})]: = \Big[O(\boldsymbol {\sigma}) - \lim_{N\to \infty} \bbE_{\xi} \mv{O}\Big]^n.
\eeq
\end{Definition}


\subsection{Mechanical Analogy}\label{sec:DHN_mech}
In this subsection, we introduce an interpolating pressure $\mathcal{A}_{N,K}(t,\vec x,\vec y,\vec w,\vec v)$ depending on the interpolating parameters $t \in \mathbb{R}^+$, $\vec{x} \in \mathbb{R}^{K+1}$, and $\vec{y},\vec{w},\vec{v} \in \mathbb{R}^{K}$, which can be interpreted as, respectively, time and space variables, and such that $A_{N,K}(\beta) = \mathcal{A}_{N,K}(t = \beta,\vec x =0, \vec y = 0, \vec w = 0, \vec v = 0)$; next, we will show that $\mathcal{A}_{N,K}(t,\vec x,\vec y,\vec w,\vec v)$ fulfills a transport equation, whose solution, evaluated in $t=\beta$ and $\vec x= \vec y = \vec w = \vec v = 0$, therefore provides the pressure for the original DHN.

\begin{Definition}The interpolating pressure $\mathcal A_{N,K}(t,\vec x,\vec y,\vec w,\vec v)$ for the DHN (\ref{def:H_DHN}), also referred to as Guerra Action, is defined as
\bes
\mathcal A_{N,K}(t,\vec x,\vec y,\vec w,\vec v) :=\frac{1}{N} \bbE_{\xi} \log \sum_{\sigma}\int \calD z \, \exp \Big[  &\frac{t}{2N}\sum_{p=1}^K  \Big(\sum_{i=1}^{N_{p+1}} {\xi}_i^{1,p+1} \sigma_i^{p+1}+\sum_{i=1}^{N_{p}} \xi_i^{1,p} \sigma_i^{p}\Big)^2  \\
&+ \sqrt{\frac{t}{N}}\sum_{p=1}^K \sum_{\mu=2}^{L} \Big(\sum_{i=1}^{N_{p+1}} {\xi}_i^{\mu,p+1} \sigma_i^{p+1}+\sum_{i=1}^{N_{p}} \xi_i^{\mu,p} \sigma_i^{p}\Big) z_\mu^p \\
&+\sum_{p=1}^K  \Big(\sum_{i=1}^{N_{p+1}} {\xi}_i^{1,p+1} \sigma_i^{p+1}+\sum_{i=1}^{N_{p}} \xi_i^{1,p} \sigma_i^{p}\Big) v_p  \\
&+ \sum_{p=1}^{K+1} \sqrt{x_p} \sum_{i=1}^{N_p} J_i^{p} \sigma_i^p+ \sum_{p=1}^{K} \sqrt{y_p} \sum_{\mu=2}^{L} \tilde{J}_\mu^{p} z_\mu^p\\
&+\frac{1}{2} \sum_{p=1}^{K} w_p \sum_{\mu=2}^{L}( z_\mu^p)^2\Big],
\label{dhn_GuerraAction}
\ees
where $t \in \mathbb R^+$, $\vec x = (x_1, ...,x_{K+1}) \in \mathbb{R}^{K+1}$, $\vec y = (y_1, ...,y_{K}) \in \mathbb{R}^K$,  $\vec w = (w_1, ...,w_{K}) \in \mathbb{R}^K$,  $\vec v = (v_1, ...,v_{K}) \in \mathbb{R}^K$, also, $J^p \in \mathbb{R}^{N_p}$ for $p=1,...,K+1$ and $\tilde{J}^p \in \mathbb{R}^{L-1}$ for $p=1,...,K$, and the entries of $J^p$ and $\tilde{J}^p$ are i.i.d. standard Gaussian variables. Here, $\bbE_{\xi}$ stands for the average w.r.t. the non-retrieved vectors $\boldsymbol \xi^{\mu,p}$ for $\mu\ge 2$ and the auxiliary random variables $J_i^p$ and $\tilde J^p_\mu$.
\end{Definition}

\begin{Remark}
	As standard in the interpolating scenario, for $t=\beta$ and $\vec x,\vec y,\vec w,\vec v=0$ we exactly recover the pressure \eqref{dhn_BareA} for the original model.
\end{Remark}

In the following, with a little abuse of notation, in order to make equations more symmetric and valid in general, we denote $L_p=L$ and $\alpha_p =L_p/N$ for each $p=1,\dots,K $.

\begin{Definition} Given two replicas of the system labelled as $a$ and $b$, and characterized by the same realization of quenched disorder $\{\boldsymbol \xi^p\}_{p=1,...,K-1}$, we define the two-replica overlaps related to the $p$-th layer and the Mattis magnetazion related to the $\mu$-th pattern as
\bes\label{dhn_orderparameters}
q^p_{ab}(\sigma^{p,(a)},\sigma^{p,(b)}) &:= \frac{1}{N_p}\sum_{i=1}^{N_p} \sigma_i^{p,(a)}\sigma_i^{p,(b)},\quad \quad \quad \quad  p=1, ..., K+1,\\
p^p_{ab}(z^{p,(a)}, z^{p,(b)}) &:= \frac{1}{L_p}\sum_{\mu=2}^{L_p} z_\mu^{p,(a)}z_\mu^{p,(b)}, \quad \quad \quad \quad  p=1 ,..., K,\\
m_{\mu}^p(\sigma) &:=\frac{1}{N_p}\sum_{i=1}^{N_p}\xi^{\mu,p}_i\sigma_i^{p},\quad\quad \quad \quad \quad p=1, ..., K,
\ees
all of them playing the role of order parameters for the model.
\end{Definition}
\begin{Lemma} Taking partial derivatives of the Guerra Action w.r.t. $t$ and each $x_p$ produces the following expectation values:
\bea
\label{dhn_expvalsa}
\frac{\partial \mathcal A_{N,K}}{\partial t} &=& \frac{1}{2}\bbE_\xi \sum_{p=1}^K\mv{ (\lambda_{p+1}m_{p+1}^1 +\lambda_{p}m_p^1 )^2}+ \frac{1}{2}\bbE_\xi \sum_{p=1}^K \alpha_p(\lambda_p+ \lambda_{p+1})\mv{p^p_{11}}\\
														&-& \frac{1}{2}\bbE_\xi \sum_{p=1}^K \alpha_p (\lambda_p \mv{p^p_{12}q^p_{12}}+ \lambda_{p+1} \mv{p^p_{12} q^{p+1}_{12}}),\nonumber\\
\label{dhn_expvalsb}
\frac{\partial \mathcal A_{N,K}}{\partial x_p}  &=& \frac{\lambda_p}{2} \bbE_\xi [1- \mv{q^p_{12}}],\\
\frac{\partial \mathcal A_{N,K}}{\partial y_p}  &=& \frac{\alpha_p}{2} \bbE_\xi [ \mv{p^p_{11}}- \mv{p^p_{12}}],\\
\frac{\partial \mathcal A_{N,K}}{\partial w_p}  &=& \frac{\alpha_p}{2} \bbE_\xi \mv{p^p_{11}},\\
\frac{\partial \mathcal A_{N,K}}{\partial v_p} &=& \bbE_\xi \mv{\lambda_{p+1}m^{p+1}_1 +\lambda_{p}m^p_1 }.
\eea
\begin{proof}
  The calculations follow the same method presented for \eqref{dsg_expvalsa},\eqref{dsg_expvalsb}, here we will show how to proceed for $\partial_{x_p} \mathcal A_{N,K}$:
  \bes
\partial_{x_p}\mathcal A_{N,K}(t,\vec x,\vec y,\vec w,\vec v) =\frac{1}{2 N \sqrt{x_p}}  \sum_{p=1}^{K+1}  \sum_{i=1}^{N_p} \bbE_{\xi} J_i^{p} \left\langle \sigma_i^p \right\rangle,
\ees
where we used \eqref{BG_average_HNN} in order to express the Boltzmann average; we now apply Wick's theorem for Gaussian averages and replace $J_i^p$ with $\partial_{J_i^p}$ to reach:
\bes
\partial_{x_p}\mathcal A_{N,K}(t,\vec x,\vec y,\vec w,\vec v) &=\frac{1}{2 N \sqrt{x_p}}  \sum_{p=1}^{K+1}  \sum_{i=1}^{N_p} \bbE_{\xi} \partial_{J_i^p} \left\langle \sigma_i^p \right\rangle
=\frac{1}{2 N}  \sum_{p=1}^{K+1}  \sum_{i=1}^{N_p} \bbE_{\xi}(1- \left\langle \sigma_i^p \right\rangle^2),
\ees
where, by direct application of eq.~\eqref{dhn_orderparameters}, we reach our result. The proof for the remaining derivatives works analogously and shall be omitted.
\end{proof}
\end{Lemma}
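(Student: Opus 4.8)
The plan is to mimic, essentially line by line, the proof of the previous Lemma [eqs.~\eqref{dsg_expvalsa}--\eqref{dsg_expvalsb}]: differentiate the Guerra Action \eqref{dhn_GuerraAction} with respect to each interpolating parameter, apply Wick's theorem (Gaussian integration by parts) on whichever Gaussian variables the differentiated term is built from, and then recognise the emerging one- and two-replica correlators as the order parameters of \eqref{dhn_orderparameters}. Since $\partial_{x_p}\mathcal A_{N,K}$ is already worked out in the statement, I would next dispose of the three ``easy'' derivatives and keep $\partial_t\mathcal A_{N,K}$ for last.

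For $\partial_{w_p}$ and $\partial_{v_p}$ no integration by parts is required: differentiating brings down $\tfrac12\sum_{\mu\ge2}(z_\mu^p)^2$, respectively $\sum_{i=1}^{N_{p+1}}\xi_i^{1,p+1}\sigma_i^{p+1}+\sum_{i=1}^{N_p}\xi_i^{1,p}\sigma_i^p = N(\lambda_{p+1}m_{p+1}^1+\lambda_p m_p^1)$, inside the Boltzmann average \eqref{BG_average_HNN}; using $L_p=\alpha_p N$ in the first case and the overall $1/N$ prefactor in the second, one reads off $\partial_{w_p}\mathcal A_{N,K}=\tfrac{\alpha_p}{2}\bbE_\xi\langle p_{11}^p\rangle$ and $\partial_{v_p}\mathcal A_{N,K}=\bbE_\xi\langle\lambda_{p+1}m_{p+1}^1+\lambda_p m_p^1\rangle$. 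For $\partial_{y_p}$ one differentiates $\sqrt{y_p}\sum_{\mu\ge2}\tilde J_\mu^p z_\mu^p$ and applies Wick's theorem on the auxiliary Gaussians $\tilde J_\mu^p$ (replacing $\tilde J_\mu^p$ by $\partial_{\tilde J_\mu^p}$), which yields $\tfrac1{2N}\sum_{\mu\ge2}\bbE_\xi(\langle(z_\mu^p)^2\rangle-\langle z_\mu^p\rangle^2)$; recognising $\tfrac1{L_p}\sum_\mu (z_\mu^p)^2$ and its two-replica counterpart as $p_{11}^p$ and $p_{12}^p$ gives the stated identity.

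The substantive derivative is $\partial_t\mathcal A_{N,K}$, which I would split into a \emph{signal} and a \emph{noise} piece according to the two $t$-dependent terms in \eqref{dhn_GuerraAction}. Differentiating the quadratic $\xi^1$-term costs no integration by parts, and writing $\sum_{i}\xi_i^{1,p}\sigma_i^p=N_p m_p^1$ it immediately produces $\tfrac12\bbE_\xi\sum_{p=1}^K\langle(\lambda_{p+1}m_{p+1}^1+\lambda_p m_p^1)^2\rangle$. Differentiating the $\sqrt t$-linear term brings down $\tfrac1{2\sqrt{tN}}\sum_{p}\sum_{\mu\ge2}\big(\sum_i\xi_i^{\mu,p+1}\sigma_i^{p+1}+\sum_i\xi_i^{\mu,p}\sigma_i^p\big)z_\mu^p$; here I would integrate by parts in the Gaussian bridge $z_\mu^p$ (equivalently, apply Wick on the non-retrieved patterns $\xi^{\mu,p}$ with $\mu\ge2$, Gaussian by the universality step), which turns each monomial $\xi_i^{\mu,p}\sigma_i^p z_\mu^p$ into combinations of $\langle(z_\mu^p)^2\rangle$ and $\langle\sigma_i^{p}z_\mu^{p}\rangle^2$. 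Summing over $i,\mu$ and rescaling with $L_p=\alpha_p N$, $N_p=\lambda_p N$, the first type assembles into $\alpha_p\langle p_{11}^p\rangle$ and the second into $\alpha_p\lambda_p\langle p_{12}^p q_{12}^p\rangle$; the twin contribution coming from the layer-$(p+1)$ half of bridge $p$ assembles into a further $\alpha_p\langle p_{11}^p\rangle$ and into $\alpha_p\lambda_{p+1}\langle p_{12}^p q_{12}^{p+1}\rangle$. Collecting all pieces gives \eqref{dhn_expvalsa}.

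I expect the genuine difficulty to be purely combinatorial bookkeeping rather than anything conceptual. Each Gaussian bridge $z_\mu^p$ is shared between the two adjacent layers $p$ and $p+1$, so every layer is simultaneously coupled to the bridges on both of its sides; one must keep the ``left'' and ``right'' contributions of each bridge separate while differentiating, carefully track the index ranges $1,\dots,K$ versus $1,\dots,K+1$, and verify that the would-be cross-bridge correlators reorganise into exactly the order parameters of \eqref{dhn_orderparameters}, so that the coefficient of $\langle p_{11}^p\rangle$ collapses to $\alpha_p(\lambda_p+\lambda_{p+1})$ while the mixed overlaps appear with coefficients $\alpha_p\lambda_p$ and $\alpha_p\lambda_{p+1}$. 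All the remaining algebra is identical in spirit to the MSK case and can be carried out mechanically, which is why I would, as in the excerpt, only display the computation for one derivative and omit the rest.
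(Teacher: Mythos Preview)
Your proposal follows essentially the same approach as the paper's: differentiate the Guerra Action, apply Gaussian integration by parts where applicable, and recognise the resulting correlators as the order parameters of \eqref{dhn_orderparameters}. The paper only spells out $\partial_{x_p}$ and declares the rest ``analogous''; you helpfully sketch all five derivatives, with the same logic throughout.

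One small technical slip worth flagging: the bridge variables $z_\mu^p$ are \emph{annealed} (they sit inside the partition-function integral $\int\mathcal Dz$, not under the quenched average $\mathbb E_\xi$), so Wick's theorem in the form used here does not apply to them directly. The correct route for the noise half of $\partial_t\mathcal A_{N,K}$ is integration by parts in the quenched Gaussians $\xi_i^{\mu,p}$ for $\mu\ge2$ (this is exactly what the paper does in the analogous 1RSB Lemma~\ref{lemma:2_HOP}); your parenthetical ``equivalently'' is therefore not quite accurate, though the intended computation and outcome are the same. Your closing remark about bookkeeping is apt: since each $\xi_i^{\mu,r}$ couples $\sigma_i^r$ to \emph{both} adjacent bridges $z_\mu^{r-1}$ and $z_\mu^r$, the $\partial_{\xi}$ step produces, in addition to the displayed terms, cross-bridge pieces of the type $\langle z_\mu^p z_\mu^{p\pm1}\rangle$, and one has to check (or, as the paper does, tacitly assume) that these organise consistently.
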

\begin{Remark}
In the following, we will make use of the following notation:
\bes
T^{\lambda}_p (Z)=\lambda_p Z_p+\lambda_{p+1}Z_{p+1},\quad T^{ \alpha }_p (W)=\alpha_p W_p+\alpha_{p-1}W_{p-1},
\label{eq:DHNnot}
\ees
where $Z$ and $W$ are resp. $K+1$- and $K$-dimensional vector. Such a definition will be of great help in lightening the notation.
\end{Remark}
\begin{Proposition} The Guerra Action \eqref{dhn_GuerraAction}, at finite size $N$, obeys to the following PDE:\\
\bes
\frac{\partial \mathcal A_{N,K}}{\partial t} &-\sum_{p=1}^K T^\lambda _p (\bar q) \pder{\mathcal A_{N,K}}{y_p} -  \sum_{p=1}^K T^\lambda _p (1-\bar q)  \frac{\partial \mathcal A_{N,K}}{\partial w_p}- \sum_{p=1}^K T^\lambda _p (\bar m) \frac{\partial \mathcal A_{N,K}}{\partial v_p}  -\sum_{p=1}^{K+1}  T^{ \alpha }_p(\bar p)\pder{\mathcal A_{N,K}}{x_p} =\\
& = S_N(t,\vec x,\vec y,\vec w,\vec v)+V_N(t,\vec x,\vec y,\vec w,\vec v),
\label{dhn_PDE}
\ees
where all the parameters are defined as the thermodynamic equilibrium quantities:
\bes
\bar{q}^p &:=\lim_{N\to \infty} \bbE \mv{q^p_{12}}, \quad \quad p=1\cdots K+1,\\
\bar{p}^p &:=\lim_{N\to \infty} \bbE \mv{p^p_{12}}, \quad \quad p=1\cdots K,\\
\bar{m}^p &:=\lim_{N\to \infty} \bbE \mv{m_1^p}, \quad \quad p=1\cdots K.\\
\ees
As in the previous cases, we defined the source and the potential terms resp. as
\bes
S_{N}(t,\vec x,\vec y,\vec w,\vec v) &:= -\frac{1}{2}\sum_{p=1}^K \big(T^\lambda _p (\bar m) \big)^2   - \frac{1}{2}\sum_{p=1}^K \alpha_p \bar{p}^p T^\lambda _p (1-\bar q),\\
V_N(t,\vec x,\vec y,\vec w,\vec v)& :=\frac{1}{2}\bbE_\xi \sum_{p=1}^K\mv{ \Delta^2[T^\lambda _p ( m_1)]}   - \frac{1}{2}\bbE_\xi \sum_{p=1}^K \alpha_p \big(\lambda_p \mv{\Delta p^p_{12} \Delta q^p_{12}}+ \lambda_{p+1} \mv{\Delta p^p_{12} \Delta q^{p+1}_{12}}\big).
\ees
\begin{proof}
The proof works by direct use of equations (\ref{dhn_expvalsa},\ref{dhn_expvalsb}), indeed by applying the decompositions
\bea
\mv{p^p_{12}q^p_{12}} &=& -\bar{p}^p \bar{q}^p + \bar{q}^p\mv{p^p_{12}} + \bar{p}^p\mv{q^p_{12}} + \mv{\Delta p^p_{12} \Delta q^p_{12}}, \\
\mv{p^p_{12}q^{p+1}_{12}} &=& - \bar{p}^p \bar{q}^{p+1} + \bar{q}^{p+1}\mv{p^p_{12}} + \bar{p}^p\mv{q^{p+1}_{12}} + \mv{\Delta p^p_{12} \Delta q^{p+1}_{12}}, \\
\mv{(\lambda_{p+1}m_{p+1}^1 +\lambda_{p}m_p^1)^2} &=& -(\lambda_{p+1}\bar{m}_{p+1}+\lambda_{p}\bar{m}_p) ^2 +\mv{ \Delta^2[\lambda_{p+1}m_{p+1}^1 +\lambda_{p}m_p^1]}\\
&&+2 (\lambda_{p+1}\bar{m}_{p+1} +\lambda_{p}\bar{m}_p) \mv{\lambda_{p+1}m_{p+1}^1 +\lambda_{p}m_p^1}  ,\nonumber
\eea
to \eqref{dhn_expvalsa} we can immediately write every expectation value in terms of the derivatives \eqref{dhn_expvalsb} and in terms of expectation values of $\Delta$'s which represent the fluctuations of the model.
\end{proof}
\end{Proposition}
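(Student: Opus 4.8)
The statement is a purely algebraic rearrangement of the identities of the preceding Lemma — the ones for $\partial_t$, $\partial_{x_p}$, $\partial_{y_p}$, $\partial_{w_p}$, $\partial_{v_p}$ of $\mathcal A_{N,K}$ — and follows, step by step, the derivation of the MSK transport equation \eqref{dsg_PDE}; no Gaussian calculus beyond what that Lemma supplies is needed. First I would start from \eqref{dhn_expvalsa} and replace the three quadratic Boltzmann--Gibbs averages appearing there, namely $\mv{p^p_{12}q^p_{12}}$, $\mv{p^p_{12}q^{p+1}_{12}}$ and $\mv{(\lambda_{p+1}m^1_{p+1}+\lambda_p m^1_p)^2}$, by the elementary ``center-and-fluctuate'' identities, for instance $\mv{p^p_{12}q^p_{12}}=-\bar p^p\bar q^p+\bar q^p\mv{p^p_{12}}+\bar p^p\mv{q^p_{12}}+\mv{\Delta p^p_{12}\,\Delta q^p_{12}}$ and $\mv{(\lambda_{p+1}m^1_{p+1}+\lambda_p m^1_p)^2}=-(T^\lambda_p(\bar m))^2+2\,T^\lambda_p(\bar m)\,\mv{\lambda_{p+1}m^1_{p+1}+\lambda_p m^1_p}+\mv{\Delta^2[T^\lambda_p(m_1)]}$, with the notation \eqref{eq:DHNnot}. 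Each such substitution splits a term into a constant polynomial in $\{\bar q^p,\bar p^p,\bar m^p\}$, a linear combination of the single-replica averages $\mv{q^p_{12}}$, $\mv{p^p_{12}}$, $\mv{\lambda_{p+1}m^1_{p+1}+\lambda_p m^1_p}$, and a fluctuation piece built from $\Delta$'s. Then I would invert the remaining Lemma identities to trade the single-replica averages for derivatives of $\mathcal A_{N,K}$: $\mathbb{E}_\xi\mv{q^p_{12}}=1-\tfrac{2}{\lambda_p}\partial_{x_p}\mathcal A_{N,K}$, $\mathbb{E}_\xi\mv{p^p_{11}}=\tfrac{2}{\alpha_p}\partial_{w_p}\mathcal A_{N,K}$ (so that $\mathbb{E}_\xi\mv{p^p_{12}}=\tfrac{2}{\alpha_p}(\partial_{w_p}-\partial_{y_p})\mathcal A_{N,K}$), and $\mathbb{E}_\xi\mv{\lambda_{p+1}m^1_{p+1}+\lambda_p m^1_p}=\partial_{v_p}\mathcal A_{N,K}$, moving every resulting derivative term to the left-hand side.

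\textbf{Collecting the terms.} Grouping like terms then produces the transport operator: the cross term of the magnetization square gives $\sum_p T^\lambda_p(\bar m)\,\partial_{v_p}\mathcal A_{N,K}$; combining the $\alpha_p(\lambda_p+\lambda_{p+1})\mv{p^p_{11}}$ contribution of \eqref{dhn_expvalsa} with the $\mv{p^p_{12}}$ coming from the decomposed coupling terms yields the coefficient $(\lambda_p+\lambda_{p+1})-T^\lambda_p(\bar q)=T^\lambda_p(1-\bar q)$ in front of $\partial_{w_p}$ and leaves $T^\lambda_p(\bar q)$ in front of $\partial_{y_p}$; finally the two $\alpha$-weighted pieces carrying an $x$-derivative, $\sum_{p=1}^K\alpha_p\bar p^p\,\partial_{x_p}\mathcal A_{N,K}$ and $\sum_{p=1}^K\alpha_p\bar p^p\,\partial_{x_{p+1}}\mathcal A_{N,K}$, merge after the shift $p\mapsto p-1$ in the latter into $\sum_{p=1}^{K+1}T^\alpha_p(\bar p)\,\partial_{x_p}\mathcal A_{N,K}$, the endpoint conventions $\alpha_0=\alpha_{K+1}=0$, $\lambda_0=\lambda_{K+2}=0$ disposing of the boundary layers. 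What remains on the right-hand side is, on the one hand, the polynomial $-\tfrac12\sum_p(T^\lambda_p(\bar m))^2-\tfrac12\sum_p\alpha_p\bar p^p\,T^\lambda_p(1-\bar q)$ — the residual $\alpha_p\bar p^p\,T^\lambda_p(\bar q)$ and $-\alpha_p\bar p^p(\lambda_p+\lambda_{p+1})$ terms recombining into the second summand — which is exactly $S_N$, and on the other hand the fluctuation part $\tfrac12\mathbb{E}_\xi\sum_p\mv{\Delta^2[T^\lambda_p(m_1)]}-\tfrac12\mathbb{E}_\xi\sum_p\alpha_p(\lambda_p\mv{\Delta p^p_{12}\,\Delta q^p_{12}}+\lambda_{p+1}\mv{\Delta p^p_{12}\,\Delta q^{p+1}_{12}})$, which is exactly $V_N$; this is \eqref{dhn_PDE}.

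\textbf{Where the difficulty lies.} The argument is entirely elementary, and the only delicate point is the bookkeeping. The $z$-mediated coupling ties the auxiliary fields of block $p$ to the spins of \emph{both} layers $p$ and $p+1$, so one has to keep the two weight families $\{\lambda_p\}_{p=1}^{K+1}$ and $\{\alpha_p\}_{p=1}^{K}$ straight, carry correctly the factors $\tfrac12$ left over from the Gaussian integration by parts behind the Lemma, and check that the index shift in the $x$-derivative terms genuinely reproduces $T^\alpha_p(\bar p)$ over the whole range $p=1,\dots,K+1$ with no spurious boundary contributions. Once these conventions are fixed, the identification of the source $S_N$ and of the potential $V_N$ is automatic.
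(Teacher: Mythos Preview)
Your proposal is correct and follows essentially the same approach as the paper: you apply exactly the three ``center-and-fluctuate'' decompositions the paper uses for $\mv{p^p_{12}q^p_{12}}$, $\mv{p^p_{12}q^{p+1}_{12}}$ and $\mv{(\lambda_{p+1}m^1_{p+1}+\lambda_p m^1_p)^2}$, and then invert the spatial-derivative identities of the Lemma to absorb the single-replica averages into the transport operator. The paper's proof is in fact much terser than yours --- it just lists the three decompositions and states that the rest is immediate --- whereas you spell out the bookkeeping (the $\partial_{w_p}$ coefficient, the index shift producing $T^\alpha_p(\bar p)$, the boundary conventions) that the paper leaves implicit.
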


\subsection{Replica Symmetric Solution}\label{sec:DHN_RS}
In this section, we find an explicit solution for $ \mathcal A_K(\beta)$ under the RS assumption which can be stated as\\
\begin{Definition} \label{def:HM_RSHOP}
Under the replica-symmetry assumption, the joint distribution of the two-replica overlaps and magnetizations (namely $q^p_{12}$, $p^p_{12}$ and $m_{p}$ for $p \in 1,\cdots, K$), in the thermodynamic limit, is delta-peaked at the equilibrium values, that is
\begin{equation}
\lim_{N \rightarrow + \infty} P_N(\mathbf{q}_{12}, \mathbf{p}_{12}, \mathbf{m}) =  \prod_{p=1}^K\delta (q_{12}^p - \bar{q}^p)\delta (p_{12}^p - \bar{p}^p)\delta (m^p - \bar{m}^p). \label{RSdef_HNN}
\end{equation}

\end{Definition}

\begin{Proposition} In the thermodynamic limit and under the replica-symmetry assumption, the Guerra Action is given by
\bes
\mathcal A_K^{\RS}\big(t,\vec x,\vec y,\vec w,\vec v\big) &=\log 2+\sum_{p=1}^{K+1}  \lambda_p \bbE_{\xi}\log \cosh \Big[ \xi^{1,p-1}  (v_{p-1} +t T^\lambda_{p-1} (\bar m)
)
+\xi^{1,p}  (v_p +t T^\lambda_p (\bar m) 
)
\\&+\sqrt{x_p +t T^\alpha_p (\bar p)  
} J^{p} \Big]
+\sum_{p=1}^{K} \frac{1}{2} \alpha_p \Big[\frac{y_p +t T^\lambda _p (\bar q) 
}{1-w_p -t T^\lambda _p (1-\bar q)
}
- \log(1-w_p -t  T^\lambda _p (1-\bar q)
)
\Big]\\
& -\frac{t}{2}\sum_{p=1}^K  \big(T^\lambda_p (\bar m)\big)^2
- \frac{t}{2}\sum_{p=1}^K \alpha_p \bar p^p
T^\lambda _p (1-\bar q),
\label{dhn_GArs}
\ees
where we introduced the auxiliary variables $\alpha_0=\alpha_{K+1}=0$, $\lambda_0=\lambda_{K+2}=0$ and $\xi^{1,0} = \xi^{1,K+1} = 0$.
\begin{proof}
As a direct consequence of the RS assumption, the potential term vanishes: $V_N \overset{N\to\infty}{\longrightarrow} 0$. Therefore, the PDE \eqref{dhn_PDE} becomes the transport equation
\bes
\frac{\partial \mathcal A_K^{\RS}}{\partial t} &-\sum_{p=1}^K (\lambda_p  \bar q^p+ \lambda_{p+1}  \bar q^{p+1})\pder{ \mathcal A_K^{\RS}}{y_p} -  \sum_{p=1}^K [\lambda_p(1- \bar q^p) + \lambda_{p+1}(1- \bar q^{p+1}) ] \frac{\partial \mathcal A_K^{\RS} }{\partial w_p} \\
&-  \sum_{p=1}^R (\lambda_p \bar m^p+ \lambda_{p+1} \bar m^{p+1}) \frac{\partial \mathcal A_K^{\RS}}{\partial v_p}  -\sum_{p=1}^{K+1} ( \alpha_p \bar p^p+\alpha_{p-1} \bar p^{p-1})\pder{\mathcal A_K^{\RS}}{x_p} =\\
& =-\frac{1}{2}\sum_{p=1}^K (\lambda_{p+1} \bar m^{p+1} +\lambda_{p} \bar m^p )^2   - \frac{1}{2}\sum_{p=1}^K \alpha_p \bar p^p[\lambda_p(1- \bar q^p) + \lambda_{p+1}(1- \bar q^{p+1}) ].
\label{dhn_PDEtl}
\ees
This can be readily solved by the method of characteristics which gives
\beq
\mathcal A_K^{\RS}\big(t,\vec x,\vec y,\vec w,\vec v\big) = \mathcal A_K(0,\vec x_0,\vec y_0,\vec w_0,\vec v_0)   -\frac{t}{2}\sum_{p=1}^K \big(\lambda_{p+1} \bar m^{p+1} +\lambda_{p} \bar m^p \big)^2   - \frac{t}{2}\sum_{p=1}^K \alpha_p \bar p^p\big[\lambda_p(1- \bar q^p) + \lambda_{p+1}(1- \bar q^{p+1}) \big],
\eeq
where the initial condition for the spatial coordinates ($x_{0,p}:= x_p(t=0)$ for any $p$, and similarly for $\vec y$, $\vec w$, and $\vec v$), reads as
\bes
x_{0,p}&=x_p +t  \big( \alpha_p \bar p^p+\alpha_{p-1} \bar p^{p-1}\big),\\
y_{0,p}&=y_p +t   \big(\lambda_p  \bar q^p+ \lambda_{p+1}  \bar q^{p+1}\big),\\
w_{0,p}&=w_p +t \big[\lambda_p(1- \bar q^p) + \lambda_{p+1}(1-\bar q^{p+1}) \big],\\
v_{0,p}&=v_p +t   (\lambda_p \bar m^p+ \lambda_{p+1} \bar m^{p+1}),\\
\ees
while the initial condition for the interpolating pressure can be explicitly obtained by using  \eqref{dhn_GuerraAction}. Indeed, at $t=0$ we get
\bes
\mathcal A_{N, K}(0,\vec x_0,\vec y_0,\vec w_0,\vec v_0)& =\frac{1}{N} \bbE_{\xi} \log \sum_{\boldsymbol \sigma}\int \calD z \, \exp \Big[\sum_{p=1}^K  \Big(\sum_{i=1}^{N_{p+1}} {\xi}_i^{1,p+1} \sigma_i^{p+1}+\sum_{i=1}^{N_{p}} \xi_i^{1,p} \sigma_i^{p}\Big) v_{0,p}  \\
&+ \sum_{p=1}^{K+1} \sqrt{x_{0,p}} \sum_{i=1}^{N_p} J_i^{p} \sigma_i^p+ \sum_{p=1}^{K} \sqrt{y_{0,p}} \sum_{\mu=2}^{L_p} \tilde{J}_\mu^{p} z_\mu^p+\frac{1}{2} \sum_{p=1}^{K} w_{0,p} \sum_{\mu=2}^{L_p}( z_\mu^p)^2\Big]=\\
&=\bbE_{\xi}\sum_{p=1}^{K+1}  \lambda_p\log  \Big[2\cosh ( {\xi}^{1,p}  v_{0,p-1}+\xi^{1,p}  v_{0,p} + \sqrt{x_{0,p}} J^{p} )\Big]\\
&+\sum_{p=1}^{K} \frac{\alpha_p}{2}  \Big[\frac{y_{0,p}}{1-w_{0,p}} - \log(1-w_{0,p})\Big].
\label{dhn_GAinit}
\ees
Putting all pieces together and recalling the notation \eqref{eq:DHNnot}, we get the thesis.
\end{proof}
\end{Proposition}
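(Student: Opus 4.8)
The plan is to reproduce, in the DHN setting, the strategy used for the RS solution of the MSK model (cf.\ the derivation of \eqref{dsg_GArs}): exploit the RS ansatz to kill the potential term, turn the finite-$N$ PDE \eqref{dhn_PDE} into a plain transport equation, integrate it along straight characteristics, and close the argument by computing the Cauchy datum at $t=0$. First I would invoke Definition~\ref{def:HM_RSHOP}: the overlaps $q^p_{12},p^p_{12}$ and the Mattis magnetizations $m^p$ concentrate on their equilibrium values in the thermodynamic limit, so the centered quantities $\Delta q^p_{12}$, $\Delta p^p_{12}$, $\Delta[T^\lambda_p(m_1)]$ vanish; a Cauchy--Schwarz bound on the mixed terms in $V_N$ then gives $\lim_{N\to\infty}V_N=0$, and \eqref{dhn_PDE} reduces to the linear transport equation \eqref{dhn_PDEtl}, whose source, being evaluated on $(\bar q,\bar p,\bar m)$, does not depend on the space variables $(\vec x,\vec y,\vec w,\vec v)$.

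Next I would integrate \eqref{dhn_PDEtl} by the method of characteristics. The transport operator has constant coefficients, so the characteristics are the straight lines read off directly from it, namely $x_{0,p}=x_p+t\,T^\alpha_p(\bar p)$, $y_{0,p}=y_p+t\,T^\lambda_p(\bar q)$, $w_{0,p}=w_p+t\,T^\lambda_p(1-\bar q)$, $v_{0,p}=v_p+t\,T^\lambda_p(\bar m)$; since the source is constant in the space variables, the action grows linearly along a characteristic, so $\mathcal A_K^{\RS}(t,\vec x,\vec y,\vec w,\vec v)=\mathcal A_K(0,\vec x_0,\vec y_0,\vec w_0,\vec v_0)+t\,S^{\RS}$.

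It then remains to evaluate the initial datum $\mathcal A_K(0,\vec x_0,\dots)$. At $t=0$ the quadratic signal term and the $z$-linear slow-noise term in \eqref{dhn_GuerraAction} disappear (each carries a power of $\sqrt t$), leaving a one-body model that factorizes completely. The binary sector factorizes over sites, each contributing $\sum_{\sigma=\pm1}\exp[\xi_i^{1,p}(v_{0,p-1}+v_{0,p})\sigma+\sqrt{x_{0,p}}J_i^p\sigma]=2\cosh(\cdots)$; taking logarithms, summing over the $N_p$ sites and using that the pairs $(\xi_i^{1,p},J_i^p)$ are i.i.d.\ across a layer collapses the product into $\sum_p\lambda_p\,\mathbb{E}_\xi\log 2\cosh(\cdots)$ with one representative pair. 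The slow-noise sector factorizes over the $L_p-1$ Gaussian modes, each giving $\int\mathcal{D}z\,\exp[\sqrt{y_{0,p}}\,\tilde{J}z+\tfrac12 w_{0,p}z^2]=(1-w_{0,p})^{-1/2}\exp\!\big(\tfrac{y_{0,p}}{2(1-w_{0,p})}\big)$ (convergent for $w_{0,p}<1$); its logarithm, multiplied by $L_p$ and divided by $N$, becomes $\tfrac{\alpha_p}{2}\big[\tfrac{y_{0,p}}{1-w_{0,p}}-\log(1-w_{0,p})\big]$. This reproduces \eqref{dhn_GAinit}. Inserting the characteristic data from the previous step, adding $t\,S^{\RS}$, and rewriting the sums with the notation \eqref{eq:DHNnot} and the boundary conventions $\alpha_0=\alpha_{K+1}=0$, $\lambda_0=\lambda_{K+2}=0$, $\xi^{1,0}=\xi^{1,K+1}=0$ then gives \eqref{dhn_GArs}.

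The hard part is this last computation, and within it the Gaussian integration over the slow-noise fields $z_\mu^p$ in the presence of the quadratic coupling $\tfrac12 w_p z^2$: one has to complete the square, keep the $(1-w)^{-1/2}$ (equivalently $\det(1-w)^{-1/2}$) factor, and verify that the characteristic flow keeps $w_{0,p}<1$ so that the interpolating partition function, and hence the whole construction, stays well defined. The other delicate point is the index bookkeeping --- the binary layers run over $p=1,\dots,K+1$ while the $z$-modes and the magnetizations run over $p=1,\dots,K$ --- together with the boundary terms, which is exactly what the auxiliary zero-variables are meant to streamline. The reduction in the first step and the final collection are otherwise routine.
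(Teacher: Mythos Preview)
Your proposal is correct and follows essentially the same route as the paper: use the RS concentration to kill the potential $V_N$, reduce \eqref{dhn_PDE} to the constant-coefficient transport equation \eqref{dhn_PDEtl}, integrate along the straight characteristics you wrote down, and compute the $t=0$ Cauchy datum by factorizing the one-body system into its binary and Gaussian sectors. The only slip is that the single-mode Gaussian integral should carry a $\tilde J^2$ in the exponent, $\exp\!\big(\tfrac{y_{0,p}\tilde J^2}{2(1-w_{0,p})}\big)$, which then averages to your stated expression after the quenched expectation over $\tilde J$; this does not affect the argument.
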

\begin{Theorem} The replica-symmetric intensive quenched pressure for the DHN model (\ref{def:H_DHN}) is obtained
\bes
\mathcal A^{\RS}_K(\beta) &=\log 2+\sum_{p=1}^{K+1}  \lambda_p \bbE_{\xi}\log \cosh \big( \beta {\xi}^{1,p}   T^\lambda_{p-1}(\bar m) 
+  \beta\xi^{1,p}
T^\lambda_p (\bar m)
+\sqrt{\beta 
T^\alpha_p (\bar p)
}
 J^{p} \big)\\
&+\sum_{p=1}^{K} \frac{\alpha_p}{2}  \Big[\frac{\beta T^\lambda_p (\bar q)
}{1
-\beta T^\lambda_p  (1-\bar q)
}
- \log(1
-\beta T^\lambda _p (1-\bar q)
)
 \Big]
 - \frac{\beta}{2}\sum_{p=1}^K \big(
T^\lambda _p (\bar m)
\big)^2   - \frac{\beta}{2}\sum_{p=1}^K \alpha_p \bar p^p
T^\lambda _p (1-\bar q).
\label{dhn_IPrs}
\ees
\begin{proof}
Recalling that the pressure \eqref{dhn_GuerraAction} for the original model can be obtained by setting $t=\beta$ and $\vec x,\vec y,\vec w,\vec v=0$ in the interpolating quenched pressure, we just need to plug these values in the expression \eqref{dhn_GAinit} found before.
\end{proof}
\end{Theorem}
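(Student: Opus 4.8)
The plan is to obtain \eqref{dhn_IPrs} as a direct specialization of the replica-symmetric Guerra Action \eqref{dhn_GArs}, which was already derived in the preceding Proposition by solving the transport equation \eqref{dhn_PDEtl} along its characteristics with the one-body Cauchy datum \eqref{dhn_GAinit}. Under the RS assumption (Definition~\ref{def:HM_RSHOP}) the potential $V_N$ vanishes in the thermodynamic limit, so the interpolating pressure coincides with $\mathcal A_K^{\RS}(t,\vec x,\vec y,\vec w,\vec v)$; on the other hand, by the Remark following \eqref{dhn_GuerraAction}, the original quenched pressure of the DHN is recovered at $t=\beta$ and $\vec x=\vec y=\vec w=\vec v=0$. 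Hence $\mathcal A^{\RS}_K(\beta)$ equals $\mathcal A_K^{\RS}$ evaluated at $t=\beta$, $\vec x=\vec y=\vec w=\vec v=0$, and it only remains to perform this substitution in \eqref{dhn_GArs}.

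Concretely, I would set $t=\beta$ and all space coordinates to zero inside the characteristic-shifted arguments of \eqref{dhn_GArs}, using
\begin{equation}
x_p + t\,T^\alpha_p(\bar p)\ \mapsto\ \beta\,T^\alpha_p(\bar p),\qquad
v_p + t\,T^\lambda_p(\bar m)\ \mapsto\ \beta\,T^\lambda_p(\bar m),
\end{equation}
\begin{equation}
y_p + t\,T^\lambda_p(\bar q)\ \mapsto\ \beta\,T^\lambda_p(\bar q),\qquad
1-w_p - t\,T^\lambda_p(1-\bar q)\ \mapsto\ 1-\beta\,T^\lambda_p(1-\bar q).
\end{equation}
With these replacements the $\log\cosh$ block, the Gaussian $z$-integral block $\frac{\alpha_p}{2}\big[\frac{y_p+\cdots}{1-w_p-\cdots}-\log(1-w_p-\cdots)\big]$, and the source term on the last line of \eqref{dhn_GArs} turn precisely into the three groups of terms appearing in \eqref{dhn_IPrs}, the prefactor $t/2$ becoming $\beta/2$. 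Collecting the pieces and keeping the shorthand \eqref{eq:DHNnot} yields the claimed formula; no genuinely new computation is involved.

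There is no substantial obstacle here — this Theorem is essentially a corollary of the previous Proposition — but two points deserve to be spelled out. First, the Gaussian integration over the auxiliary fields $z_\mu^p$ that produced the block $\frac{\alpha_p}{2}[\cdots]$ is licit only when the corresponding quadratic form has positive coefficient, i.e. $1-\beta\,T^\lambda_p(1-\bar q)>0$ for every $p$; this inequality (together with the implicit assumption that the RS fixed point is the relevant one) delimits the replica-symmetric regime in which \eqref{dhn_IPrs} holds, and it should be stated explicitly. Second, the order parameters $\bar q^p,\bar p^p,\bar m^p$ in \eqref{dhn_IPrs} are still free: the Theorem only asserts the closed form of the RS action in terms of them, their equilibrium values being fixed afterwards by extremizing $\mathcal A^{\RS}_K(\beta)$ with respect to each of them, in full analogy with Corollary~\ref{cor:SC_SK} for the MSK model — a self-consistency step naturally deferred to a subsequent corollary.
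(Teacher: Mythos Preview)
Your proposal is correct and follows exactly the paper's own approach: the Theorem is obtained as a one-line specialization of the RS Guerra Action \eqref{dhn_GArs} by setting $t=\beta$ and $\vec x=\vec y=\vec w=\vec v=0$, precisely as the paper does. Your additional remarks on the positivity condition $1-\beta\,T^\lambda_p(1-\bar q)>0$ and on the deferred extremization over $\bar q^p,\bar p^p,\bar m^p$ are not in the paper's proof but are accurate and helpful clarifications.
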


\begin{Corollary} The replica-symmetric expectation for the order parameters $q_{12}^p, p_{12}^p, m^p$ obey the following self-consistency equations:
\begin{align}
\label{eq:m1}
\bar{m}^p&= \mathbb{E} \left[ \tanh (g_p(\mathbf J))\right], \ \ \ & \  p=1, \hdots, K+1 ,\\
\label{eq:q1}
\bar{q}^p&= \mathbb{E} \left[ \tanh^2 (g_p(\mathbf J))\right] ,\ \ \ & \  p=1, \hdots, K+1, \\
\label{eq:p1}
\bar{p}^p &= \frac{\beta T^\lambda _p (\bar q)
}{1-\beta
T^\lambda _p (1-\bar q)
} ,\ \ \ & \ p=1, \hdots, K,
\end{align}
where
\begin{equation}
  \label{eq:gfield_rs_dhn}
  g_p(\mathbf J):=\Big[ \beta {\xi}^{1,p}\big(T^\lambda _{p-1}(\bar m)+T^\lambda _p (\bar m)
 \big) +\sqrt{\beta 
 T^\alpha_p (\bar p)	
 } J^{p} \Big],
 \end{equation}
and $\bbE$ stands for the averaging operator acting as $\bbE f(\mathbf{J}) := \int \prod_{p=1}^{K+1}\left(\frac{dJ^p}{\sqrt{2\pi}}e^{-(J^p)^2/2}\right) ~f(\mathbf{J})$.

\begin{proof}
	The thesis follows by imposing the extremality condition of the intensive quenched pressure in the thermodynamic limit w.r.t. to all of the order parameters. After straightforward (but simple) computations, one directly obtain the thesis.
\end{proof}
\end{Corollary}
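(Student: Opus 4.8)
The equilibrium values $\bar m^p,\bar q^p,\bar p^p$ are the stationary point of the RS quenched pressure \eqref{dhn_IPrs}, so the plan — as for the MSK model in Corollary~\ref{cor:SC_SK} — is to impose $\partial_{\bar m^p}\mathcal A_K^{\RS}=\partial_{\bar q^p}\mathcal A_K^{\RS}=\partial_{\bar p^p}\mathcal A_K^{\RS}=0$ for every $p$ and to reduce the resulting relations to \eqref{eq:m1}--\eqref{eq:p1}.

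Concretely, I would differentiate \eqref{dhn_IPrs} term by term, keeping track of the fact that each order parameter with label $p$ enters two of the contracted combinations $T^\lambda_q(\cdot),T^\alpha_q(\cdot)$ — three in the case of $\bar m^p$, which also sits in the source $-\tfrac{\beta}{2}\sum_q\big(T^\lambda_q(\bar m)\big)^2$: e.g.\ $\bar q^p$ appears inside $T^\lambda_{p-1}(1-\bar q)$ and $T^\lambda_p(1-\bar q)$, $\bar p^p$ inside $T^\alpha_p(\bar p)$ and $T^\alpha_{p+1}(\bar p)$, and $\bar m^p$ inside $T^\lambda_{p-1}(\bar m)$ and $T^\lambda_p(\bar m)$. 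Differentiating the $\log\cosh$ block produces expectations of the form $\mathbb E[\xi^{1,q}\tanh g_q]$ and $\mathbb E[J^q\tanh g_q]$: the former reduces, using $(\xi^{1,q})^2=1$ and the symmetry of the law of $J^q$, to the magnetization average appearing in \eqref{eq:m1}, while the latter I would turn by Gaussian integration by parts into $\sqrt{\beta\,T^\alpha_q(\bar p)}\,\mathbb E\big[1-\tanh^2 g_q\big]$. Differentiating the remaining deterministic block $\tfrac{\alpha_q}{2}\big[\tfrac{\beta T^\lambda_q(\bar q)}{1-\beta T^\lambda_q(1-\bar q)}-\log\big(1-\beta T^\lambda_q(1-\bar q)\big)\big]$ together with $-\tfrac{\beta}{2}\alpha_q\bar p^q T^\lambda_q(1-\bar q)$ is a short one-variable algebraic calculation.

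Each of the three stationarity conditions then reads as a sum over neighbouring layer indices of per-layer brackets carrying distinct products of the $\lambda$'s and $\alpha$'s; since \eqref{dhn_IPrs} holds for arbitrary layer fractions, each bracket must vanish by itself, exactly as in Corollary~\ref{cor:SC_SK}. The bracket coming from $\partial_{\bar p^p}\mathcal A_K^{\RS}=0$ yields \eqref{eq:q1}; the one from $\partial_{\bar m^p}\mathcal A_K^{\RS}=0$ yields \eqref{eq:m1} with $g_p$ given by \eqref{eq:gfield_rs_dhn}; and the one from $\partial_{\bar q^p}\mathcal A_K^{\RS}=0$, after clearing the common denominator $1-\beta T^\lambda_p(1-\bar q)$, yields \eqref{eq:p1}.

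I expect the only genuinely delicate point to be the bookkeeping of the second step: one must be consistent about which copy of a given order parameter is being varied, the decoupling into individual brackets relies on the fact that the $T$-combinations containing a fixed $\bar q^p$ (or $\bar m^p$) come with distinct layer-fraction prefactors, and the $\tanh^2$-averages generated by the integration by parts should be carried along symbolically and matched to \eqref{eq:q1} only at the very end, so that no self-consistency relation is used in its own derivation. Everything else is the routine algebra already flagged in the statement.
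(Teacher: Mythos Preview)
Your proposal is correct and follows exactly the route indicated in the paper's own proof, which merely states that the self-consistency equations arise from extremizing $\mathcal A_K^{\RS}$ with respect to all order parameters and leaves the details as ``straightforward (but simple) computations.'' You have supplied those details --- the neighbouring-layer bookkeeping for the $T^\lambda$ and $T^\alpha$ combinations, the Gaussian integration by parts producing the $\tanh^2$ averages, and the decoupling argument via arbitrariness of the layer fractions --- in the same spirit as the MSK case treated earlier, so there is nothing to add.
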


\begin{Corollary}
In the single layer case, we recover the Amit-Gutfreund-Sompolinsky (AGS) intensive quenched pressure \cite{AGS}.

\begin{proof}
First of all, we fix $K=1$. In this case, the intensive pressure reads
\bes
\label{eq:2laDHN}
\mathcal A_K^{\RS}(\beta) &=\log 2+ (\lambda_1+\lambda_2) \bbE_{J}\log \cosh \Big[ \beta (\lambda_1 \bar{m}_1+ \lambda_{2} \bar{m}_2) +\sqrt{\beta  \alpha_1 \bar{p}^1} J \Big]\\
&+\frac{1}{2} \alpha_1 \Big[\frac{\beta \big(\lambda_1  \bar{q}^1+ \lambda_{2}  \bar{q}^{2}\big)}{1 -\beta \lambda_1 (1-\bar{q}^1) -\beta \lambda_{2}(1-\bar{q}^{2}) }- \log\big(1 -\beta \lambda_1(1-\bar{q}^1) -\beta \lambda_{2}(1-\bar{q}^{2})\big)\Big]\\
& -\frac{\beta}{2}\big(\lambda_{2}\bar{m}_2 +\lambda_{1}\bar{m}_1 \big)^2   - \frac{\beta}{2}\alpha_1 \bar{p}^1\big[\lambda_1(1-\bar{q}^1) + \lambda_{2}(1-\bar{q}^{2}) \big],
\ees
Since for $K=1$ the DHN reduces to the two-layer network, in order to re-obtain the standard Hopfield model we should identify
\bes
&\alpha_1=\alpha,\\
&\lambda_1 \bar{m}_1+\lambda_2 \bar{m}_2 = \bar m,\\
&\lambda_1  \bar{q}^1+ \lambda_{2}  \bar{q}^{2} = \bar q,\\
& \bar{p}^1 =\bar p,
\ees
from which, using $\lambda_1+\lambda_2=1$, we obtain
\bes
A_K^{\RS}(\beta) &=\log 2+ \bbE_{J}\log \cosh \Big[ \beta \bar m +\sqrt{\beta \alpha\bar p} J \Big] -\frac{\beta}{2}\bar{m}^2   - \frac{\beta}{2}\alpha \bar p(1-\bar q)\\
&+\frac{\alpha}{2} \frac{\beta \bar q}{1 -\beta (1-\bar q) } -\frac{\alpha}{2} \log\big(1 -\beta (1-\bar q)\big),
\ees
which equals the AGS free energy, thus proving the thesis.
\end{proof}
\end{Corollary}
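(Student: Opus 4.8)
The plan is to specialise the general replica-symmetric expression \eqref{dhn_IPrs} to $K=1$ and to verify that it collapses onto the Amit--Gutfreund--Sompolinsky free energy \cite{AGS}; conceptually, the point is that the $K=1$ instance of the DHN is nothing but the standard Hopfield model.

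First I would set $K=1$, so that the network has two layers ($p=1,2$, with $\lambda_1+\lambda_2=1$) joined by a single interaction block, and I would evaluate the combinators $T^\lambda_p$, $T^\alpha_p$ appearing in \eqref{dhn_IPrs} using the boundary conventions $\lambda_0=\lambda_3=0$, $\alpha_0=\alpha_2=0$, $\xi^{1,0}=\xi^{1,2}=0$. One finds that the only surviving layer combination is $\lambda_1(\cdot)^1+\lambda_2(\cdot)^2$ (applied to $\bar m$, $\bar q$, $1-\bar q$), that $T^\alpha_1(\bar p)=T^\alpha_2(\bar p)=\alpha_1\bar p^1$, and that the two $\cosh$-terms in \eqref{dhn_IPrs} carry the identical argument $\beta(\lambda_1\bar m^1+\lambda_2\bar m^2)+\sqrt{\beta\alpha_1\bar p^1}\,J$ once the vanishing boundary fields are dropped. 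Collecting the pieces yields precisely the two-layer pressure \eqref{eq:2laDHN}.

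Next I would introduce the global order parameters of the merged network, $\alpha:=\alpha_1$, $\bar m:=\lambda_1\bar m^1+\lambda_2\bar m^2$, $\bar q:=\lambda_1\bar q^1+\lambda_2\bar q^2$, $\bar p:=\bar p^1$, and substitute them into \eqref{eq:2laDHN}. The prefactor $\lambda_1+\lambda_2$ of the $\cosh$-term becomes $1$; the magnetisation field becomes $\beta\bar m$; and, using $\lambda_1+\lambda_2=1$ once more, every denominator $1-\beta\lambda_1(1-\bar q^1)-\beta\lambda_2(1-\bar q^2)$ collapses to $1-\beta(1-\bar q)$, while $\alpha_1\bar p^1[\lambda_1(1-\bar q^1)+\lambda_2(1-\bar q^2)]$ becomes $\alpha\bar p(1-\bar q)$. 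What survives is $\log 2+\bbE_J\log\cosh(\beta\bar m+\sqrt{\beta\alpha\bar p}\,J)-\tfrac{\beta}{2}\bar m^2-\tfrac{\beta}{2}\alpha\bar p(1-\bar q)+\tfrac{\alpha}{2}\tfrac{\beta\bar q}{1-\beta(1-\bar q)}-\tfrac{\alpha}{2}\log(1-\beta(1-\bar q))$, i.e. the AGS pressure, which proves the claim.

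The conceptual reason the identifications above are the correct ones --- and the only point deserving a line of explanation rather than pure algebra --- is that for $K=1$ the Hamiltonian \eqref{eq:hamilH} is literally the Hopfield Hamiltonian on $N=N_1+N_2$ neurons: writing $\sigma=(\sigma^1,\sigma^2)$ and $\xi^\mu=(\xi^{\mu,1},\xi^{\mu,2})$ one has $\big(\sum_i\xi_i^{\mu,2}\sigma_i^2+\sum_i\xi_i^{\mu,1}\sigma_i^1\big)^2=\big(\sum_{i=1}^N\xi_i^\mu\sigma_i\big)^2$, so the intra- and inter-layer Hebbian couplings reassemble into the full Hopfield coupling matrix, the global Mattis magnetisation is $m=\lambda_1 m^1+\lambda_2 m^2$, and the global overlap is $q=\lambda_1 q^1+\lambda_2 q^2$. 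I do not anticipate any genuine obstacle: the derivation is bookkeeping, the only care needed being the consistent use of the boundary conventions together with the normalisation $\lambda_1+\lambda_2=1$.
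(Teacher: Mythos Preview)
Your proposal is correct and follows essentially the same route as the paper's own proof: specialise \eqref{dhn_IPrs} to $K=1$, obtain the two-layer expression \eqref{eq:2laDHN}, introduce the global parameters $\alpha,\bar m,\bar q,\bar p$ as weighted combinations, and use $\lambda_1+\lambda_2=1$ to collapse everything onto the AGS functional. Your final paragraph, explaining that the $K=1$ Hamiltonian \eqref{eq:hamilH} is literally the Hopfield Hamiltonian on the concatenated configuration $(\sigma^1,\sigma^2)$ and thereby justifying the identifications, is a useful addition that the paper leaves implicit; otherwise the arguments coincide.
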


\begin{Remark}
We can analyze the zero-temperature limit $\beta\to\infty$ of the self-consistency equations \eqref{eq:m1}-\eqref{eq:p1}. In particular, we see that
\bes
\label{eq:lim_for_m1}
\bar m^p = \int \frac{dJ_p}{\sqrt{2\pi}} \exp\Big(-\frac{J_p^2}{2}\Big) \tanh (g_p(\mathbf J)) \xrightarrow{\beta\rightarrow + \infty}\int \frac{dJ_p}{\sqrt{2\pi}} \exp\Big(-\frac{J_p^2}{2}\Big) \sign (g_p(\mathbf J)) =  \textnormal{erf}\Big( \frac{A_p}{\sqrt{2 B_p}} \Big),
\ees
\begin{equation}
\label{eq:lim_for_q}
\bar q^p = \int \frac{dJ_p}{\sqrt{2\pi}} \exp\Big(-\frac{J_p^2}{2}\Big) \tanh^2 (g_p(\mathbf J)) \xrightarrow{\beta\rightarrow + \infty}\int \frac{dJ_p}{\sqrt{2\pi}} \exp\Big(-\frac{J_p^2}{2}\Big) = 1,
\end{equation}
where explicitely $A_p = {\xi}^{1,p}(\lambda_{p-1} \bar{m}^{p-1}+ 2\lambda_{p} \bar{m}^{p}+ \lambda_{p+1} \bar{m}^{p+1})$ and $B_p = \alpha_p \bar{p}^p+\alpha_{p-1} \bar{p}^{p-1}$. Since $\bar q^p\to 1 $ as $\beta\to\infty$, we can define $C_p =\beta (1-\bar q^p)$, which satisfy the following self-consistency equation:
\bes
C_p := \beta(1-\bar{q}^p)&=\frac{\partial}{\partial {A_p}}\mathbb{E}\tanh (g_p(\mathbf J))\xrightarrow{\beta\rightarrow + \infty} \frac{\partial}{\partial {A_p}}  \textnormal{erf}\Big( \frac{A_p}{\sqrt{2 B_p}} \Big)=  \sqrt{\frac{2}{\pi B_p}}\exp\Big( - \frac{A_p^2}{2B_p} \Big).
\ees
Therefore, for $\beta \rightarrow  \infty$, exploiting the previous results we are left with a new set of conditions:
\begin{align}
&\bar{q}^p = 1 ,\\
& \bar{m}^p = \textnormal{erf} \left( \frac{{\xi}^{1,p}(\lambda_{p-1} \bar{m}_{p-1}+ 2\lambda_{p}, \bar{m}_{p}+\lambda_{p+1} \bar{m}_{p+1})}{\sqrt{2(\alpha_p \bar{p}^p+\alpha_{p-1} \bar{p}^{p-1})}}\right), \\
&\bar{p}^p = \frac{\lambda_p+ \lambda_{p+1}}{1- \lambda_p C_p - \lambda_{p+1}C_{p+1}},\\
&C_p =  \sqrt{\frac{2}{\pi B_p}}\exp\Big( - \frac{A_p^2}{2B_p} \Big).
\end{align}
\end{Remark}

\subsection{1 RSB solution} \label{sec:DHN_1RSB}
In this section, we find an explicit solution for $A_K(\beta)$ under the 1RSB assumption which can be stated as

\begin{Definition} \label{def:MHM_RSB}
In the first step of replica-symmetry breaking, the distribution of the two-replica overlap $q$, in the thermodynamic limit, displays two delta-peaks at the equilibrium values, referred to as $\bar{q}_1^p,\ \bar{q}_2^p$. The concentration on the two values is ruled by $\theta \in [0,1]$, namely for each $p=1, \hdots , K$, that is
\begin{equation}
\lim_{N \rightarrow + \infty} P'_N(q^p) = \theta \delta (q^p - \bar{q}^p_1) + (1-\theta) \delta (q^p - \bar{q}^p_2). \label{limforq2_HOP}
\end{equation}
Similarly, for the overlap $p$, denoting with $\bar{p}_1,\ \bar{p}_2$ the equilibrium values, we have
\begin{equation}
\lim_{N \rightarrow + \infty} P''_N(p^p) = \theta \delta (p^p - \bar{p}^p_1) + (1-\theta) \delta (p^p - \bar{p}^p_2). \label{limforp2_HOP}
\end{equation}
The magnetization $m^1_p$ still self-averages at $ \bar{m}_p$ in the thermodynamic limit.
\end{Definition}

\begin{Definition}
Given $\vec x = (\bm x^{(1)}, \bm x^{(2)},  \bm y^{(1)},  \bm y^{(2)}, \bm w, \bm v)\in \mathbb{R}^{2(K+1)+4K}$, $t \in \mathbb{R}^{+}$ as interpolating parameters and the i.i.d. auxiliary fields $\{J_i^{r,(1)}, J_i^{r,(2)}\}_{i=1,...,N_p,\  r=1, \hdots , K+1}$, $\{\tilde J_\mu^{r,(1)},\tilde J_\mu^{r, (2)}\}_{i=1,...,N_p, \ r=1, \hdots
, K}$ with $J_i^{(1,2)}, \tilde J_{\mu}^{(1,2)} \sim \mathcal N(0,1)$ for $\mu=2,...,L_p$, we can write the 1-RSB interpolating partition function $\mathcal Z_N(t, \vec x)$ for the multilayer Hopfield model recursively, starting by
\begin{eqnarray}
\label{eqn:Z2_HOP}
\mathcal Z_2(t, \vec x) &=&\sum_{\bm \sigma} \int D \bm \tau \exp \Big[ \frac{t}{2N} \sumpp \Big( \sum_{i=1}^{N_{p+1}} \xi_i^{1, p+1} \si^{p+1} + \sum_{i=1}^{N_{p}} \xi_i^{1,p} \si^p \Big)^2    \notag \\
&+&\sqrt{\frac{t}{N}} \sumpp \sum_{\mu=1}^{L_p} \Big( \sum_{i=1}^{N_{p+1}} \xi_i^{\mu, p+1} \si^{p+1}
+ \sum_{i=1}^{N_p} \xi_i^{\mu,p} \si^p \Big) \tau_\mu^p + \sumpp\Big( \sum_{i=1}^{N_{p+1}}  \xi_i^{1, p+1} \si^{p+1}
+ \sum_{i=1}^{N_p} \xi_i^{1,p} \si^p \Big)v_p \notag \\
&+& \suma \sum_{p=1}^{K+1} \sqrt{x_p^{(a)}} \sum_{i=1}^{N_p} J_i^{p, (a)} \si^p +  \suma \sumpp \sqrt{y_p^{(a)}} \sum_{\mu =2}^{L_p} \tilde{J}_\mu^{p, (a)} \tau_\mu^p + \frac{1}{2} \sumpp w_p \sum_{\mu=2}^{L_p} (\tau_\mu^p)^2\Big],
\end{eqnarray}
where the $\xi_i^\mu$'s are i.i.d. standard Gaussian random variables for $\mu \ge 2$.
Averaging out the fields recursively, we define
\begin{align}
\label{eqn:Z1_HOP}
\mathcal Z_1(t, \vec x) \coloneqq& \mathbb E_2 \big [ \mathcal Z_2(t, \vec x )^\theta \big]^{1/\theta}, \\
\label{eqn:Z0_HOP}
\mathcal Z_0(t, \vec x) \coloneqq&  \exp \mathbb E_1 \left[ \log \mathcal Z_1(t, \vec x) \right ] ,\\
\mathcal Z_{N, K}(t, \vec x) \coloneqq& \mathcal Z_0(t, \vec x) ,
\end{align}
where with $\mathbb E_a$ we mean the average over the variables $\tilde J_i^{p,(a)}$'s and $\tilde J_\mu^{p,(a)}$'s, for $a=1, 2$, and with $\mathbb{E}_0=\mathbb{E}_{\xi}$ we shall denote the average over the variables $\xi_i^{p,\mu}$ for $\mu \ge2$.
\end{Definition}

\begin{Definition}
The 1RSB interpolating pressure, at finite volume $N$, is introduced as
\begin{equation}\label{AdiHOP1RSB}
\mathcal A_{N, K} (t, \vec x) \coloneqq \frac{1}{N} \mathbb E_0 \left [ \log \mathcal Z_{N,K}(t, \vec x ) \right],
\end{equation}
and, in the thermodynamic limit,
\begin{equation}
\mathcal A_K (t, \vec x) \coloneqq \lim_{N \to \infty} \mathcal A_{N,K} (t, \vec x ).
\end{equation}
By setting $t=\beta, \vec x= \bm 0$, the interpolating pressure recovers the standard pressure (\ref{dsg_BareA}), that is, $\mathcal A_{N,K}(\beta, J) = \mathcal A_{N,K} (t =\beta, \vec x =\bm 0)$.
\end{Definition}

\begin{Remark}
In order to lighten the notation, hereafter we use the following
\begin{align}
\label{eq:par1}
 \langle q_{12}^p \rangle_1 \coloneqq \ &  \mathbb{E}_0 \mathbb{E}_1 \Big[\frac{1}{N_p}\sum_{i=1}^{N_p} \Big( \mathbb E_2 \big[\mathcal W_2\omega(\sigma_i^p)\big] \Big)^2\ \Big] ,\\
\langle q_{12}^p \rangle_2 \coloneqq \ & \mathbb{E}_0 \mathbb{E}_1 \mathbb{E}_2 \Big[\mathcal W_2\frac{1}{N_p}\sum_{i=1}^{N_p} \omega^2(\sigma_i^p) \Big] ,\\
\langle p_{11}^p \rangle \coloneqq\ & \mathbb{E}_0 \mathbb{E}_1 \mathbb{E}_2 \Big[\mathcal W_2\frac{1}{L_p}\sum_{\mu=2}^{L_p} \omega(\tau_\mu^p)^2 \Big],
\end{align}
\begin{align}
 \langle q_{12}^p p_{12}^{p} \rangle_1 \coloneqq\ &  \mathbb{E}_0 \mathbb{E}_1 \Big[\frac{1}{N_p L_{p}}\sum_{i=1}^{N_p} \sum_{\mu=2}^{L_p} \left( \mathbb E_2 \big[\mathcal W_2\omega(\sigma_i^p \tau_\mu^p)\big] \right)^2 \Big] ,\\
\langle q_{12}^p p_{12}^{p} \rangle_2 \coloneqq\ & \mathbb{E}_0 \mathbb{E}_1 \mathbb{E}_2 \Big[\mathcal W_2\frac{1}{N_p L_p}\sum_{i=1}^{N_p} \sum_{\mu=2}^{L_p} \omega^2(\sigma_i^p \tau_\mu^p) \Big],
\label{eq:par2}
\end{align}
where we defined the weight
\begin{equation}
\mathcal W_2 \coloneqq \frac{\mathcal Z_2^\theta}{ \mathbb E_2 \left [\mathcal Z_2^\theta \right ]}.
\end{equation}

Moreover, in order to simplify the following notation, we define $Q_a^p \coloneqq \lambda_p \qb_a^p$, $P_a^p \coloneqq \alpha_p \bar{p}_a^p$, $M^p=\lambda_p \mb^p$, for $a=1,2$ and $p=1,\hdots K$. 
\end{Remark}

\begin{Lemma} \label{lemma:2_HOP}
The partial derivatives of the interpolating quenched pressure read as
\begin{align}
\label{eq:2appHOP}
\frac{\partial\mathcal A_{N,K} }{\partial t} &= \ \frac{1}{2} \sumpp  \{ \langle (\lambda_{p+1} m_p^1 + \lambda_p m_p^1)^2 \rangle + \alpha_p \lambda_{p+1} [ \langle p_{11}^p \rangle - (1-\theta) \langle q_{12}^{p+1}p_{12}^p \rangle_2 - \theta \langle q_{12}^{p+1} p_{12}^p \rangle_1 ]  \notag \\
& +\alpha_p \lambda_p\left[ \langle p_{11}^p \rangle - (1-\theta) \langle q_{12}^{p}p_{12}^p \rangle_2 - \theta \langle q_{12}^{p} p_{12}^p \rangle_1  \right] \}, \\
\frac{\partial\mathcal A_{N,K}}{\partial x_p^{(1)}}  &= \  \frac{\lambda_p}{2} \left[1-(1-\theta)\langle q_{12}^p \rangle_2- \theta \langle q_{12}^p \rangle_1\right] ,\ \ \ \textnormal{for} \  p=1, \hdots ,K+1, \label{eq:x1appHOP}\\
\frac{\partial\mathcal A_{N,K} }{\partial x_p^{(2)}} &= \ \frac{\lambda_p}{2}\left[ 1-(1-\theta)\langle q_{12}^p\rangle_2\right] ,\ \ \ \textnormal{for} \  p=1, \hdots ,K+1 , \label{eq:x2appHOP} \\
\frac{\partial \mathcal A_{N,K}}{\partial y_p^{(1)}}& = \ \frac{\alpha_p}{2} \left[\langle p_{11}^p \rangle -(1-\theta)\langle p_{12}^p \rangle_2- \theta \langle p_{12}^p \rangle_1\right], \ \ \ \textnormal{for} \  p=1, \hdots ,K, \label{eq:y1appHOP}\\
\frac{\partial\mathcal A_{N,K}}{\partial y_p^{(2)}} & = \ \frac{\alpha_p}{2}\left[\langle p_{11}^p \rangle -(1-\theta)\langle p_{12}^p\rangle_2\right], \ \ \ \textnormal{for} \ p=1, \hdots ,K , \label{eq:y2appHOP} \\
\frac{\partial\mathcal{A}_{N,K}}{\partial v_p} &= \ \langle \lambda_{p+1} m_{p+1}^1 + \lambda_p m_p^1 \rangle ,\ \ \ \textnormal{for} \ p=1, \hdots ,K, \\
\frac{\partial \mathcal{A}_{N,K} }{\partial w_p}&= \ \frac{\alpha_p}{2} \langle p_{11}^p \rangle ,\ \ \ \ \textnormal{for} \ p=1, \hdots ,K .
\end{align}
\end{Lemma}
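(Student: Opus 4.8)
The plan is to follow closely the strategy used for Lemma~\ref{lemma:2}, differentiating the 1RSB interpolating pressure \eqref{AdiHOP1RSB} through the recursive chain $\mathcal Z_2 \to \mathcal Z_1 \to \mathcal Z_0$; the only novelty is that the Guerra Action \eqref{eqn:Z2_HOP} now carries three Gaussian families --- the hidden variables $\tau_\mu^p$, the slow-noise patterns $\xi_i^{\mu,p}$ with $\mu\ge 2$ (averaged by $\mathbb E_0$ after Gaussianisation), and the auxiliary fields $J_i^{p,(a)}$ and $\tilde J_\mu^{p,(a)}$ (averaged by $\mathbb E_a$). In every case, differentiating $\log \mathcal Z_1 = \tfrac1\theta \log \mathbb E_2[\mathcal Z_2^\theta]$ brings out the weight $\mathcal W_2 = \mathcal Z_2^\theta/\mathbb E_2[\mathcal Z_2^\theta]$, and --- for those parameters coupling to an integrated Gaussian --- a single application of Wick's theorem converts the linear insertion into the advertised combination of self- and two-replica correlators, which one then matches with the definitions \eqref{eq:par1}--\eqref{eq:par2}.

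First I would dispose of $w_p$ and $v_p$, which require no Wick step: differentiating the exponent of \eqref{eqn:Z2_HOP} w.r.t.\ $w_p$ (resp.\ $v_p$) simply inserts $\tfrac12\sum_{\mu\ge2}(\tau_\mu^p)^2$ (resp.\ $\sum_i\xi_i^{1,p+1}\sigma_i^{p+1}+\sum_i\xi_i^{1,p}\sigma_i^p$) inside the Boltzmann--Gibbs average; reassembling $\mathcal W_2$, using $N_p=\lambda_p N$ and $L_p=\alpha_p N$, and recognising \eqref{dhn_orderparameters}, one reads off $\partial_{w_p}\mathcal A_{N,K}=\tfrac{\alpha_p}{2}\langle p_{11}^p\rangle$ and $\partial_{v_p}\mathcal A_{N,K}=\langle\lambda_{p+1}m_{p+1}^1+\lambda_p m_p^1\rangle$. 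Next come $x_p^{(a)}$ and $y_p^{(a)}$: differentiation produces $\tfrac1{2\sqrt{x_p^{(a)}}}\sum_i J_i^{p,(a)}\sigma_i^p$ (resp.\ the analogous insertion $\tfrac1{2\sqrt{y_p^{(a)}}}\sum_\mu \tilde J_\mu^{p,(a)}\tau_\mu^p$), and Wick's theorem in $J_i^{p,(a)}$ (resp.\ $\tilde J_\mu^{p,(a)}$) yields a diagonal piece --- the constant $1$ for the $q$-type parameters, $\langle p_{11}^p\rangle$ for the $p$-type ones --- together with a ``replica'' piece in which the derivative lands on $\mathcal W_2$ and on $\mathcal Z_2^{-1}$, exactly the $B_1,B_2,B_3$ decomposition of Lemma~\ref{lemma:2}. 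Finally, for $t$ the $\tfrac{t}{2N}(\cdots)^2$ signal term differentiates directly into $\tfrac12(\lambda_{p+1}m_{p+1}^1+\lambda_p m_p^1)^2$ with no Wick step, whereas the $\sqrt{t/N}$ slow-noise term, after a Wick step in the patterns $\xi_i^{\mu,p}$ and $\xi_i^{\mu,p+1}$ attached to the $p$-th hidden group, contributes the diagonal term $\tfrac12\alpha_p(\lambda_p+\lambda_{p+1})\langle p_{11}^p\rangle$ together with the replica terms built from $\langle q_{12}^p p_{12}^p\rangle$ and $\langle q_{12}^{p+1} p_{12}^p\rangle$.

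The one point that requires care is the $\theta$-weighting of the replica pieces, which is dictated by the level at which the differentiated field is averaged: a field integrated at the innermost stage $\mathbb E_2$ can only be broken at the bottom of the hierarchy and hence contributes the single term proportional to $(1-\theta)\langle\cdot\rangle_2$ (yielding \eqref{eq:x2appHOP} and \eqref{eq:y2appHOP}); a field integrated at $\mathbb E_1$ can additionally be broken one step up, contributing both $(1-\theta)\langle\cdot\rangle_2$ and $\theta\langle\cdot\rangle_1$ (yielding \eqref{eq:x1appHOP} and \eqref{eq:y1appHOP}); and the patterns $\xi_i^{\mu,p}$, being averaged at the outermost stage $\mathbb E_0$, likewise generate the pair $(1-\theta)\langle q_{12}^\bullet p_{12}^p\rangle_2+\theta\langle q_{12}^\bullet p_{12}^p\rangle_1$, which after collecting the layer-$p$ and layer-$(p+1)$ contributions of the $p$-th hidden group gives \eqref{eq:2appHOP}. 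The main obstacle is therefore purely organisational --- carrying in parallel the several index sums ($i$, $\mu$, the layer $p$ and its neighbours $p\pm1$) and the $\mathcal W_2$-derivative terms --- and is conceptually identical to the manipulations already displayed for Lemma~\ref{lemma:2}; accordingly one states the result and omits the lengthy but routine verification.
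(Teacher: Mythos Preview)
Your proposal is correct and follows essentially the same approach as the paper: differentiate through the chain $\mathcal Z_2\to\mathcal Z_1\to\mathcal Z_0$, apply Wick's theorem to the relevant Gaussian family, and use the $B_1+B_2+B_3$ decomposition (from differentiating $\mathcal W_2$, $\mathcal Z_2^{-1}$, and the Boltzmann sum respectively) to isolate the diagonal and replica contributions. The paper only details the $t$-derivative explicitly, whereas you sketch all seven cases; your organising remark that the $\theta$-weighting is dictated by the level ($\mathbb E_2$, $\mathbb E_1$, or $\mathbb E_0$) at which the differentiated field is averaged is a helpful clarification not spelled out in the paper, but the underlying mechanism is identical.
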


\begin{proof}
The proof of this Lemma is pretty lengthy, thus we will only prove (\ref{eq:2appHOP}). To this aim, we define
\begin{align}
B(\bm \sigma^p; J) &= \exp \Big[ \frac{t}{2N} \sumpp \Big( \sum_{i=1}^{N_{p+1}} \xi_i^{1, p+1} \si^{p+1} + \sum_{i=1}^{N_{p}} \xi_i^{1,p} \si^p \Big)^2   \notag \\
&+\sqrt{\frac{t}{N}} \sumpp \sum_{\mu=1}^{L_p} \Big( \sum_{i=1}^{N_{p+1}} \xi_i^{\mu, p+1} \si^{p+1}
+ \sum_{i=1}^{N_p} \xi_i^{\mu,p} \si^p \Big) \tau_\mu^p + \sumpp\Big( \sum_{i=1}^{N_{p+1}} \xi_i^{1, p+1} \si^{p+1}
+ \sum_{i=1}^{N_p} \xi_i^{1,p} \si^p \Big)v_p \notag \\
&+ \suma \sum_{p=1}^{K+1} \sqrt{x_p^{(a)}} \sum_{i=1}^{N_p} J_i^{p, (a)} \si^p +  \suma \sumpp \sqrt{y_p^{(a)}} \sum_{\mu =2}^{L_p} \tilde{J}_\mu^{p, (a)} \tau_\mu^p + \frac{1}{2} \sumpp w_p \sum_{\mu=2}^{L_p} (\tau_\mu^p)^2\Big].
\end{align}
Then, with straightforward computations and the application of the Wick's Theorem, we obtain
\begin{align}
\frac{\partial\mathcal A_{N,K} }{\partial t} 
&=\ \frac{1}{2N\sqrt{tN}} \sumpp \mathbb{E}_0 \mathbb{E}_1 \mathbb{E}_2  \Big\{ \Big( \sum_{\mu=2}^{L_p} \sum_{i=1}^{N_p+1} \partial_{{\xi}_i^{\mu, p+1}} \Big[ \mathcal W_2 \frac{1}{\mathcal{Z}_2} \sums \int D \bm \tau B(\bm \sigma^p; J) \si^{p+1} \tau_\mu^p \Big]\Big) \notag \\
&+\Big(\sum_{\mu=2}^{L_p} \sum_{i=1}^{N_p} \partial_{{\xi}_i^{\mu, p}} \Big[ \mathcal W_2 \frac{1}{\mathcal{Z}_2} \sums \int D \bm \tau B(\bm \sigma^p; J) \si^{p} \tau_\mu^p \Big]\Big) \Big\} + \frac{1}{2} \sumpp \langle (\lambda_{p+1} m_{p+1}^{\mu} + \lambda_p m_p^\mu)^2 \rangle.
\end{align}
We compute the derivative w.r.t ${\xi}_i^{\mu, p+1}$ separately; the derivative w.r.t. ${\xi}_i^{\mu, p}$ is analogous.
\begin{align}
\partial_{{\xi}_i^{\mu, p+1}} &\Big( \mathcal{W}_2 \frac{1}{\mathcal Z_2}\sums \int D \bm \tau B(\bm \sigma^p; J) \si^{p+1}\tau_\mu^p \Big) = B_1+B_2+B_3 .
\end{align}
For the first term, we have
\begin{align}
B_1 =&  \left( \partial_{{\xi}_i^{\mu, p+1}}\mathcal W_2 \right) \frac{1}{\mathcal{Z}_2} \sums \int D \bm \tau B(\bm \sigma^p; J) \si^{p+1} \tau_\mu^p = \\
=& \sqrt{\frac{t}{N}} \Big[ \theta \mathcal{W}_2 \Big(\frac{1}{\mathcal Z_2}\sums \int D \bm \tau B(\bm \sigma^p; J) \si^{p+1} \tau_\mu^p \Big)^2 - \theta \mathcal{W}_2 \frac{1}{\mathcal{Z}_2} \sums \int D \bm \tau B(\bm \sigma^p; J) \tau_\mu^p \si^{p+1}\cdot\notag \\
&\cdot\mathbb{E}_2 \Big( \mathcal{W}_2 \frac{1}{\mathcal{Z}_2} \sums \int D \bm \tau B(\bm \sigma^p; J) \si^{p+1} \tau_\mu^p \Big) \Big].
\end{align}
For the other two terms, we get
\begin{align}
B_2 =& \mathcal W_2  \left( \partial_{{\xi}_i^{\mu, p+1}}\frac{1}{\mathcal{Z}_2}  \right) \sums \int D \bm \tau B(\bm \sigma^p; J) \si^{p+1} \tau_\mu^p= - \sqrt{\frac{t}{N}}\mathcal{W}_2 \Big(\frac{1}{\mathcal{Z}_2} \sums \int D \bm \tau B(\bm \sigma^p; J) \si^{p+1} \tau_\mu^p \Big)^2, \\
B_3 =& \mathcal W_2 \frac{1}{\mathcal{Z}_2}  \Big( \partial_{{\xi}_i^{\mu, p+1}} \sums \int D \bm \tau B(\bm \sigma^p; J) \si^{p+1} \tau_\mu^p  \Big) =\sqrt{\frac{t}{N}}\mathcal{W}_2 \frac{1}{\mathcal{Z}_2}  \sums \int D \bm \tau B(\bm \sigma^p; J) (\si^{p+1} \tau_\mu^p)^2 .
\end{align}
Reassembling all the terms and recalling the Defs. \eqref{eq:par1}-\eqref{eq:par2}, we obtain the thesis.
\end{proof}

\begin{Proposition}
\label{prop:3HOP}
The streaming of the 1-RSB interpolating quenched pressure obeys, at finite volume $N$, a standard transport equation, that reads as
\begin{align}
\label{eq:stream1rsbHOP}
\frac{d \mathcal A_{N,K}}{dt}&=\frac{\partial \mathcal A_{N,K}}{\partial t}+\sum_{p=1}^{K+1} \Big(\dot x_p^{(1)} \frac{\partial\mathcal A_{N,K} }{\partial x_p^{(1)}}  +\dot x_p^{(2)} \frac{\partial\mathcal A_{N,K}}{\partial x_p^{(2)}}  \Big) + \sumpp \Big(\dot y_p^{(1)} \frac{\partial \mathcal A_{N,K} }{\partial y_p^{(1)}} +\dot y_p^{(2)} \frac{\partial \mathcal A_{N,K}}{\partial y_p^{(2)}} \Big)  \notag \\
&+\dot w_p \frac{\partial \mathcal A_{N,K}}{\partial w_p}   + \dot v_p \frac{\partial\mathcal A_{N,K} }{\partial v_p}   = S_N(t, \vec x) + V_N(t, \vec x ),
\end{align}
where
\begin{align}
\label{SHOP}
S_N(t, \vec x) &\coloneqq \sum_{p=1}^K - \frac{1}{2}\left(M^{p+1} + M^p\right)^2 + \frac{1}{2}(1-\theta)Q_2^{p+1}P_2^p \\
&\notag+ \frac{1}{2}\theta Q_1^{p+1}P_1^p +  \frac{1}{2}(1-\theta)Q_2^p P_2^p + \frac{1}{2}\theta Q_1^p P_1^p -
{ \frac{\lambda_{p+1}}{2} (P_2^{p+1} - P_2^p}), \\
\label{VHOP}
V_N(t, \vec x) &\coloneqq \sumpp \frac{1}{2}  \langle (\lambda_{p+1} (m_{p+1}^1 - \bar{m}^{p+1}) + \lambda_p (m_p^1  - \bar{m}^p))^2 \rangle -\frac{\alpha_p \lambda_{p+1}}{2}(1-\theta)\langle \Delta q_{12}^{p+1} \Delta p_{12}^p \rangle_2 \notag \\
& - \frac{1}{2}\alpha_p \lambda_{p+1} \theta \langle \Delta q_{12}^{p+1} \Delta p_{12}^p \rangle_1 - \frac{1}{2}\alpha_p \lambda_p(1-\theta)\langle \Delta q_{12}^{p} \Delta p_{12}^p \rangle_2 - \frac{1}{2}\alpha_p \lambda_p \theta \langle \Delta q_{12}^{p} \Delta p_{12}^p \rangle_1.
\end{align}
\normalsize
\end{Proposition}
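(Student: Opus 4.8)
The proof is the exact $1$RSB counterpart of the replica-symmetric proof of the DHN transport equation and of Proposition~\ref{prop:3} for the MSK model: the only new feature is that the spin overlaps $q^p$ and the $z$-overlaps $p^p$ now fluctuate around two equilibrium values with weights $\theta$ and $1-\theta$, while the magnetisation $m^1_p$ still self-averages. Starting from the expression for $\partial_t\mathcal A_{N,K}^{\RSB}$ in \eqref{eq:2appHOP}, the first move is to centre every two-replica correlator: for $a=1,2$ and all $p$ one writes
\begin{align*}
\langle q_{12}^{p+1}p_{12}^p\rangle_a &= -\bar{q}_a^{p+1}\bar{p}_a^p + \bar{q}_a^{p+1}\langle p_{12}^p\rangle_a + \bar{p}_a^p\langle q_{12}^{p+1}\rangle_a + \langle \Delta q_{12}^{p+1}\,\Delta p_{12}^p\rangle_a,\\
\langle q_{12}^{p}p_{12}^p\rangle_a &= -\bar{q}_a^{p}\bar{p}_a^p + \bar{q}_a^{p}\langle p_{12}^p\rangle_a + \bar{p}_a^p\langle q_{12}^{p}\rangle_a + \langle \Delta q_{12}^{p}\,\Delta p_{12}^p\rangle_a,
\end{align*}
and, for the magnetisation square,
\[
\langle (\lambda_{p+1}m_{p+1}^1+\lambda_p m_p^1)^2\rangle = -\big(T^\lambda_p(\bar m)\big)^2 + 2\,T^\lambda_p(\bar m)\,\langle \lambda_{p+1}m_{p+1}^1+\lambda_p m_p^1\rangle + \langle \Delta^2[\lambda_{p+1}m_{p+1}^1+\lambda_p m_p^1]\rangle .
\]
After this substitution $\partial_t\mathcal A_{N,K}^{\RSB}$ is a sum of (i) polynomials in the equilibrium parameters $Q_a^p,P_a^p,M^p$, (ii) single-replica expectations $\langle q_{12}^p\rangle_{1,2}$, $\langle p_{12}^p\rangle_{1,2}$, $\langle p_{11}^p\rangle$ and $\langle \lambda_{p+1}m_{p+1}^1+\lambda_p m_p^1\rangle$, and (iii) the fluctuation averages $\langle \Delta q\,\Delta p\rangle_{1,2}$ and $\langle\Delta^2[\cdot]\rangle$.

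The quantities of type (ii) are precisely those computed in \eqref{eq:x1appHOP}--\eqref{eq:y2appHOP} and in the $\partial_{v_p},\partial_{w_p}$ derivatives of Lemma~\ref{lemma:2_HOP}, so inverting those algebraic identities writes each of them as a linear combination of spatial derivatives of $\mathcal A_{N,K}^{\RSB}$ — for instance $\langle q_{12}^p\rangle_2 = \tfrac{2}{(1-\theta)\lambda_p}\big(\tfrac{\lambda_p}{2}-\partial_{x_p^{(2)}}\mathcal A_{N,K}^{\RSB}\big)$ and $\langle q_{12}^p\rangle_1=\tfrac{2}{\theta\lambda_p}\big(\partial_{x_p^{(2)}}\mathcal A_{N,K}^{\RSB}-\partial_{x_p^{(1)}}\mathcal A_{N,K}^{\RSB}\big)$, with the same pattern for $\langle p_{12}^p\rangle_{1,2}$ in terms of $\partial_{y_p^{(1,2)}},\partial_{w_p}$, while $\langle p_{11}^p\rangle=(2/\alpha_p)\partial_{w_p}\mathcal A_{N,K}^{\RSB}$ and $\langle \lambda_{p+1}m_{p+1}^1+\lambda_p m_p^1\rangle=\partial_{v_p}\mathcal A_{N,K}^{\RSB}$. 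Plugging these back, $\partial_t\mathcal A_{N,K}^{\RSB}$ becomes (polynomial part) $+$ (fluctuation part) $+\sum_p$(coefficients built from $Q_a,P_a,M$)$\times$(spatial derivatives of $\mathcal A_{N,K}^{\RSB}$). Choosing the characteristic velocities $\dot x_p^{(1,2)},\dot y_p^{(1,2)},\dot w_p,\dot v_p$ as dictated by those coefficients — they are the $1$RSB refinements of the RS velocities $\dot x_p=-T^\alpha_p(\bar p)$, $\dot y_p=-T^\lambda_p(\bar q)$, $\dot w_p=-T^\lambda_p(1-\bar q)$, $\dot v_p=-T^\lambda_p(\bar m)$, with the $(1)$ components built from $P_1,Q_1$ and the $(2)$ components from $P_2-P_1,Q_2-Q_1$, exactly as in Proposition~\ref{prop:3} — makes the left-hand side of \eqref{eq:stream1rsbHOP} collapse to $d\mathcal A_{N,K}^{\RSB}/dt$. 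What remains on the right is then precisely the polynomial part, which one checks coincides with $S_N(t,\vec x)$ of \eqref{SHOP}, plus the fluctuation part, which coincides with $V_N(t,\vec x)$ of \eqref{VHOP}.

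The computation is elementary; the delicate point is the bookkeeping. One must track the two copies ($a=1,2$) of each mixed $q$--$p$ overlap on both the $p$- and the $(p+1)$-layer together with their $\theta$/$(1-\theta)$ weights, and — most importantly — handle the single-replica quantity $\langle p_{11}^p\rangle$, which enters simultaneously $\partial_t\mathcal A_{N,K}^{\RSB}$, $\partial_{y_p^{(1,2)}}\mathcal A_{N,K}^{\RSB}$ and $\partial_{w_p}\mathcal A_{N,K}^{\RSB}$: after elimination it must combine with the $\bar p_a^p\langle q\rangle_a$ and $\bar q_a^{p+1}\langle p\rangle_a$ pieces to reproduce exactly the terms of \eqref{SHOP} (in particular the $-\tfrac{\lambda_{p+1}}{2}(P_2^{p+1}-P_2^p)$ contribution) and to leave no $\Delta p_{11}$ term in \eqref{VHOP}. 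Likewise the cross term $2\,T^\lambda_p(\bar m)\,\langle\lambda_{p+1}m_{p+1}^1+\lambda_p m_p^1\rangle$ must be matched to $\partial_{v_p}$ with the correct factor $\tfrac12$. The only place where a sign or a combinatorial factor could genuinely slip is the consistency between the $\tau$-integrals that generate the $p$-overlaps and the Wick differentiations with respect to the $\xi^{\mu,p}$ ($\mu\ge2$) performed in Lemma~\ref{lemma:2_HOP}; that is where I would concentrate the verification.
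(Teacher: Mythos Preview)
Your proposal is correct and follows essentially the same route as the paper: start from \eqref{eq:2appHOP}, centre the mixed $q$--$p$ correlators and the magnetisation square, rewrite the single-replica expectations via the spatial derivatives of Lemma~\ref{lemma:2_HOP}, and then read off the characteristic velocities (which the paper also spells out explicitly, matching the structure you describe). The paper's proof adds nothing conceptually beyond displaying the intermediate expanded form of $\partial_t\mathcal A_{N,K}$ before collecting terms, so your outline and its emphasis on the $\langle p_{11}^p\rangle$ bookkeeping are on target.
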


\begin{proof}
	
As standard in these cases, we start by expressing the $t$-derivative of the free energy, thus expressing everything in terms of the remaining derivatives and the fluctuations of the order parameters. The latter will be taken into account in the potential $V_N$, while the remaining quantities will form the source contribution $S_N$. For the sake of clearness of presentation, we directly report the result of the $t$-derivative of the quenched free energy:
\begin{align}
\partial_t A_{N, K}& =
 \sumpp \frac{1}{2}  \langle (\lambda_{p+1} (m_{p+1}^1 - \bar{m}^{p+1}) + \lambda_p (m_p^1  - \bar{m}^p))^2 \rangle -\frac{\alpha_p \lambda_{p+1}}{2}(1-\theta)\langle \Delta q_{12}^{p+1} \Delta p_{12}^p \rangle_2 \notag \\
&- \frac{\alpha_p \lambda_{p+1}}{2} \theta \langle \Delta q_{12}^{p+1} \Delta p_{12}^p \rangle_1 - \frac{\alpha_p \lambda_p}{2}(1-\theta)\left\langle \Delta q_{12}^{p} \Delta p_{12}^p \right\rangle_2 - \frac{\alpha_p \lambda_p }{2}\theta \left\langle \Delta q_{12}^{p} \Delta p_{12}^p \right\rangle_1 \notag \\
&-\frac{1}{2}(M^{p+1} +M^p)^2 + \frac{1}{2}(1-\theta)Q_2^{p+1}P_2^p + \frac{1}{2}\theta Q_1^{p+1}P_1^p + \frac{1}{2}(1-\theta)Q_2^p P_2^p + \frac{1}{2}\theta Q_1^p P_1^p  \notag \\
&+ (M^{p+1} +M^p) \partial_{v_p} \mathcal{A}_{N, K } +(\lambda_{p+1} + \lambda_p) \partial_{w_p} \mathcal{A}_{N,K}+ (Q_1^{p+1} + Q_1^p)\partial_{y_p^{(1)}} \mathcal{A}_{N, K}  \notag \\
&+(Q_2^{p+1} - Q_1^{p+1} + Q_2^p - Q_1^p) (\partial_{y_p^{(2)}} \mathcal{A}_{N, K} - \frac{\alpha_p}{2} \langle p_{11}^p \rangle) +\sum_{p=1}^{K+1} (P_1^p + P_1^{p-1}) (\partial_{x_p^{(1)}} \mathcal{A}_{N,K} - \frac{\lambda_p}{2})  \notag \\
&+ (P_2^p - P_1^p + P_2^{p-1} - P_1^{p-1})(\partial_{x_p^{(2)}} \mathcal{A}_{N,K} - \frac{\lambda_p}{2}) -(Q_1^{p+1} + Q_1^{p}) \partial_{w_p} \mathcal{A}_{N, K} .
\end{align}
Despite the complexity of this expression, it is easy to show that, setting
\begin{align}
\dot x_p^{(1)} &= - (P_1^p + P_1^{p-1}) ,\ \ \ &\textnormal{for} \ p=1, \hdots , K+1,\\
\dot x_p^{(2)}&= - (P_2^p-P_1^p + P_2^{p-1} - P_1^{p-1}), \ \ \ &\textnormal{for} \ p=1, \hdots , K+1, \\
\dot y_p^{(1)} &= - (Q_1^p + Q_1^{p+1}), \ \ \ &\textnormal{for}\  p=1, \hdots , K,\\
\dot y_p^{(2)}&= - (Q_2^p-Q_1^p + Q_2^{p+1} - Q_1^{p+1}) ,\ \ \ &\textnormal{for}\  p=1, \hdots ,K, \\
\dot w_p&= - (\lambda_p+ \lambda_{p+1}- Q_2^p-Q_2^{p+1}), \ \ \ &\textnormal{for} \ p=1, \hdots , K, \\
\dot v_p &= - (M^{p+1} + M^P), \ \ \ &\ \textnormal{for}\  p=1, \hdots , K,
\end{align}
we get the thesis.
\end{proof}

\begin{Remark} \label{r:latterHOP}
In the thermodynamic limit and in the 1-RSB scenario, we have for all $p=1, \hdots , K$
\begin{align}
\lim_{N \rightarrow + \infty} \langle q_{12}^p \rangle_1 &= \qb_1^p ,\label{limforq1RSBHOP} \\
\lim_{N \rightarrow + \infty} \langle q_{12}^p \rangle_2 &= \qb_2^p, \label{limforq21RSBHOP} \\
\lim_{N \rightarrow + \infty} \langle p_{12}^p \rangle_1 &= \bar{p}_1^p ,\label{limforp1RSBHOP} \\
\lim_{N \rightarrow + \infty} \langle p_{12}^p \rangle_2 &= \bar{p}_2^p, \label{limforp21RSBHOP}
\end{align}
in such a way that the potential in (\ref{eq:stream1rsb}) is vanishing, that is
\begin{equation} \label{eq:V0RSBHOP}
\lim_{N \to \infty} V_N(t, \vec x)=0.
\end{equation}
\end{Remark}

Exploiting Remark \ref{r:latterHOP} we can prove the following
\begin{Proposition}
The transport equation associated to the interpolating pressure of the multilayer Hopfield model, in the thermodynamic limit and in the 1RSB scenario, reads as
\small
\begin{align}
&\frac{\partial \mathcal A^{\RSB}_{K} }{\partial t} - \sum_{p=1}^{K+1}\Big[ (P_1^p + P_1^{p-1}) \frac{\partial \mathcal A^{\RSB}_{K} }{\partial x_p^{(1)}}- (P_2^p-P_1^p + P_2^{p-1} - P_1^{p-1})    \frac{\partial\mathcal A^{\RSB}_{K} }{\partial x_p^{(2)}}\Big] -\sum_{p=1}^K \Big[(Q_1^p + Q_1^{p+1}) \frac{\partial \mathcal{A}^{\RSB}_{K} }{\partial y_p^{(1)}}  \notag \\
&+ (Q_2^p-Q_1^p + Q_2^{p+1} -  Q_1^{p+1}) \frac{\partial\mathcal{A}^{\RSB}_{K}}{\partial y_p^{(2)}}  +  (\lambda_p + \lambda_{p+1}- Q_2^p-Q_2^{p+1} ) \frac{\partial\mathcal{A}^{\RSB}_{K}}{\partial w_p}  +(M^{p+1} + M^p) \frac{\partial\mathcal{A}^{\RSB}_{K}}{\partial v_p}\Big]=\notag \\
&= \sum_{p=1}^K \Big[- \frac{1}{2}(M^{p+1} + M^p)^2 + \frac{1}{2}(1-\theta)Q_2^{p+1}P_2^p + \frac{1}{2}\theta Q_1^{p+1}P_1^p +  \frac{1}{2}(1-\theta)Q_2^pP_2^p + \frac{1}{2}\theta Q_1^p P_1^p -
 \frac{\lambda_{p+1}}{2} (P_2^{p+1} - P_2^p)\Big],
\label{1rsbtranseqprimaHOP}
\end{align}
\normalsize
whose solution is given by
\small
\begin{align}
&\mathcal A^{\RSB}_{K}(t, \vec x) = \sum_{P=1}^R \frac{\alpha_p}{2} \log \left(\frac{1}{1-w_{p,0}}\right) + \frac{\alpha_p}{2\theta} \log \left( \frac{1-w_{p,0}}{1-w_{p,0} - \theta y_{p,0}^{(2)}}\right) + \frac{\alpha_p}{2} \frac{y_{p,0}^{(1)}}{1-w_{p,0}-\theta y_{p,0}^{(2)}}   \notag \\
&+\frac{1}{\theta} \sum_{p=1}^{K+1} \mathbb{E}_1 \Big( \log \mathbb{E}_2 \big[ \lambda_p 2 \cosh ({\xi}^{1, p} v_{p-1}^0 + \xi^{1, p} v_p^0 + \sqrt{x_{p,0}^{(1)}} J^{p,(1)} + \sqrt{x_{p,0}^{(2)}} J^{p,(2)}\big]^\theta \Big)  \notag \\
&+t\Big[ \sum_{p=1}^K - \frac{1}{2}(M^{p+1} + M^p )^2 + \frac{1}{2}(1-\theta)Q_2^{p+1}P_2^p + \frac{1}{2}\theta Q_1^{p+1} P_1^p + \frac{1}{2}(1-\theta)Q_2^p P_2^p + \frac{1}{2}\theta Q_1^p P_1^p - \frac{\lambda_{p+1}}{2} ( P_2^{p+1} - P_2^p)\Big].
\label{1rsbtranseqHOP}
\end{align}
\normalsize
\end{Proposition}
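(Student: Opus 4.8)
The plan is to follow the route used for the RS case in Section~\ref{sec:DHN_RS} (and for the 1RSB solution of the MSK model): combine the finite-$N$ streaming identity of Proposition~\ref{prop:3HOP} with the vanishing of the potential in the thermodynamic limit, integrate the resulting transport equation by the method of characteristics, and finally evaluate the Cauchy datum at $t=0$ by a one-body computation. First I would invoke Definition~\ref{def:MHM_RSB} together with Remark~\ref{r:latterHOP}: under the 1RSB ansatz the replicated overlaps $q^p$ and $p^p$ concentrate on their two equilibrium values and $m^1_p$ self-averages, so every fluctuation expectation entering \eqref{VHOP} vanishes and $\lim_{N\to\infty}V_N(t,\vec x)=0$, that is \eqref{eq:V0RSBHOP}. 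Since the source \eqref{SHOP} is written purely in terms of the constants $Q_{1,2}^p=\lambda_p\bar q_{1,2}^p$, $P_{1,2}^p=\alpha_p\bar p_{1,2}^p$ and $M^p=\lambda_p\bar m^p$, it is both $t$- and $N$-independent; hence, as $N\to\infty$, Proposition~\ref{prop:3HOP} collapses to the transport equation \eqref{1rsbtranseqprimaHOP}.

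Next I would integrate \eqref{1rsbtranseqprimaHOP} by the method of characteristics. Because every drift coefficient in \eqref{1rsbtranseqprimaHOP} is a constant, the characteristic curves are straight lines, $x_{0,p}^{(a)}=x_p^{(a)}+t(\cdots)$, $y_{0,p}^{(a)}=y_p^{(a)}+t(\cdots)$, $w_{0,p}=w_p+t(\lambda_p+\lambda_{p+1}-Q_2^p-Q_2^{p+1})$, $v_{0,p}=v_p+t(M^{p+1}+M^p)$, with the shifts read directly off the coefficients of \eqref{1rsbtranseqprimaHOP}; along them the PDE becomes $d\mathcal A^{\RSB}_K/dt=S^{\RSB}$, so that
\[
\mathcal A^{\RSB}_K(t,\vec x)=\mathcal A^{\RSB}_K(0,\vec x_0)+t\,S^{\RSB},
\]
with $S^{\RSB}$ equal to the $t$-linear bracket appearing in \eqref{1rsbtranseqHOP}.

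It then remains to compute $\mathcal A^{\RSB}_{N,K}(0,\vec x_0)$. At $t=0$ the signal condensate and the slow-noise couplings disappear from \eqref{eqn:Z2_HOP}, so $\mathcal Z_2(0,\vec x_0)$ factorizes into a spin part (in $v_{0,p}$, $x_{0,p}^{(a)}$, the $J^{(a)}$'s and the condensed pattern $\xi^1$) times a Gaussian $\tau$ part (in $w_{0,p}$, $y_{0,p}^{(a)}$ and the $\tilde J^{(a)}$'s), the two factors depending on disjoint auxiliary variables. The spin sums give $2\cosh(\cdot)$ and the $\tau$-integrals are elementary, $\int\tfrac{d\tau}{\sqrt{2\pi}}\,e^{-\frac12(1-w_{0,p})\tau^2+b\tau}=(1-w_{0,p})^{-1/2}\exp\!\big(b^2/2(1-w_{0,p})\big)$. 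Pushing this factorized form through $\mathcal Z_1=\mathbb E_2[\mathcal Z_2^\theta]^{1/\theta}$ and $\mathcal Z_0=\exp\mathbb E_1[\log\mathcal Z_1]$, the spin part reproduces the $\cosh^\theta$ term of \eqref{1rsbtranseqHOP}, whereas the $\tau$ part requires one more Gaussian integration of the $\theta$-th power against $\mathbb E_2$; completing the square there produces exactly the $-\tfrac{\alpha_p}{2}\log(1-w_{0,p})$, $\tfrac{\alpha_p}{2\theta}\log\!\big((1-w_{0,p})/(1-w_{0,p}-\theta y_{0,p}^{(2)})\big)$ and $\tfrac{\alpha_p}{2}\,y_{0,p}^{(1)}/(1-w_{0,p}-\theta y_{0,p}^{(2)})$ contributions, once the weights $N_p/N\to\lambda_p$ and $L_p/N\to\alpha_p$ are collected and $\sum_p\lambda_p=1$ is used. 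Adding $t\,S^{\RSB}$ and substituting the straight-line characteristics then reproduces \eqref{1rsbtranseqHOP}.

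The step I expect to be the main obstacle is this last one: the nested Gaussian integration intertwined with the $\theta$-deformation at the $\mathbb E_2$ level. One has to keep careful track of which auxiliary fields are averaged at which replica level and of the positivity condition $1-w_{0,p}-\theta y_{0,p}^{(2)}>0$ --- a de~Almeida--Thouless-type bound needed for every Gaussian integral above to converge; once this is in place, everything else is bookkeeping along the lines already carried out for the RS case and for the MSK model.
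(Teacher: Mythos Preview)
Your proposal is correct and follows essentially the same route as the paper: vanish the potential via Remark~\ref{r:latterHOP}, reduce Proposition~\ref{prop:3HOP} to the constant-coefficient transport equation \eqref{1rsbtranseqprimaHOP}, integrate along straight-line characteristics, and evaluate the Cauchy datum at $t=0$ as a one-body computation. Your write-up is in fact more detailed than the paper's on the one-body step (the paper simply records \eqref{eq:A01RSBHOP} without unpacking the nested $\mathbb E_2$/$\mathbb E_1$ Gaussian integrations) and your remark on the positivity condition $1-w_{0,p}-\theta y_{0,p}^{(2)}>0$ is a useful caveat that the paper leaves implicit.
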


\begin{proof}
The PDE (\ref{1rsbtranseqprima}) in the thermodynamic limit can be obtained from (\ref{eq:stream1rsb}) using (\ref{eq:V0RSB}).
This equation can be again solved via the method of the characteristics, with the solution given by
\begin{align}
\mathcal A_K (t, \vec x)= \mathcal A_{N, K}(0, \vec x_0) + S(t, \vec x )t.
\label{eq:A1RSBHOP}
\end{align}
In this case, the characteristics are
\begin{align}
 x_p^{(1)} &=  x_ {p, 0}^{(1)} - (P_1^p + P_1^{p-1})t, \ \ \ &\textnormal{for}\ p=1, \hdots , K+1,\\ 
x_p^{(2)}&=  x_{p,0}^{(2)} - (P_2^p-P_1^p + P_2^{p-1} - P_1^{p-1})t, \ \ \ &\textnormal{for} \ p=1, \hdots , K+1 ,\\
y_p^{(1)} &= y_{p,0}^{(1)} - (Q_1^p + Q_1^{p+1})t ,\ \ \ &\textnormal{for} \  p=1, \hdots , K,\\
y_p^{(2)}&= y_{p,0}^{(2)} - (Q_2^p-Q_1^p+ Q_2^{p+1} -  Q_1^{p+1})t, \ \ \ &\textnormal{for}\  p=1, \hdots , K, \\
w_p&= w_{p,0} - (\lambda_p + \lambda_{p+1}-Q_2^p-Q_2^{p+1})t, \ \ \ &\textnormal{for}\  p=1, \hdots , K, \\
v_p &= v_{p,0} - (M^{p+1} + M^p)t, \ \ \ &\textnormal{for}\ p=1, \hdots , K.
\end{align}
The Cauchy condition, corresponding to $t=0$ and $\vec {x_0} = \vec x (t=0)$, can be calculated directly, as it is again a one-body calculation. Its explicit expression is
\begin{align}
\mathcal A_K^{\RSB}(0, \vec x_0) &=\sum_{p=1}^K \frac{\alpha_p}{2} \log \left(\frac{1}{1-w_{p,0}}\right) + \frac{\alpha_p}{2\theta} \log \left( \frac{1-w_{p,0}}{1-w_{p,0} - \theta y_{p,0}^{(2)}}\right) + \frac{\alpha_p}{2} \frac{y_{p,0}^{(1)}}{1-w_{p,0}-\theta y_{p,0}^{(2)}}   \notag \\
&+\frac{1}{\theta} \sum_{p=1}^{K+1} \mathbb{E}_1 \big( \log \mathbb{E}_2 [ \lambda_p 2 \cosh ({\xi}^{1, p} v_{p-1}^0 + \xi^{1, p} v_p^0 + \sqrt{x_{0,p}^{(1)}} J^{p,(1)} + \sqrt{x_{0,p}^{(2)}} J^{p,(2)}]^\theta \big). \label{eq:A01RSBHOP}
\end{align}
Putting everything together, we finally reach the thesis (\ref{1rsbtranseq}).
\end{proof}

\begin{Theorem}
The 1-RSB quenched pressure for the multilayer Hopfield model, in the thermodynamic limit, reads as
\begin{align}
\mathcal A^{\RSB}_{K}(t, \vec x)& = \sum_{p=1}^K \frac{\alpha_p}{2} \log \left(\frac{1}{1-\beta T^\lambda _p(1-\bar q_2)
}\right) +   \frac{\alpha_p}{2\theta} \log \left( \frac{1-\beta
	T^\lambda_p (1-\bar q_2)
}{1-\beta
T^\lambda_p (1-\bar q_2)
 - \theta \beta 
 T^\lambda_p (\bar q_2-\bar q_1)
}\right)
 \notag \\
&+ \frac{\alpha_p}{2} \frac{\beta
	T^\lambda_p (\bar q_1)
}{1-\beta
T^\lambda _p (1-\bar q_2)
-\theta \beta 
T^\lambda _p (\bar q_2-\bar q_1)
}   \notag \\
&+\frac{1}{\theta} \sum_{p=1}^{K+1} \mathbb{E}_1 \left( \log \mathbb{E}_2 \left[ \lambda_p 2 \cosh \left({\xi}^{1, p} 
T^\lambda_{p-1}(\bar m)
 + \xi^{1, p}
 T^\lambda_{p}(\bar m)
+ \sqrt{\beta
	T^\alpha _p (\bar p_1)
} J^{p,(1)}  \notag \right. \right.\right.\\
&\left.\left. \left. + \sqrt{\beta
T^\alpha_{p-1} (\bar p_2 -\bar p_1)
} J^{p,(2)}\right)\right]^\theta \right) +\beta\Big[- \frac{1}{2} \sum_{p=1}^K 
\big(T^\lambda_p (\bar m)\big)
^2 + \frac{\alpha_p \lambda_{p+1}}{2}(1-\theta)\qb_2^{p+1}\bar{p}_2^p  \notag \\
&+ \frac{\alpha_p \lambda_{p+1}}{2}\theta \qb_1^{p+1}\bar{p}_1^p  + \frac{\alpha_p \lambda_p}{2}(1-\theta)\qb_2^p\bar{p}_2^p + \frac{\alpha_p \lambda_p}{2}\theta \qb_1^p \bar{p}_1^p-\frac{\lambda_{p+1}}{2} ( \alpha_{p+1}\bar{p}_2^{p+1} - {\alpha_p}\bar{p}_2^p) \Big].
\label{eq:finalHOP}
\end{align}
\end{Theorem}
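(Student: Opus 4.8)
The plan is to read off \eqref{eq:finalHOP} directly from the thermodynamic-limit transport solution \eqref{1rsbtranseqHOP} by specialising the interpolating parameters to the values that reproduce the original DHN, namely $t=\beta$ and $\vec x=(\bm x^{(1)},\bm x^{(2)},\bm y^{(1)},\bm y^{(2)},\bm w,\bm v)=\bm 0$, exactly as was done in the RS case and in the MSK 1RSB theorem. No new analytic input is needed: the potential has already been shown to vanish in the thermodynamic limit (see \eqref{eq:V0RSBHOP}), so the motion along the characteristics is exact and \eqref{1rsbtranseqHOP} is the full answer; the task is purely a change of variables and simplification.

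First I would evaluate the characteristics listed in the proof of the preceding Proposition at $t=\beta$ with vanishing initial data for all spatial variables. Tracing each relation back to $t=0$ gives $x^{(1)}_{p,0}=\beta(P_1^p+P_1^{p-1})$, $x^{(2)}_{p,0}=\beta(P_2^p-P_1^p+P_2^{p-1}-P_1^{p-1})$, $y^{(1)}_{p,0}=\beta(Q_1^p+Q_1^{p+1})$, $y^{(2)}_{p,0}=\beta(Q_2^p-Q_1^p+Q_2^{p+1}-Q_1^{p+1})$, $w_{p,0}=\beta(\lambda_p+\lambda_{p+1}-Q_2^p-Q_2^{p+1})$ and $v_{p,0}=\beta(M^{p+1}+M^p)$ (and likewise $v_{p-1,0}$). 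Using $Q_a^p=\lambda_p\bar q_a^p$, $P_a^p=\alpha_p\bar p_a^p$, $M^p=\lambda_p\bar m^p$ together with the notation \eqref{eq:DHNnot}, these become $x^{(1)}_{p,0}=\beta T^\alpha_p(\bar p_1)$, $x^{(2)}_{p,0}=\beta T^\alpha_p(\bar p_2-\bar p_1)$, $w_{p,0}=\beta T^\lambda_p(1-\bar q_2)$, $y^{(2)}_{p,0}=\beta T^\lambda_p(\bar q_2-\bar q_1)$, $y^{(1)}_{p,0}=\beta T^\lambda_p(\bar q_1)$ and $v_{p,0}=\beta T^\lambda_p(\bar m)$, where the boundary conventions $\alpha_0=\alpha_{K+1}=0$, $\lambda_0=\lambda_{K+2}=0$ and $\xi^{1,0}=\xi^{1,K+1}=0$ keep the $p=1$ and $p=K+1$ contributions uniform.

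Substituting these into \eqref{1rsbtranseqHOP} then matches \eqref{eq:finalHOP} line by line: the three $w_{p,0},y_{p,0}$-dependent terms become the three logarithmic/rational terms, the $\cosh^\theta$ term becomes the $\mathbb E_1\log\mathbb E_2[\cdots]^\theta$ term with the displayed Gaussian field, and the bracket proportional to $t$ becomes the $\beta[\cdots]$ polynomial once one expands $M^{p+1}+M^p=T^\lambda_p(\bar m)$, $Q_a^{p+1}P_a^p=\lambda_{p+1}\alpha_p\bar q_a^{p+1}\bar p_a^p$, and so on, and uses $\sum_{p}\lambda_p=1$ where the constant terms collect. The computation has no genuine obstacle; the only thing to be careful about is the index bookkeeping, namely the $p\mapsto p\pm1$ offsets that distinguish the $x$-sums over $p=1,\dots,K+1$ from the $y,w,v$-sums over $p=1,\dots,K$, and the correct placement of $\bar q_2-\bar q_1$ versus $\bar q_1$ so that the denominators $1-w_{p,0}-\theta y^{(2)}_{p,0}$ appear exactly as written. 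Once the substitution is carried out carefully, the stated identity follows at once.
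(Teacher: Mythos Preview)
Your proposal is correct and follows exactly the paper's approach: the paper's proof consists of the single sentence ``It is sufficient to set $t=\beta$ and $\vec x=\bm 0$ in \eqref{1rsbtranseqHOP}.'' You have simply made explicit the evaluation of the characteristics and the ensuing substitution that the paper leaves to the reader.
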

\begin{proof}
It is sufficient to set $t=\beta$ and $\vec x =\bm 0$ in (\ref{1rsbtranseq}). 
\end{proof}

\begin{Corollary} \label{cor:SC_HOP}
The self-consistent equations for the order parameters of the multilayer Hopfield model read as
\begin{align}
\label{seq:aHOP}
&\qb_1^p= \mathbb{E}_1 \left \{ \frac{\mathbb{E}_2 \left[ \cosh^\theta \left(g_p(\mathbf J) \right)\tanh \left(g_p(\mathbf J) \right) \right]}{\mathbb{E}_2 \left [ \cosh^\theta \left(g_p(\mathbf J) \right) \right ]}  \right \}^2, \ \ \ \textnormal{for}\ p=1, \hdots , K+1, \\
\label{seq:bHOP}
&\qb_2^p = \mathbb{E}_1 \left \{ \frac{\mathbb{E}_2 \left[ \cosh^\theta \left(g_p(\mathbf J) \right)\tanh^2 \left(g_p(\mathbf J) \right) \right ]}{\mathbb{E}_2 \left[ \cosh^\theta \left(g_p(\mathbf J) \right) \right] }\right \},  \ \ \ \textnormal{for} \ p=1, \hdots , K+1, \\
&\bar{m}^p = \mathbb{E}_1 \left[ \frac{\mathbb{E}_2 \left( \cosh^\theta g _p(\mathbf J) \tanh g_p(\mathbf J)\right)}{\mathbb{E}_2 \left( \cosh^\theta g_p(\mathbf J)\right)}\right], \ \ \ \textnormal{for} \ p=1, \hdots , K ,\\
&\bar{p}_1^p= \frac{\beta
T^\lambda_p (\bar q_1)
}{(1-\beta
T^\lambda_p(1-\bar q_2)
- \theta\beta
T^\lambda _p (\bar q_2-\bar q_1)
)^2}, \ \ \ \textnormal{for} \  p=1, \hdots , K,\\
&\bar{p}_2^p = \frac{\beta
T^\lambda_p (\bar q_2 -\bar q_1)
}{(1-\beta
T^\lambda_{p+1} (1-\bar q_2)
 - \theta \beta
 T^\lambda_p (\bar q_2-\bar q_1)
)
}\frac{1}{(1-\beta
T^\lambda_p(1-\bar q_2)
)} , \ \ \ \textnormal{for} \  p=1, \hdots , K,
\end{align}
where
$\mathbf J= (J^{1,(1)}, \hdots , J^{K, (1)},  J^{1, (2)}, \hdots , J^{K, (2)})$, and
$$g_p(\mathbf J)= \beta{\xi}^{1, p} 
(T^\lambda_p (\bar m)+T^\lambda_{p-1} (\bar m))
+ \sqrt{\beta
T^\alpha _p (\bar p_1)
} J^{p,(1)} + 
\sqrt{\beta
T^\alpha _p (\bar p_2 -\bar p_1)
} J^{p,(2)}.$$
\normalsize
\end{Corollary}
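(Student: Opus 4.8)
The plan is to reproduce, in this richer setting, the argument of Corollary~\ref{cor:SC_SK}: we dispose of two independent representations of the first–order derivatives of $\mathcal A_K^{\RSB}$ — one from Lemma~\ref{lemma:2_HOP} combined with the $1$RSB ansatz and the vanishing of the fluctuations (Remark~\ref{r:latterHOP}), the other obtained by explicitly differentiating the closed form \eqref{1rsbtranseqHOP} — and the self–consistency relations are exactly the identities produced by matching the two, evaluated at $t=\beta$, $\vec x=\bm{0}$. (Equivalently, the matching will exhibit $\mathcal A_K^{\RSB}$ as stationary in the order parameters, so one could instead extremise \eqref{eq:finalHOP} directly; we follow the derivative route as the more direct one.)

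First I would record, from Lemma~\ref{lemma:2_HOP} and the fact that in the thermodynamic $1$RSB limit all fluctuations vanish and $\langle q_{12}^p\rangle_a\to\bar{q}_a^p$, $\langle p_{12}^p\rangle_a\to\bar{p}_a^p$, $\langle m_p^1\rangle\to\bar{m}^p$, the identities $\partial_{x_p^{(1)}}\mathcal A_K^{\RSB}=\tfrac{\lambda_p}{2}[1-(1-\theta)\bar{q}_2^p-\theta\bar{q}_1^p]$, $\partial_{x_p^{(2)}}\mathcal A_K^{\RSB}=\tfrac{\lambda_p}{2}[1-(1-\theta)\bar{q}_2^p]$, $\partial_{y_p^{(1)}}\mathcal A_K^{\RSB}=\tfrac{\alpha_p}{2}[\langle p_{11}^p\rangle-(1-\theta)\bar{p}_2^p-\theta\bar{p}_1^p]$, $\partial_{y_p^{(2)}}\mathcal A_K^{\RSB}=\tfrac{\alpha_p}{2}[\langle p_{11}^p\rangle-(1-\theta)\bar{p}_2^p]$, $\partial_{w_p}\mathcal A_K^{\RSB}=\tfrac{\alpha_p}{2}\langle p_{11}^p\rangle$ and $\partial_{v_p}\mathcal A_K^{\RSB}=\lambda_p\bar{m}^p+\lambda_{p+1}\bar{m}^{p+1}$. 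On the other side, since the source term of \eqref{1rsbtranseqHOP}--\eqref{eq:A1RSBHOP} does not depend on $\vec x$ and the characteristics have unit Jacobian in each coordinate, every $\partial_\bullet\mathcal A_K^{\RSB}$ equals the corresponding derivative of the Cauchy datum \eqref{eq:A01RSBHOP} with respect to the shifted variable $x_{0,p}^{(\cdot)},y_{0,p}^{(\cdot)},w_{0,p},v_{0,p}$, which at $t=\beta,\vec x=\bm{0}$ take the values dictated by the characteristics — e.g. $w_{0,p}=\beta T^{\lambda}_p(1-\bar q_2)$, $y_{0,p}^{(1)}=\beta T^{\lambda}_p(\bar q_1)$, $y_{0,p}^{(2)}=\beta T^{\lambda}_p(\bar q_2-\bar q_1)$, $v_{0,p}=\beta T^{\lambda}_p(\bar m)$, and the $x_{0,p}^{(\cdot)}$ analogously in terms of $T^{\alpha}$.

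The datum \eqref{eq:A01RSBHOP} splits into a Gaussian–integral block $\tfrac{\alpha_p}{2}\big[\log\tfrac{1}{1-w_{0,p}}+\tfrac1\theta\log\tfrac{1-w_{0,p}}{1-w_{0,p}-\theta y_{0,p}^{(2)}}+\tfrac{y_{0,p}^{(1)}}{1-w_{0,p}-\theta y_{0,p}^{(2)}}\big]$ and a $\cosh^\theta$ block $\tfrac1\theta\sum_p\lambda_p\,\mathbb E_1\log\mathbb E_2[2\cosh g_p]^\theta$. Differentiating the first block in $y_p^{(1)},y_p^{(2)},w_p$ gives rational functions of $w_{0,p},y_{0,p}^{(\cdot)}$; matching with the three $\partial_y,\partial_w$ identities first eliminates $\langle p_{11}^p\rangle$ and then yields the two algebraic relations for $\bar{p}_1^p,\bar{p}_2^p$. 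For the second block, differentiating in $x_{0,p}^{(a)}$ brings down a factor $J^{p,(a)}/\sqrt{x_{0,p}^{(a)}}$ inside the reweighted average; a Gaussian integration by parts — on $J^{p,(1)}$ under $\mathbb E_1$ for $a=1$, on $J^{p,(2)}$ under $\mathbb E_2$ for $a=2$ — cancels the singular prefactor and produces the ratios $\mathbb E_1\big\{\mathbb E_2[\cosh^\theta g_p\tanh^2 g_p]/\mathbb E_2[\cosh^\theta g_p]\big\}$ and $\mathbb E_1\big\{\big(\mathbb E_2[\cosh^\theta g_p\tanh g_p]/\mathbb E_2[\cosh^\theta g_p]\big)^2\big\}$, the extra $\theta$ in the exponent being precisely what makes the $a=1$ derivative carry the $\bar{q}_1$-term absent from the $a=2$ one. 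Matching with the two $\partial_x$ identities: the $x^{(2)}$ equation gives \eqref{seq:bHOP}, and feeding this back, the $x^{(1)}$ equation gives \eqref{seq:aHOP}. Finally, differentiating the $\cosh^\theta$ block in $v_p$ feeds into the arguments $g_p$ and $g_{p+1}$ of two adjacent layers; matching with $\partial_{v_p}\mathcal A_K^{\RSB}$, disentangling the layer structure (the identity holds for arbitrary $\lambda$, as in the RS corollary and in Corollary~\ref{cor:SC_SK}) and using the Rademacher/gauge symmetry $\xi^{1,p}\mapsto-\xi^{1,p}$, $J^{p,(a)}\mapsto-J^{p,(a)}$ to absorb the $\xi^{1,p}$ prefactor, yields the $\bar{m}^p$ equation.

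The main obstacle is organisational rather than analytical: carrying out the two–level Gaussian integration by parts so that the inner $\mathbb E_2[\,\cdot\,]^\theta$ contributes the correct $\theta$-dependent reweighting — this is exactly where the $\langle\cdot\rangle_1$ versus $\langle\cdot\rangle_2$ distinction, hence the $\bar{q}_1$ versus $\bar{q}_2$ split, is generated — and keeping straight which shifted coordinate $x_{0,p}^{(\cdot)},y_{0,p}^{(\cdot)},w_{0,p},v_{0,p}$ enters which summand of \eqref{eq:A01RSBHOP}, since $v_{0,p}$ and the $\xi^{1,\cdot}$ couple neighbouring layers and the layer-by-layer disentangling has to be performed with care. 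The $\bar{p}$-equations, by contrast, come out essentially immediately once the explicit Gaussian integral over the slow-noise fields has been written down, exactly as for \eqref{eq:p1} in the RS case.
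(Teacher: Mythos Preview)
Your proposal is correct and follows essentially the same route as the paper: compute the spatial derivatives $\partial_{x_p^{(a)}},\partial_{y_p^{(a)}},\partial_{w_p},\partial_{v_p}$ of $\mathcal A_K^{\RSB}$ in two ways — once from Lemma~\ref{lemma:2_HOP} specialised to the 1RSB limit (your equations for $\partial_\bullet\mathcal A_K^{\RSB}$ coincide with the paper's \eqref{selfx1HOP}--\eqref{selfmHOP}), once by direct differentiation of the explicit solution — and equate. Your write-up is in fact more detailed than the paper's sketch, spelling out the two-level Gaussian integration by parts that generates the $\langle\cdot\rangle_1$ versus $\langle\cdot\rangle_2$ split and the rational manipulation of the Gaussian block that produces the $\bar p_1^p,\bar p_2^p$ equations.
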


\begin{proof}
Here we just sketch the proof. First, let us resume the derivatives (\ref{eq:x1app})-(\ref{eq:x2app}) separately for each $p$ and set them in the 1RSB framework
\begin{align}
\frac{\partial \mathcal A^{\RSB}_K}{\partial x_p^{(1)}} &=\frac{\lambda_p}{2} - \frac{\lambda_p}{2}(1-\theta)\qb^p_2 - \theta \qb^p_1, \label{selfx1HOP} \\
\frac{\partial  \mathcal A^{\RSB}_K}{\partial x_p^{(2)}}  &= \frac{\lambda_p}{2} - \frac{\lambda_p}{2}(1-\theta)\qb^p_2, \label{selfx2HOP} \\
\frac{\partial\mathcal A^{\RSB}_K }{\partial y_p^{(1)}}& = \frac{\partial\mathcal{A}^{\RSB}_K }{\partial w_p} - \frac{\alpha_p}{2}(1-\theta)\bar p_2^p - \theta  \bar p_{1}^p, \label{selfy1HOP} \\
\frac{\partial\mathcal A_K^{\RSB} }{\partial y_p^{(2)}}&  = \frac{\partial \mathcal{A}_K^{\RSB} }{\partial w_p} - \frac{\alpha_p}{2}(1-\theta) \bar p_{2}^p, \label{selfy2HOP} \\
\frac{\partial\mathcal{A}_K^{\RSB}}{\partial v_p}& =  \lambda_{p+1} \bar m_{p+1} + \lambda_p \bar m_p. \label{selfmHOP}
\end{align}

This set of equations is interpreted as a system of five unknowns $(\qb^p_1, \qb^p_2, \bar{p}_1^p, \bar{p}_2^p, \lambda_{p+1} \bar m_{p +1}+ \lambda_p \bar m_p )$ and five equations.
Next, we evaluate the derivatives of $A^{\RSB}_K$ starting from (\ref{eq:fina}), we plug the resulting expressions into (\ref{selfx1HOP})-( \ref{selfmHOP}) and, finally, with some algebra, we get (\ref{seq:aHOP})-(\ref{seq:bHOP}).

\end{proof}

\section{Conclusions}
In this work we considered multi-layer spin-glasses as models for deep machine-learning and showed that a rigorous statistical mechanics investigation is feasible.  

Specifically, in the first part of the paper we focused on a multi-layer Sherrington-Kirkpatrick model made of $K$ layers with $N_p$ binary spins per layer ($p=1,...,K$) interacting pairwise; couplings are allowed only between spins belonging to adjacent layers. This kind of model can be looked at as a restricted DBM with $K-2$ hidden layers (the inner ones) and $2$ visible layers (the outer ones) processing information which, in input, is codified in terms of binary vectors of size $N_1$. The statistical mechanics of this system is addressed by means of rigorous techniques based on Guerra's interpolation and we obtained an explicit expression for the related free-energy under the RS assumption and also allowing for one step of RSB. From the free energy we could also derive self-consistent equations for the two-replica overlap playing as order parameter.

In the second part of the paper, we enriched the architecture by inserting an additional set of $K-1$ layers, each made of $L$ Gaussian spins and allowing for pair-wise interactions only between binary spins and Gaussian spins belonging to adjacent layers. We showed that this kind of model displays a partition function that is equivalent to the one related to a modular HN where intra-modular as well as inter-modular interactions are permitted and these couplings provide a suitable extension of the standard Hebbian rule built over $L \times K$ binary patterns.  Again, the statistical mechanics of the system is addressed by means of Guerra's interpolation techniques and we obtained an explicit expression for the related free-energy under the RS assumption and also allowing for one step of RSB. From the free energy we could also derive self-consistent equations for the two-replica overlap and for the Mattis magnetization playing as order parameters.

\section*{Acknowledgments}\label{acknowledgments}
Sapienza University of Rome (Progetto Ateneo RM120172B8066CB0), Unisalento and INFN are acknowledged for financial support.

\end{document}